\documentclass[11pt]{article}

\usepackage{amsmath} 
\usepackage{amsthm} 
\usepackage{amssymb}	
\usepackage{graphicx} 
\usepackage{multicol} 
\usepackage{color}
\usepackage[dvips,letterpaper,margin=1in,bottom=1in]{geometry}

\usepackage[utf8]{inputenc}
\usepackage[english]{babel}

\newtheorem{theorem}{Theorem}[section]

\newtheorem{lemma}[theorem]{Lemma}

\newtheorem{proposition}[theorem]{Proposition}
\newtheorem{definition}{Definition}[section]
\newtheorem{example}{Example}

\newtheorem{problem}{Problem}

\newcommand{\abs}[1]{\left| #1 \right|} 
\newcommand{\ket}[1]{\left| #1 \right>} 
\newcommand{\bra}[1]{\left< #1 \right|} 

\usepackage{latexsym}
\usepackage{CJK}

\usepackage{qcircuit}

\usepackage{enumerate}
\usepackage{ulem}

\usepackage{algorithm}
\usepackage{algpseudocode}

\usepackage{stmaryrd}
\usepackage{booktabs}

\usepackage{hyperref}
\newcommand{\footremember}[2]{%
    \footnote{#2}
    \newcounter{#1}
    \setcounter{#1}{\value{footnote}}%
}

\begin{document}

    \title{Equivalence Checking of Quantum Finite-State Machines}
        \author{
            Qisheng Wang \footremember{one}{Qisheng Wang is with the Department of Computer Science and Technology, Tsinghua University, China (e-mail: \url{QishengWang1994@gmail.com}).}
            \and Junyi Liu \footremember{two}{Junyi Liu is with the State Key Laboratory of Computer Science, Institute of Software, Chinese Academy of Sciences, China, and also with the University of Chinese Academy of Sciences, China (e-mail: \url{liujy@ios.ac.cn}).}
            \and Mingsheng Ying \footremember{three}{Mingsheng Ying is with Centre for Quantum Software and Information, University of Technology Sydney, Australia, the State Key Laboratory of Computer Science, Institute of Software, Chinese Academy of Sciences, China, and also with the Department of Computer Science and Technology, Tsinghua University, China (e-mail: \url{yingms@ios.ac.cn}).}
        }
        \date{}
        \maketitle

    \begin{abstract}
    In this paper, we introduce the model of quantum Mealy machines and study the equivalence checking and minimisation problems of them. Two efficient algorithms are developed for checking equivalence of two states in the same machine and for checking equivalence of two machines. As an application, they are used in equivalence checking of quantum circuits. Moreover, the minimisation problem is proved to be in \textbf{PSPACE}.
    \end{abstract}

    \textbf{Keywords: quantum computing, quantum circuits, Mealy machines, equivalence checking, minimisation.}

    \newpage

    \tableofcontents
    \newpage

\section{Introduction}

 A large variety of real-world testing, analysis and verification problems for computer and communication hardware and software can be reduced to equivalence checking of Mealy machines \cite{Jha}, \cite{Lee94}, \cite{Lee96}. The same problem has emerged in the quantum realm with the rapid progress of quantum information technology in recent years; for example, equivalence checking of quantum circuits \cite{Via07}, \cite{Markov} and quantum communication protocols \cite{Eb13}, \cite{Eb14}, property testing  \cite{Sei12}, fault detection and diagnosis \cite{Ban10}, \cite{Ber18}, \cite{Bia10}, \cite{Pal12}, reachability analysis \cite{Hung04} and test generation \cite{Per05} of quantum circuits. But up to now, they are investigated separately in ad hoc manners without a unified model.

The \textit{overall aim} of this paper is to introduce a quantum generalisation of Mealy machines with the hope that our results can provide a formal model and some useful theoretical tools for solving these problems.
As determined by the basic postulates of quantum mechanics, the state space of a quantum Mealy machine is a (finite-dimensional) Hilbert space, its dynamics is modelled by unitary operators, and its outputs come as the outcomes of certain quantum measurements.

This paper studies two central problems, namely equivalence checking and minimisation, of quantum Mealy machines. As in classical Mealy machines, equivalence checking is carried out by inputting a sequence into the checked machines and then observing their respective outputs. A major difference between the classical and quantum cases is caused by the fact that quantum measurements can change the states of the observed systems. Consequently, a notion of scheduler must be introduced in the quantum case to specify the locations where quantum measurements are designed to perform.

{\vskip 4pt}

\textbf{Main Technical Contributions} include:
 \begin{itemize}\item We develop two algorithms for equivalence checking of complexity $O(mn^6)$, where $m$ is the number of input and output symbols and $n$ the dimension of the state Hilbert spaces of the checked machines.
\item The minimisation problem is proved to be in \textbf{PSPACE}.
 \end{itemize}

As an application, our algorithms are used for checking equivalence of quantum circuits in 30 benchmarks.

Quantum generalisations of various automata have been extensively studied in the literature; see for example \cite{Kon97}, \cite{Moo00}.
The problems of equivalence checking and minimisations for quantum automata rather that quantum Mealy machines defined in this paper have already been considered in a series of papers  \cite{Kos01}, \cite{Li06}, \cite{Li062}, \cite{Li08}, \cite{Li12}, \cite{Qiu11}, \cite{Tze92}, \cite{Wan18}. The techniques developed in this paper can be used to improve some of their complexity results.

 {\vskip 4pt}

\textbf{Organisation of the Paper}: The notion of quantum Mealy machine is defined and its behaviour is described in Sec. \ref{defin}. Our main results including two algorithms are given in Sec. \ref{main}. The improvements over the complexity results for other quantum automata with our new techniques are also briefly discussed there. The case studies for equivalence checking of benchmark quantum circuits are described in Sec. \ref{test}. The proofs of our main theorems are presented in Sec. \ref{proofs}.
For readability, the proofs of other results are deferred into the Appendices. A short conclusion is drawn in Sec. \ref{conclusion}.

    \section{Basic Definitions}\label{defin}

Let us first very briefly review several basic notions in quantum mechanics. The state space of a quantum system is a Hilbert space. For an integer $n\geq 1$, an $n$-dimensional Hilbert space $\mathcal{H}$ is essentially the space $\mathbb{C}^n$ of $n$-dimensional vectors of complex numbers with the ordinary inner product. Using Dirac's notation, a vector in $\mathcal{H}$ is denoted $|\psi\rangle$, and the inner product of $|\psi\rangle$ and $|\varphi\rangle$ is written $\langle\psi|\varphi\rangle$. A pure state of the quantum system is then described by a vector $|\psi\rangle$ of length $$\|\psi\|=\sqrt{\langle\psi|\psi\rangle}=1.$$ For example, a qubit lives in $\mathbb{C}^2$ and it can be in a basis state $$|0\rangle=\begin{bmatrix} 1 \\ 0 \end{bmatrix}\ {\rm or}\ |1\rangle=\begin{bmatrix} 0 \\ 1 \end{bmatrix},$$ or a superposition of them like $$|\pm\rangle=\frac{1}{\sqrt{2}}(|0\rangle\pm|1\rangle)=\frac{1}{\sqrt{2}}\begin{bmatrix} 1 \\ \pm 1 \end{bmatrix}.$$
An operator in $\mathcal{H}$ is represented by an $n\times n$ matrix $A=\left[A_{ij}\right]$. The trace of $A$ is defined as $$\operatorname{tr}(A)=\sum_iA_{ii}.$$ Then a mixed state of the quantum system is expressed by a density operator, i.e. a positive semidefinite matrix $\rho$ with $\operatorname{tr}(\rho)=1$. Furthermore, an action on the system causes a certain evolution:
$$|\psi\rangle\rightarrow U|\psi\rangle\ {\rm (pure\ state)\ or}\ \rho\rightarrow U\rho U^\dag\ {\rm (mixed\ state)}$$
 modelled by a unitary operator, i.e. a matrix $U$ with $U^\dag U=I$, where $U^\dag$ stands for the complex conjugate transpose of $U$, and $I$ the unit matrix. For example, Hadamard gate $$H=\frac{1}{\sqrt{2}}\begin{bmatrix} 1 & 1\\ 1 &-1 \end{bmatrix}$$  transforms $|0\rangle$ to $|+\rangle$ and $|1\rangle$ to $|-\rangle$. A quantum measurement is used to readout the outcomes in quantum computing. Mathematically, it is described by a set of operators $M=\left\{M_m\right\}$ with the normalisation condition $$\sum_m M_m^\dag M_m=I.$$ If we perform it on quantum system in pure state $|\psi\rangle$, then outcome $m$ is obtained with probability $$p_m=\|M_m|\psi\rangle\|^2,$$ and after that the system is in state $$M_m|\psi\rangle/\sqrt{p_m};$$ and if we perform it on mixed state $\rho$, then outcome $m$ is obtained with probability $$p_m = \operatorname{tr}(M_m \rho M_m^\dag)$$ and the system collapses to $$M_m \rho M_m^\dag /p_m.$$ For example, if we measure qubit $|+\rangle$ in the computation basis, i.e. the measurement is $$M = \left\{M_0 =
|0\rangle\langle 0|, M_1 = |1\rangle\langle 1|\right\},$$ then outcomes $0$ and $1$ are observed with equal probability $\frac{1}{2}$, and after that the qubit is in state $|0\rangle$ or $|1\rangle$, respectively.

The quantum generalisations of various computational models (e.g. finite-state automata, pushdown automata and Turing machines) have been defined in the literature by incorporating the above quantum mechanical ideas into these models. Similarly, combining these ideas with the classical Mealy machine model \cite{Mea55} yields straightforwardly:

    \begin{definition}[Quantum Mealy Machine]
    A quantum Mealy machine (QMM for short) is a $5$-tuple
    $\mathcal{M} = (\Sigma, \Gamma, \mathcal{H}, U, M),$
    where:
    \begin{enumerate}
      \item[-] $\Sigma$ is a finite input alphabet;
      \item[-] $\Gamma$ is a finite output alphabet;
      \item[-] $\mathcal{H}$ is a finite-dimensional Hilbert space;
      \item[-] $U = \{ U_\sigma: \sigma \in \Sigma \}$ is a set of unitary operators. For each $\sigma \in \Sigma$, $U_\sigma$ is a unitary operator on $\mathcal{H}$; and
      \item[-] $M = \{ M_m : m \in \Gamma \}$ is a quantum measurement in $\mathcal{H}$, that is, $M_m$ is a linear operator on $\mathcal{H}$ for each $m \in \Gamma$ and $\sum_m M_m^\dag M_m = I$.
    \end{enumerate}
    \end{definition}

Similar to the case of other computational models and their quantum counterparts, there is a major difference between classical and quantum Mealy machines. As is well-known, in order to extract information about a quantum system, we have to perform a measurement on it. On the other hand, a measurement can change the state of the system. So, the dynamic behaviour of the system depends heavily on the time points where the measurement is performed. This motivates us to introduce the notion of (measurement) scheduler.
 For a finite string (word) $a \in \Sigma^*$ on an alphabet $\Sigma$, let $\abs{a}$ stand for the length of $a$, $a[i]$ be the $i$-th symbol of $a$ ($1$-indexed), and $a[l:r]$ denote the substring $a[l]a[l+1]\dots a[r]$ of $a$. Especially, in the case of $l > r$, $a[l:r]$ is the empty string $\epsilon$.

    \begin{definition}[Scheduler]
        Let $a \in \Sigma^*$ be an input word. A scheduler for $a$ is a non-decreasing sequence $\mathcal{S} = \{ s_i \}$ with $0 \leq s_1 \leq s_2 \leq \dots \leq s_{\abs{\mathcal{S}}} \leq \abs{a}$.
         The set of schedulers for $a$ is denoted $\mathfrak{S}_a$. Moreover, the set of all schedulers is denoted $$\mathfrak{S} = \bigcup_{a \in \Sigma^*} \mathfrak{S}_a.$$
    \end{definition}

    Intuitively, each $s_i$ represents a location where a measurement is scheduled to perform. If $s_{\abs{\mathcal{S}}} = \abs{a}$, that is, a measurement is performed at the end of $a$, then $\mathcal{S}$ is called \textit{closed}.


    Let us see how a QMM $\mathcal{M}$ runs. For any word $a\in\Sigma^*$, we write $$U_a = U_{a[\abs{a}]} \dots U_{a[2]} U_{a[1]}$$ (the composition of unitary transformations, or equivalently the multiplication of unitary matrices); in particular, $U_\epsilon = I$ for the empty word. For a Hilbert space $\mathcal{H}$, let $\mathcal{D}(\mathcal{H})$ be the set of density operators on $\mathcal{H}$.
 Suppose that the initial state is $\rho\in \mathcal{D}(\mathcal{H})$, the input word is $a \in \Sigma^*$ and $\mathcal{S} = \{ s_i \}$ is a scheduler for $a$. The scheduler $\mathcal{S}$ splits $a$ into $(\abs{\mathcal{S}}+1)$ parts: $a_i = a[s_{i-1}+1:s_i]$ for $1 \leq i \leq \abs{\mathcal{S}}+1$, where $s_0 = 0$ and $s_{\abs{\mathcal{S}}+1} = \abs{a}$. Machine $\mathcal{M}$ performs measurement $M$ exactly $\abs{\mathcal{S}}$ times according to $\mathcal{S}$: starting from $\rho$, for each $1 \leq i \leq \abs{\mathcal{S}}$, $\mathcal{M}$ first applies unitary $U_{a_i}$ on the system, then performs measurement $M$ and produces an outcome $b_i \in \Gamma$. Thus, an output word $b= b_1 b_2 \dots b_{\abs{\mathcal{S}}}\in \Gamma^*$ is produced with probability:    \[
        \Pr\nolimits_{\rho}^{\mathcal{M}} (b|a, \mathcal{S}) = \operatorname{tr} \left( \rho_{b|a, \mathcal{S}}^{\mathcal{M}} \right),
    \]
    where $$\rho_{b|a, \mathcal{S}}^{\mathcal{M}} = V_{b|a, \mathcal{S}} \rho V_{b|a, \mathcal{S}}^\dag\ {\rm and}\ V_{b|a, \mathcal{S}} = U_{a_{\abs{\mathcal{S}}+1}} M_{b_{\abs{\mathcal{S}}}} U_{a_{\abs{\mathcal{S}}}} \dots M_{b_1} U_{a_1}.$$ The final state of $\mathcal{M}$ is    \[
        \rho' = \frac {\rho_{b|a, \mathcal{S}}^{\mathcal{M}}} {\Pr_{\rho}^{\mathcal{M}} (b|a, \mathcal{S})}.
    \]

    Now we are ready to define the central notion of this paper: equivalence of two states in a quantum Mealy machine.

    \begin{definition}[Equivalence of States]
        Given a QMM $\mathcal{M}$ and two states $\rho_s$ and $\rho_t$.
        \begin{enumerate}
          \item $\rho_s$ and $\rho_t$ are equivalent, denoted $\rho_s \sim \rho_t$, if for every input $a \in \Sigma^*$ and scheduler $\mathcal{S}$ and output $b \in \Gamma^{\abs{\mathcal{S}}}$,
              \begin{equation} \label{eq1}
                \Pr\nolimits_{\rho_s}^{\mathcal{M}} (b|a, \mathcal{S}) = \Pr\nolimits_{\rho_t}^{\mathcal{M}} (b|a, \mathcal{S}).
              \end{equation}
          \item $\rho_s$ and $\rho_t$ are equivalent up to $k$ measurements, denoted $\rho_s \sim_k \rho_t$, if Eq. (\ref{eq1}) holds for all schedulers $\mathcal{S}$ with $\abs{\mathcal{S}} \leq k$.
          \item $\rho_s$ and $\rho_t$ are $m$-equivalent, denoted $\rho_s \sim^m \rho_t$, if Eq. (\ref{eq1}) holds for all inputs $a \in \Sigma^*$ and schedulers $\mathcal{S}$ with $\abs{a}+\abs{\mathcal{S}} \leq m$.

          \item $\rho_s$ and $\rho_t$ are $m$-equivalent up to $k$ measurements, denoted $\rho_s \sim_{k}^m \rho_t$, if Eq. (\ref{eq1}) holds for all inputs $a \in \Sigma^*$ and schedulers $\mathcal{S}$ with $\abs{\mathcal{S}} \leq k$ and $\abs{a}+\abs{\mathcal{S}} \leq m$.
        \end{enumerate}
    \end{definition}

    An input word $a$ together with a scheduler $\mathcal{S}$ for it is called an experiment, and $|a|+|\mathcal{S}|$ is called the size of the experiment.
    The notion of equivalence in the above definition was introduced only for two states in the same quantum Mealy machines. But it can be simply generalised to compare two states in different machines.

    \begin{definition}[Equivalence of Machines] Given two QMMs $\mathcal{M}_1$ and $\mathcal{M}_2$ with the same input alphabet $\Sigma$ and output alphabet $\Gamma$, and their initial states $\rho_1, \rho_2$.
        \begin{enumerate}
          \item $(\mathcal{M}_1, \rho_1)$ and $(\mathcal{M}_2, \rho_2)$ are equivalent, denoted $(\mathcal{M}_1, \rho_1) \sim (\mathcal{M}_2, \rho_2)$, if for every $a \in \Sigma^*$ and scheduler $\mathcal{S}$ and output $b \in \Gamma^{\abs{\mathcal{S}}}$,
              \begin{equation} \label{eq2}
                \Pr\nolimits_{\rho_1}^{\mathcal{M}_1} (b|a, \mathcal{S}) = \Pr\nolimits_{\rho_2}^{\mathcal{M}_2} (b|a, \mathcal{S}).
              \end{equation}
          \item $(\mathcal{M}_1, \rho_1)$ and $(\mathcal{M}_2, \rho_2)$ are equivalent up to $k$ measurements, denoted $(\mathcal{M}_1, \rho_1) \sim_k (\mathcal{M}_2, \rho_2)$, if Eq. (\ref{eq2}) holds for all schedulers $\mathcal{S}$ with $\abs{\mathcal{S}} \leq k$.
          \item $(\mathcal{M}_1, \rho_1)$ and $(\mathcal{M}_2, \rho_2)$ are $m$-equivalent, denoted $(\mathcal{M}_1, \rho_1) \sim^m (\mathcal{M}_2, \rho_2)$, if Eq. (\ref{eq2}) holds for all inputs $a \in \Sigma^*$ and schedulers $\mathcal{S}$ with $\abs{a}+\abs{\mathcal{S}} \leq m$.
          \item $(\mathcal{M}_1, \rho_1)$ and $(\mathcal{M}_2, \rho_2)$ are $m$-equivalent up to $k$ measurements, denoted $(\mathcal{M}_1, \rho_1) \sim_k^m (\mathcal{M}_2, \rho_2)$, if Eq. (\ref{eq2}) holds for all inputs $a \in \Sigma^*$ and schedulers $\mathcal{S}$ with $\abs{\mathcal{S}} \leq k$ and $\abs{a}+\abs{\mathcal{S}} \leq m$.
        \end{enumerate}
    \end{definition}

    We present two examples to illustrate how the notions defined above can be used to model quantum circuits and their equivalence.

    \begin{example} [Quantum Circuits under Resource Constraints] \label{example1}
        In real world, there are usually certain restrictions on the gates in a quantum circuit.
        Let us consider a quantum circuit with two qubits $x_1$ and $x_2$. Suppose only two kinds of quantum gates are available, which are the Hadamard gate on the first qubit, denoted $U_{H_1} = H[x_1]$, and the CNOT gate with $x_1$ as its control qubit, denoted  $U_C = \mathit{CNOT}[x_1, x_2]$.
        Also suppose we can only measure the first qubit in the computational basis. The measurement can be described as $M=\{M_0,M_1\}$ with $$M_0 = \ket{00}\bra{00}+\ket{01}\bra{01},\qquad  M_1 = \ket{10}\bra{10}+\ket{11}\bra{11}.$$ This kind of quantum circuits can be modelled by a QMM
        \[
        \mathcal{M} = ( \Sigma, \Gamma, \mathcal{H}_2^{\otimes 2}, U, M ),
        \]
   where $\Sigma = \{ C, {H_1} \}$, $\Gamma = \{ 0, 1 \}$ and $U = \{ U_C, U_{H_1} \}$.
        Consider two states $\ket{00}$ and $\ket{01}$, and our question is: can we distinguish them using such a quantum circuit? The answer is \textquotedblleft\textit{no}\textquotedblright\ because $\ket{00} \sim \ket{01}$ in $\mathcal{M}$.

        Now we loosen the restriction and allow the Hadamard gate to act on the second qubit, denoted $U_{H_2} = H[x_2]$. This kind of quantum circuits can be described by a QMM
        \[
        \mathcal{M}' = ( \Sigma^\prime, \Gamma, \mathcal{H}_2^{\otimes 2}, U^\prime, M ),
        \]
where $\Sigma^\prime = \{ C, H_1, H_2 \}$ and $U^\prime = \{ U_C, U_{H_1}, U_{H_2} \}.$ It can be verified directly that $\ket{00} \sim^{3} \ket{01}$ in $\mathcal{M}'$. However, $\ket{00} \not \sim \ket{01}$ in $\mathcal{M}'$ because
        \[
        \Pr\nolimits_{\ket{00}\bra{00}}^{\mathcal{M}'}(0|H_1 H_2 C H_1, \{ 4 \}) = 1,\qquad
        \Pr\nolimits_{\ket{01}\bra{01}}^{\mathcal{M}'}(0|H_1 H_2 C H_1, \{ 4 \}) = 0.
        \]
        This means that states $\ket{00}$ and $\ket{01}$ can be distinguished in $\mathcal{M}^\prime$ by experiments of size 5 but not by those of size only 3.
        A quantum circuit distinguishing $\ket{00}$ and $\ket{01}$ is given in Fig. \ref{fig1}.
        \begin{figure}\centering
        \[
        \Qcircuit @C=1em @R=.7em {
            & x_1 & & \qw & \gate{H} & \ctrl{1} & \gate{H} & \meter  \\
            & x_2 & & \qw & \gate{H} & \targ & \qw & \qw  \\
        }
        \]
        \caption{A quantum circuit that distinguishes $\ket{00}$ and $\ket{01}$ using gates $H[x_1]$, $H[x_2]$, $\mathit{CNOT}[x_1, x_2]$ and measurement $M[x_1]$.}
        \label{fig1}
        \end{figure}
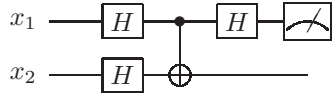\end{example}

It is well-known \cite{Paz71} that two states in an $n$-dimensional probabilistic Mealy machine are equivalent, if and only if they are $(n-1)$-equivalent. The above example shows an interesting difference between quantum and probabilistic Mealy machines: in the $4$-dimensional QMM $\mathcal{M}'$, $\ket{00} \sim^3 \ket{01}$ but $\ket{00} \not \sim \ket{01}$; more precisely, $\ket{00} \not \sim_1 \ket{01}$.

    \begin{example} [Quantum Circuits with Multi-Measurements] \label{example2}
        Let's consider again a quantum circuit with two qubits $x_1$ and $x_2$.
        But we suppose the two available quantum gates are Hadamard gate on the first qubit $U_{H} = H[x_1]$ and the swap gate on $x_1$ and $x_2$:
        \[
        U_S = \begin{bmatrix}
            1 & 0 & 0 & 0 \\
            0 & 0 & 1 & 0 \\
            0 & 1 & 0 & 0 \\
            0 & 0 & 0 & 1
        \end{bmatrix},
        \]
        i.e. $U_S \ket {i, j} = \ket{j, i}$ for $i, j \in \{0, 1\}$.
        Moreover, we can only measure the first qubit in the computational basis. This kind of circuits can be described by a QMM
        \[
        \mathcal{M} = (\Sigma, \Gamma, \mathcal{H}_2^{\otimes 2}, U, M),
        \] where $\Sigma = \{ H, S \}$, $U = \{ U_H, U_S \}$ and others are the same as in Example \ref{example1}.
        Now we consider two (entangled) Bell states $\ket{\beta_{00}}$ and $\ket{\beta_{10}}$, where
        \[
            \beta_{xy} = \frac 1 {\sqrt2} (\ket{0y}+(-1)^x\ket{1\bar y})
        \]
         and $\bar y$ is the negation of $y$.
        It is easy to verify that $\ket{\beta_{00}} \sim_1 \ket{\beta_{10}}$, which means that we cannot distinguish $\ket{\beta_{00}}$ and $\ket{\beta_{10}}$ using only one measurement. However, $\ket{\beta_{00}} \not \sim_2 \ket{\beta_{10}}$. Indeed, they are distinguished by input word $a = HSH$ and scheduler $\mathcal{S} = \{ 1, 3 \}$; the corresponding quantum circuit is given in Fig. \ref{fig2}.
        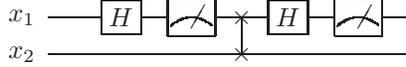
\begin{figure}\centering
        \[
        \Qcircuit @C=1em @R=.7em {
            & x_1 & & \qw & \gate{H} & \meter & \qswap & \gate{H} & \meter & \qw \\
            & x_2 & & \qw & \qw & \qw & \qswap \qwx & \qw & \qw & \qw \\
        }
        \]
        \caption{A quantum circuit that distinguishes $\ket{\beta_{00}}$ and $\ket{\beta_{10}}$ with gates $H[x_1]$, $S[x_1, x_2]$ and measurement $M[x_1]$.}
        \label{fig2}
        \end{figure}
    \end{example}

    \section{Main Results}\label{main}

    In this section, we present the main results of this paper; for readability, some of their proofs are postponed to Sec. \ref{proofs}, and some are further deferred into the Appendices.

     \subsection{Checking equivalence of states}

     First, we consider equivalence checking of two states in the same QMM. The following theorem establishes an upper bound for the size of experiments required for equivalence checking in terms of the dimension of the state Hilbert space.

    \begin{theorem} \label{thm-state-eq}
        Given a QMM $\mathcal{M}$ with state Hilbert space $\mathcal{H}$ and two states $\rho_s$ and $\rho_t$. Let $n = \dim \mathcal{H}$. Then:
        \begin{enumerate}
          \item $\rho_s \sim\rho_t \Longleftrightarrow \rho_s \sim^{n^2-1} \rho_t \Longleftrightarrow \rho_s \sim_{n^2-1} \rho_t$.
          \item For every $k \in \mathbb{N}$, $\rho_s \sim_{k} \rho_t \Longleftrightarrow \rho_s \sim_{k}^{n^2-1} \rho_t$. \end{enumerate}
    \end{theorem}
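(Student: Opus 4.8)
The starting point is the observation that, writing $\Delta = \rho_s - \rho_t$, for any input $a$, scheduler $\mathcal{S}$ and output $b \in \Gamma^{\abs{\mathcal{S}}}$,
$$\Pr\nolimits_{\rho_s}^{\mathcal{M}}(b|a,\mathcal{S}) - \Pr\nolimits_{\rho_t}^{\mathcal{M}}(b|a,\mathcal{S}) = \operatorname{tr}\!\left( V_{b|a,\mathcal{S}}\, \Delta\, V_{b|a,\mathcal{S}}^\dag \right).$$
Thus each of the four relations is a statement about which families of experiments force this trace to vanish, and the implications $\rho_s \sim \rho_t \Rightarrow \rho_s \sim^{n^2-1}\rho_t$, $\rho_s \sim \rho_t \Rightarrow \rho_s \sim_{n^2-1}\rho_t$ and $\rho_s \sim_k \rho_t \Rightarrow \rho_s \sim_k^{n^2-1}\rho_t$ are immediate, since they merely shrink the class of experiments. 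So the whole theorem reduces to one quantitative claim: \emph{if $\rho_s,\rho_t$ are distinguished by some experiment (respectively, by some experiment using at most $k$ measurements), then they are distinguished by one of size at most $n^2-1$.} I would prove this by taking a distinguishing experiment of minimum size and showing that size must be $\leq n^2-1$.

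The main tool is an excision argument on the running state. Regard the experiment as a sequence of $M = \abs{a}+\abs{\mathcal{S}}$ elementary operations $O_1,\dots,O_M$ (one unitary $U_{a[i]}$ per input symbol, one measurement operator $M_{b_j}$ per scheduled measurement, in the order they act), and set $\Delta_0 = \Delta$ and $\Delta_i = O_i \Delta_{i-1} O_i^\dag$, so that the distinguishing value is $\operatorname{tr}(\Delta_M)\neq 0$. Each $\Delta_i$ lives in the space of $n\times n$ matrices, of dimension $n^2$. If $M \geq n^2$, then the $M+1$ operators $\Delta_0,\dots,\Delta_M$ are linearly dependent; let $i^\ast$ be least with $\Delta_{i^\ast} = \sum_{i<i^\ast} c_i \Delta_i$. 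Applying the fixed linear map $X \mapsto (O_M\cdots O_{i^\ast+1})\,X\,(O_M\cdots O_{i^\ast+1})^\dag$ and taking traces gives $\operatorname{tr}(\Delta_M) = \sum_{i<i^\ast} c_i\, \operatorname{tr}\!\big(\Delta_M^{(i)}\big)$, where $\Delta_M^{(i)}$ is the final operator of the experiment obtained by deleting the contiguous block $O_{i+1},\dots,O_{i^\ast}$. Since the left-hand side is nonzero, some $\operatorname{tr}(\Delta_M^{(i)})\neq 0$, yielding a strictly shorter distinguishing experiment and contradicting minimality. Hence the minimal distinguishing experiment has size $\leq n^2-1$.

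Assembling the pieces is then routine. For part~1, $\rho_s\not\sim\rho_t$ yields a distinguishing experiment of size $\leq n^2-1$, so $\rho_s \not\sim^{n^2-1}\rho_t$; moreover such an experiment has $\abs{\mathcal{S}} \leq \abs{a}+\abs{\mathcal{S}} \leq n^2-1$ measurements, so also $\rho_s \not\sim_{n^2-1}\rho_t$. Together with the trivial directions this gives the chain of equivalences. For part~2 the crucial point is that deleting a block of operations can only remove measurements, never add them: the shortened experiment uses at most $\abs{\mathcal{S}} \leq k$ measurements. Therefore a minimal distinguishing experiment \emph{among those with at most $k$ measurements} again has size $\leq n^2-1$, which is exactly $\rho_s \not\sim_k^{n^2-1}\rho_t$, and this is what makes the size bound $n^2-1$ uniform in $k$.

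The step I expect to be most delicate is verifying that each excised experiment is genuinely admissible — that deleting $O_{i+1},\dots,O_{i^\ast}$ really does correspond to a legitimate triple $(a',\mathcal{S}',b')$ (a shorter input word, a valid scheduler, and a consistent output), that it is strictly shorter (here one uses $i^\ast \geq 1$, which holds because $\Delta_0 = \Delta \neq 0$ when $\rho_s\neq\rho_t$), and above all that it respects the measurement budget $\leq k$. An alternative route is to phrase everything through the increasing chain of subspaces spanned by the operators $V_{b|a,\mathcal{S}}^\dag V_{b|a,\mathcal{S}}$ and invoke a stabilisation lemma; I would avoid this, however, because the two filtrations — by total size and by number of measurements — are coupled, and forcing a single dimension bound of $n^2-1$ to govern both simultaneously is awkward, whereas the excision argument dispatches both at once.
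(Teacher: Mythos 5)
Your proof is correct, but it takes a genuinely different route from the paper's. The paper works with the spans of the sets $\mathfrak{D}_l(\rho,m)$ of all operators $\rho^{\mathcal{M}}_{b|a,\mathcal{S}}$ reachable by experiments of size at most $m$ using at most $l$ measurements, proves a stabilisation lemma (Lemma \ref{prop-span-k}: once $\operatorname{span}\mathfrak{D}_l(\rho,m)=\operatorname{span}\mathfrak{D}_l(\rho,m+1)$ for all $l\le k$, the spans never grow again), and then bounds the stabilisation point by $n^2-1$ via a nested induction on $k$ showing $\dim\operatorname{span}\mathfrak{D}_k(\rho,m_k)\ge m_k+1$ (Lemma \ref{col-span-k}) --- precisely the coupled-filtration bookkeeping you chose to avoid. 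Your excision argument instead examines only the trajectory $\Delta_0,\dots,\Delta_M$ of a single minimal distinguishing experiment, uses linear dependence of $n^2+1$ operators in the $n^2$-dimensional space of Hermitian matrices to cut out a contiguous block of operations, and notes that cutting can only shrink both the size and the number of measurements, so one argument serves Parts 1 and 2 simultaneously. The delicate points you flagged all check out: any sequence of elementary operations is an admissible triple $(a',\mathcal{S}',b')$ because schedulers may repeat positions and may measure at position $0$; $i^\ast\ge 1$ since $\Delta_0=\rho_s-\rho_t\neq 0$ in the nontrivial case; and deletion preserves $\abs{\mathcal{S}'}\le k$. What your approach buys is a shorter, more local proof in which the measurement budget is handled for free. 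What the paper's span formulation buys is the spanning-set machinery ($\mathfrak{B}$, $\mathfrak{B}_l$) that is reused directly in Algorithms \ref{algo1} and \ref{algo2} and in the proofs of Theorems \ref{thm-iqmm-eq} and \ref{thm-decision}, so the filtration is not wasted effort in the larger development.
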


    If $\mathcal{M}$ is \textit{real}; that is, all of its unitary matrices $U_\sigma$ and measurement matrices $M_m$ consist of real entries, then the experiment size $n^2-1$ can be improved to $\frac 1 2 n(n+1)-1$.

    An algorithm for checking equivalence of states in a QMM can be directly derived from Theorem \ref{thm-state-eq} by enumerating all possible inputs $a$ and schedulers $\mathcal{S}$ with $\abs{a}+\abs{\mathcal{S}} \leq n^2-1$, but its complexity is $(\abs{\Sigma}+\abs{\Gamma})^{O(n^2)},$ exponential in $n$, the dimension of the state Hilbert space. We are able to develop a much more efficient algorithm with a time complexity polynomial in $n$. Let $\Sigma = \{ \sigma_1, \sigma_2, \dots, \sigma_{\abs{\Sigma}} \}$ and $\Gamma = \{ \gamma_1, \gamma_2, \dots, \gamma_{\abs{\Gamma}} \}$.
    The algorithms for the cases without and with a bound on the number of measurements are described in Algorithm \ref{algo1} and Algorithm \ref{algo2}, respectively. Their complexities are given in the next theorem.

    \begin{algorithm}
        \caption{A polynomial algorithm for checking whether $\rho_s \sim \rho_t$ in $\mathcal{M}$.}
        \label{algo1}
        \begin{algorithmic}[1]
        \Require QMM $\mathcal{M} = \{ \Sigma, \Gamma, \mathcal{H}, U, M \}$ and two density operators $\rho_s$ and $\rho_t$.
        \Ensure Whether $\rho_s \sim \rho_t$ or not.

        \State $\rho \gets \rho_s - \rho_t$.
        \State $\mathfrak{B} \gets \emptyset$.
        \State Let $Q$ be an empty queue and push $(\epsilon, \emptyset, \epsilon)$ into $Q$.
        \While {$Q$ is not empty}
            \State Pop the front element $(a, \mathcal{S}, b)$ of $Q$.
            \If {$\rho_{b|(a, \mathcal{S})} \notin \operatorname{span} \mathfrak{B}$}
                \State Add $\rho_{b|(a, \mathcal{S})}$ into $\mathfrak{B}$.
                \State Push $(a\sigma_i, \mathcal{S}, b)$ into $Q$ for $1 \leq i \leq \abs{\Sigma}$ in turn.
                \State Push $(a, \mathcal{S} + \{ \abs{a} \}, b\gamma_i)$ into $Q$ for $1 \leq i \leq \abs{\Gamma}$ in turn.
            \EndIf
        \EndWhile

        \If {$\operatorname{tr}(\varrho) = 0$ for every $\varrho \in \mathfrak{B}$}
            \State \Return \textbf{true}.
        \Else
            \State Find an arbitrary $\rho_{b|(a, \mathcal{S})} \in \mathfrak{B}$ such that $\operatorname{tr}\left(\rho_{b|(a, \mathcal{S})}\right) \neq 0$.
            \State \Return \textbf{false} with witness $(a, \mathcal{S}, b)$.
        \EndIf

        \end{algorithmic}
    \end{algorithm}

    \begin{algorithm}
        \caption{A polynomial algorithm for checking whether $\rho_s \sim_k \rho_t$ in $\mathcal{M}$.}
        \label{algo2}
        \begin{algorithmic}[1]
        \Require QMM $\mathcal{M} = \{ \Sigma, \Gamma, \mathcal{H}, U, M \}$, integer $k$ and two density operators $\rho_s$ and $\rho_t$.
        \Ensure Whether $\rho_s \sim_k \rho_t$ or not.

        \State $\rho \gets \rho_s - \rho_t$.
        \State $\mathfrak{B}_i \gets \emptyset$ for $0 \leq i \leq k$.
        \State Let $Q$ be an empty queue and push $(\epsilon, \emptyset, \epsilon)$ into $Q$.
        \While {$Q$ is not empty}
            \State Pop the front element $(a, \mathcal{S}, b)$ of $Q$.
            \If {$\rho_{b|(a, \mathcal{S})} \notin \operatorname{span} \mathfrak{B}_{\abs{\mathcal{S}}}$}
                \State Find the largest $\abs{\mathcal{S}} \leq j \leq k$ such that $\rho_{b|(a, \mathcal{S})} \notin \operatorname{span} \mathfrak{B}_j$.
                \State Add $\rho_{b|(a, \mathcal{S})}$ into $\mathfrak{B}_l$ for $\abs{\mathcal{S}} \leq l \leq j$.
                \State Push $(a\sigma_i, \mathcal{S}, b)$ into $Q$ for $1 \leq i \leq \abs{\Sigma}$ in turn.
                \If {$\abs{\mathcal{S}} < k$}
                    \State Push $(a, \mathcal{S} + \{ \abs{a} \}, b\gamma_i)$ into $Q$ for $1 \leq i \leq \abs{\Gamma}$ in turn.
                \EndIf
            \EndIf
        \EndWhile

        \If {$\operatorname{tr}(\varrho) = 0$ for every $\varrho \in \mathfrak{B}_k$}
            \State \Return \textbf{true}.
        \Else
            \State Find an arbitrary $\rho_{b|(a, \mathcal{S})} \in \mathfrak{B}_k$ such that $\operatorname{tr}\left(\rho_{b|(a, \mathcal{S})}\right) \neq 0$.
            \State \Return \textbf{false} with witness $(a, \mathcal{S}, b)$.
        \EndIf

        \end{algorithmic}
    \end{algorithm}

    \begin{theorem} \label{thm-algo}
        Given a QMM $\mathcal{M} = (\Sigma, \Gamma, \mathcal{H}, U, M)$, two states $\rho_s$ and $\rho_t$ and a positive integer $k$. Let $m = \abs{\Sigma}+\abs{\Gamma}$ and $n = \dim \mathcal{H}$. Then:
        \begin{enumerate}
          \item There is an $O(mn^6)$ algorithm that decides whether $\rho_s \sim\rho_t$; if not, it finds an input $a \in \Sigma^*$ and a closed scheduler $\mathcal{S}$ with $\abs{a}+\abs{\mathcal{S}} \leq n^2-1$ and output $b \in \Gamma^{\abs{\mathcal{S}}}$ such that $$\Pr\nolimits_{\rho_s}^{\mathcal{M}}(b|a, \mathcal{S}) \neq \Pr\nolimits_{\rho_t}^{\mathcal{M}}(b|a, \mathcal{S}).$$
          \item There is an $O(kmn^6)$ algorithm that decides whether $\rho_s \sim_{k} \rho_t$; if not, it finds an input $a \in \Sigma^*$ and a closed scheduler $\mathcal{S}$ with $\abs{a}+\abs{\mathcal{S}} \leq n^2-1$ and $\abs{\mathcal{S}} \leq k$ and output $b \in \Gamma^{\abs{\mathcal{S}}}$ such that $$\Pr\nolimits_{\rho_s}^{\mathcal{M}}(b|a, \mathcal{S}) \neq \Pr\nolimits_{\rho_t}^{\mathcal{M}}(b|a, \mathcal{S}).$$
        \end{enumerate}
    \end{theorem}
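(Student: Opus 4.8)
The plan is to prove both parts of Theorem~\ref{thm-algo} by reducing the behaviour of Algorithm~\ref{algo1} and Algorithm~\ref{algo2} to a single linear-algebraic invariant. Set $\rho = \rho_s - \rho_t$ and work in the $n^2$-dimensional operator space $\mathcal{L}(\mathcal{H})$. The two moves available to the algorithm are the linear maps $\mathcal{E}_\sigma(X) = U_\sigma X U_\sigma^\dag$ (appending an input symbol) and $\mathcal{F}_\gamma(X) = M_\gamma X M_\gamma^\dag$ (scheduling a measurement with outcome $\gamma$). Since $\rho_{b|(a,\mathcal{S})}$ is linear in the initial operator, it equals $(\rho_s)_{b|(a,\mathcal{S})} - (\rho_t)_{b|(a,\mathcal{S})}$ and is obtained from $\rho$ by composing these maps in the order dictated by $(a,\mathcal{S},b)$; moreover a child operator is obtained from its parent by a single conjugation $\varrho \mapsto U_\sigma \varrho U_\sigma^\dag$ or $\varrho \mapsto M_\gamma \varrho M_\gamma^\dag$. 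By linearity of the trace, $\rho_s \sim \rho_t$ holds iff $\operatorname{tr}$ vanishes on every reachable operator, i.e. on the subspace $\mathcal{R}$ spanned by $\{\rho_{b|(a,\mathcal{S})}\}$ over all experiments.

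First I would establish that at termination $\operatorname{span}\mathfrak{B} = \mathcal{R}$ in Algorithm~\ref{algo1}. The inclusion $\operatorname{span}\mathfrak{B} \subseteq \mathcal{R}$ is immediate. For the reverse, whenever an operator is added to $\mathfrak{B}$ the algorithm pushes all of its $\mathcal{E}_\sigma$- and $\mathcal{F}_\gamma$-children, each of which is later popped and therefore lies in $\operatorname{span}\mathfrak{B}$ at termination (the set only grows). Hence $\mathcal{E}_\sigma(\varrho),\mathcal{F}_\gamma(\varrho) \in \operatorname{span}\mathfrak{B}$ for every $\varrho \in \mathfrak{B}$, so by linearity $\operatorname{span}\mathfrak{B}$ is invariant under all $\mathcal{E}_\sigma,\mathcal{F}_\gamma$ and contains $\rho$; being closed and containing the seed, it contains all of $\mathcal{R}$. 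Thus the final test ``$\operatorname{tr}(\varrho)=0$ for all $\varrho \in \mathfrak{B}$'' is equivalent to $\operatorname{tr}|_{\mathcal{R}} \equiv 0$, i.e. to $\rho_s \sim \rho_t$, and any $\varrho \in \mathfrak{B}$ of nonzero trace is a valid witness. The size bound $\abs{a}+\abs{\mathcal{S}} \le n^2-1$ follows from the BFS discipline together with the stabilisation underlying Theorem~\ref{thm-state-eq}: writing $\mathcal{R}_d$ for the span of operators of size at most $d$, the chain $\mathcal{R}_0 \subseteq \mathcal{R}_1 \subseteq \cdots$ strictly increases until stationary, and since $\dim \mathcal{R}_d \le n^2$ it is already stationary at $d=n^2-1$; hence no operator of size $\ge n^2$ enlarges the span and every member of $\mathfrak{B}$ has size at most $n^2-1$.

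For the running time of Algorithm~\ref{algo1}, each addition strictly increases $\dim \operatorname{span}\mathfrak{B}$, so at most $n^2$ operators are added and at most $O(mn^2)$ triples are pushed and popped. Storing each operator matrix alongside its triple makes the child computation a single $O(n^3)$ conjugation. The membership test is performed by maintaining $\mathfrak{B}$ in reduced echelon form as vectors in $\mathbb{C}^{n^2}$; reducing one candidate against at most $n^2$ basis vectors of length $n^2$ costs $O(n^4)$. The $O(mn^2)$ membership tests dominate, giving $O(mn^2 \cdot n^4) = O(mn^6)$.

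Finally, for Algorithm~\ref{algo2} I would replay the argument stratified by the number of measurements, and this is the step I expect to be the main obstacle: pruning must respect the measurement budget, since an operator reached with few measurements has more room to schedule further ones than the same operator reached with many, so a single basis is unsound. Maintaining the invariant $\operatorname{span}\mathfrak{B}_0 \subseteq \cdots \subseteq \operatorname{span}\mathfrak{B}_k$, I would prove by the same closure argument that at termination $\operatorname{span}\mathfrak{B}_j$ is the span of all operators reachable with at most $j$ measurements; the insertion rule is exactly what preserves this chain, and its monotonicity makes ``new at level $\abs{\mathcal{S}}$'' downward closed, justifying both the search for the largest $j$ and the simultaneous insertion into levels $\abs{\mathcal{S}},\dots,j$. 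Correctness of the $\mathfrak{B}_k$-test then follows as before, witnessing $\sim_k$ via Theorem~\ref{thm-state-eq}(2). For complexity, each of the $k+1$ bases admits at most $n^2$ additions, giving $O(kn^2)$ expansions and $O(kmn^2)$ pushes; every pop performs one $O(n^4)$ test against $\mathfrak{B}_{\abs{\mathcal{S}}}$, for $O(kmn^6)$ overall. The multilevel bookkeeping is controlled by amortisation: the total number of $(\text{operator},\text{level})$ insertions is $\sum_j \dim\operatorname{span}\mathfrak{B}_j \le (k+1)n^2$, and locating each largest $j$ incurs one productive test per insertion plus one boundary test per expansion, all within $O(kn^6)$, so the bound $O(kmn^6)$ stands.
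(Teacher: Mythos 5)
Your proposal is correct and follows essentially the same route as the paper: a BFS that maintains a spanning set of reachable operators, a closure/invariance argument showing $\operatorname{span}\mathfrak{B}$ (resp.\ the chain $\mathfrak{B}_0,\dots,\mathfrak{B}_k$) captures all reachable operators so that the trace test decides equivalence via Theorem~\ref{thm-state-eq}, and the key $O(n^4)$-per-membership-test trick obtained by keeping the basis in reduced form (your echelon-form bookkeeping is the same device as the paper's ``lazy'' Gram--Schmidt from \cite{Kie11}). Your amortised accounting of the ``find the largest $j$'' loop in Algorithm~\ref{algo2} is in fact slightly more careful than the paper's, but the argument and the resulting $O(mn^6)$ and $O(kmn^6)$ bounds coincide.
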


     \subsection{Checking equivalence of machines}

    Now we turn to consider equivalence checking of two QMMs. The basic idea is to reduce this problem to the problem examined in the previous subsection. For two Hilbert spaces $\mathcal{H}_1$ and $\mathcal{H}_2$, $\mathcal{H}_1\oplus\mathcal{H}_2$ stands for their direct sum. If $A_1,A_2$ are two matrices (thought of as operators in $\mathcal{H}_1,\mathcal{H}_2$, respectively, then we write $A_1\oplus A_2$ for their direct sum as an operator in $\mathcal{H}_1\oplus\mathcal{H}_2$).
    Suppose $$\mathcal{M}_i = (\Sigma, \Gamma, \mathcal{H}^{(i)}, U^{(i)}, M^{(i)})\ (i = 1, 2)$$ are two QMMs with the same input and output alphabets. Then the \textit{direct sum} of $\mathcal{M}_1$ and $\mathcal{M}_2$ is defined as
    \[
        \mathcal{M}_1 \oplus \mathcal{M}_2 = (\Sigma, \Gamma, \mathcal{H}_1 \oplus \mathcal{H}_2, U, M),
    \]
    where $$U = \{ U^{(1)}_\sigma \oplus U^{(2)}_\sigma: \sigma \in \Sigma \}\ {\rm and}\ M = \{ M^{(1)}_m \oplus M^{(2)}_m: m \in \Gamma \}.$$
    Obviously, $\mathcal{M}_1 \oplus \mathcal{M}_2$ is also a QMM.

    \begin{theorem} \label{thm-iqmm-eq}
        Given two QMMs $\mathcal{M}_1$ and $\mathcal{M}_2$ with state Hilbert spaces $\mathcal{H}_1$ and $\mathcal{H}_2$, respectively. Let $n_1 = \dim \mathcal{H}_1$ and $n_2 = \dim \mathcal{H}_2$.
        \begin{enumerate}
          \item The following statements are equivalent:
          \begin{enumerate}
            \item $(\mathcal{M}_1, \rho_1) \sim (\mathcal{M}_2, \rho_2)$.
            \item $\rho_1 \sim \rho_2$ in $\mathcal{M}_1 \oplus \mathcal{M}_2$.
            \item $(\mathcal{M}_1, \rho_1) \sim^{n_1^2 + n_2^2 - 1} (\mathcal{M}_2, \rho_2)$.
            \item $\rho_1 \sim^{n_1^2 + n_2^2 - 1} \rho_2$ in $\mathcal{M}_1 \oplus \mathcal{M}_2$.
          \end{enumerate}
          \item For every $k \in \mathbb{N}$, the following statements are equivalent:
          \begin{enumerate}
            \item $(\mathcal{M}_1, \rho_1) \sim_k (\mathcal{M}_2, \rho_2)$.
            \item $\rho_1 \sim_{k} \rho_2$ in $\mathcal{M}_1 \oplus \mathcal{M}_2$.
            \item $(\mathcal{M}_1, \rho_1) \sim_k^{n_1^2 + n_2^2 - 1} (\mathcal{M}_2, \rho_2)$.
            \item $\rho_1 \sim_{k}^{n_1^2 + n_2^2 - 1} \rho_2$ in $\mathcal{M}_1 \oplus \mathcal{M}_2$.
          \end{enumerate}
        \end{enumerate}
    \end{theorem}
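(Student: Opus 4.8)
The plan is to reduce equivalence of the two machines to equivalence of two states inside the single machine $\mathcal{M}_1 \oplus \mathcal{M}_2$, and then to invoke Theorem \ref{thm-state-eq}, taking care to exploit the block structure of the direct sum so as to sharpen the experiment-size bound from $(n_1+n_2)^2-1$ down to $n_1^2+n_2^2-1$. Throughout, I read statements (b) and (d) via the natural embeddings, treating $\rho_1$ as the operator $\rho_1 \oplus 0$ on $\mathcal{H}_1 \oplus \mathcal{H}_2$ and $\rho_2$ as $0 \oplus \rho_2$.

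The backbone is a probability-preservation identity. Since $U_\sigma = U^{(1)}_\sigma \oplus U^{(2)}_\sigma$ and $M_m = M^{(1)}_m \oplus M^{(2)}_m$, and direct sums are preserved under multiplication, the composite operator in $\mathcal{M}_1 \oplus \mathcal{M}_2$ factorises as
\[
  V_{b|a, \mathcal{S}} = V^{(1)}_{b|a, \mathcal{S}} \oplus V^{(2)}_{b|a, \mathcal{S}}.
\]
Hence for the embedded state $\rho_1 \oplus 0$,
\[
  \Pr\nolimits^{\mathcal{M}_1 \oplus \mathcal{M}_2}_{\rho_1 \oplus 0}(b|a, \mathcal{S})
  = \operatorname{tr}\!\left( V^{(1)}_{b|a, \mathcal{S}} \rho_1 (V^{(1)}_{b|a, \mathcal{S}})^\dag \oplus 0 \right)
  = \Pr\nolimits^{\mathcal{M}_1}_{\rho_1}(b|a, \mathcal{S}),
\]
and symmetrically for $0 \oplus \rho_2$. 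Crucially this holds experiment by experiment, so it respects both the size $\abs{a}+\abs{\mathcal{S}}$ and the measurement count $\abs{\mathcal{S}}$. Given this identity, (a) $\Leftrightarrow$ (b) is immediate, since both assert equality of $\Pr^{\mathcal{M}_1}_{\rho_1}$ and $\Pr^{\mathcal{M}_2}_{\rho_2}$ over all experiments; and (c) $\Leftrightarrow$ (d) follows identically after restricting to experiments of size at most $n_1^2+n_2^2-1$. Part 2 is handled the same way, additionally restricting to $\abs{\mathcal{S}} \leq k$.

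The heart of the argument is therefore (b) $\Leftrightarrow$ (d). Set $\rho = (\rho_1 \oplus 0) - (0 \oplus \rho_2) = \rho_1 \oplus (-\rho_2)$, which is block-diagonal. Each reachable operator $\rho_{b|a, \mathcal{S}} = V_{b|a, \mathcal{S}} \rho V_{b|a, \mathcal{S}}^\dag$ stays block-diagonal, because the elementary moves $X \mapsto U_\sigma X U_\sigma^\dag$ and $X \mapsto M_m X M_m^\dag$ each preserve block-diagonality by the factorisation above. Thus every reachable operator lies in the subspace of block-diagonal operators on $\mathcal{H}_1 \oplus \mathcal{H}_2$, which has dimension $n_1^2 + n_2^2$ rather than $(n_1+n_2)^2$. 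Re-running the dimension-counting argument underlying Theorem \ref{thm-state-eq} inside this subspace — the nested spans stabilise once they stop growing, and each strict growth consumes one dimension — caps the stabilisation at experiment size $n_1^2+n_2^2-1$, giving (b) $\Leftrightarrow$ (d) and closing the cycle. The $\sim_k$ version in Part 2 is identical, using the bounded-measurement form of Theorem \ref{thm-state-eq}.

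The \emph{main obstacle} is precisely this sharpening: citing Theorem \ref{thm-state-eq} verbatim on $\mathcal{M}_1 \oplus \mathcal{M}_2$ only yields the weaker bound $(n_1+n_2)^2-1$, since $\dim(\mathcal{H}_1 \oplus \mathcal{H}_2) = n_1+n_2$. The real work is to confine the reachable operators to the block-diagonal subspace, verify that this subspace is genuinely invariant under all elementary moves, and check that the stabilisation-persistence step of the dimension argument still operates within it — so that the relevant dimension is $n_1^2+n_2^2$ and the bound follows.
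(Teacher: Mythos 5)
Your proposal is correct and follows essentially the same route as the paper: reduce to state equivalence in $\mathcal{M}_1 \oplus \mathcal{M}_2$ via the factorisation $V_{b|a,\mathcal{S}} = V^{(1)}_{b|a,\mathcal{S}} \oplus V^{(2)}_{b|a,\mathcal{S}}$, observe that the reachable operators from $\rho_1 \oplus 0 - 0 \oplus \rho_2$ stay block-diagonal so the relevant subspace has dimension $n_1^2+n_2^2$, and re-run the stabilisation argument of Theorem \ref{thm-state-eq} with that improved bound. In fact you spell out the block-diagonal invariance and the reduction identity in more detail than the paper's own (very terse) proof does.
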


If both $\mathcal{M}_1$ and $\mathcal{M}_2$ are real, then the experiment size $n_1^2 + n_2^2 - 1$ can be improved to $\frac 1 2 n_1(n_1+1) + \frac 1 2 n_2(n_2+1) - 1.$

    The above theorem implies that the algorithms in Theorem \ref{thm-algo} can be used for checking equivalence of two QMMs.

        \subsection{Minimization of machines}

        Finally, we consider the minimisation problem of QMMs. Formally, it can be formulated as the following decision problem:

          \begin{problem} \label{prob1} Given a QMM $\mathcal{M}^*$ and its initial state $\rho^*$, whether there is a QMM $\mathcal{M}$ and its initial state $\rho$ such that $\dim \mathcal{H} < \dim \mathcal{H}^*$ and $(\mathcal{M}, \rho) \sim (\mathcal{M}^*, \rho^*)$.\end{problem}

          A variant of this problem with a bound on the number of measurements is stated as:

           \begin{problem} \label{prob2} Given a QMM $\mathcal{M}^*$ and its initial state $\rho^*$ together with an integer $k$, whether there is a QMM $\mathcal{M}$ and its initial state $\rho$ such that $\dim \mathcal{H} < \dim \mathcal{H}^*$ and $(\mathcal{M}, \rho) \sim_k (\mathcal{M}^*, \rho^*)$.\end{problem}

Our result is then given as the following:
    \begin{theorem} \label{thm-decision}
    Both Problem \ref{prob1} and Problem \ref{prob2} are in \textbf{PSPACE}.
    \end{theorem}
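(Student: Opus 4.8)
The plan is to reduce each of Problem \ref{prob1} and Problem \ref{prob2} to deciding the truth of a single \emph{existential} sentence over the real numbers whose length is polynomial in the input, and then to invoke the classical fact (Canny; Renegar) that the existential theory of the reals lies in \textbf{PSPACE}. Throughout, write $n = \dim\mathcal{H}^*$, $m = \abs{\Sigma}+\abs{\Gamma}$, and fix the target dimension $n' = n-1$. A block-diagonal padding argument (extend a machine of dimension $n'' < n-1$ by $U_\sigma \oplus I$ and the trivial one-outcome measurement on the added block, with initial state $\rho \oplus 0$) shows that a smaller equivalent machine of \emph{some} dimension $< n$ exists iff one of dimension exactly $n-1$ does, so it suffices to search for $\mathcal{M}$ with $\dim\mathcal{H} = n'$. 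Each complex entry is encoded by its real and imaginary parts, so all matrix conditions below become real polynomial (in)equalities of bounded degree.

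The only non-algebraic ingredient is the requirement $(\mathcal{M},\rho)\sim(\mathcal{M}^*,\rho^*)$, which a priori quantifies over infinitely many experiments; the key step is to turn this universal statement into a polynomial-size existential one. By Theorem \ref{thm-iqmm-eq} this coincides with state equivalence of $\rho_s = \rho\oplus 0$ and $\rho_t = 0\oplus\rho^*$ in the direct sum $\mathcal{N} = \mathcal{M}\oplus\mathcal{M}^*$, of dimension $N = n'+n$. Writing $\delta = \rho_s-\rho_t$ and letting $\Phi_\sigma(X) = U_\sigma X U_\sigma^{\dagger}$ and $\Psi_m(X) = M_m X M_m^{\dagger}$ (with the $U_\sigma, M_m$ of $\mathcal{N}$) be the maps on the $N^2$-dimensional matrix space, every operator $V_{b|a,\mathcal{S}}\,\delta\,V_{b|a,\mathcal{S}}^{\dagger}$ arising in an experiment is a composition of the $\Phi_\sigma$ and $\Psi_m$ applied to $\delta$. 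Hence $\rho_s\sim\rho_t$ holds iff $\operatorname{tr}$ vanishes on the smallest $\{\Phi_\sigma,\Psi_m\}$-invariant subspace containing $\delta$, which in turn holds iff there exists \emph{any} subspace $W$ with (i) $\delta\in W$, (ii) $\Phi_\sigma(W)\subseteq W$ and $\Psi_m(W)\subseteq W$ for all $\sigma,m$, and (iii) $\operatorname{tr}(X)=0$ for all $X\in W$. (Forward: take $W$ to be the reachable subspace itself; converse: any such $W$ contains all reachable operators.) Since this subspace has dimension at most $N^2$, it is presentable by a spanning list of at most $N^2$ matrices, which is what makes the encoding short.

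This yields the existential formula for Problem \ref{prob1}: existentially quantify the real and imaginary parts of the entries of the unknown $U_\sigma$ and $M_m$, a matrix $Y$ with $\rho = YY^{\dagger}$ (which forces $\rho$ to be Hermitian and positive semidefinite without a further quantifier), generators $X_1,\dots,X_{N^2}$ of $W$, and witness coefficients; then assert the conjunction of $U_\sigma^{\dagger}U_\sigma = I$, $\sum_m M_m^{\dagger}M_m = I$, $\operatorname{tr}(YY^{\dagger}) = 1$, $\operatorname{tr}(X_j)=0$ for all $j$, the membership $\delta = \sum_j c_j X_j$, and the invariance relations $\Phi_\sigma(X_i) = \sum_j d^{\sigma}_{ij}X_j$ and $\Psi_m(X_i) = \sum_j e^{m}_{ij}X_j$. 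All constraints have bounded degree (at most four), their number and the number of variables are $O(mN^4)$, and the coefficients are those of the given rational data, so the whole sentence has polynomial length and is decidable in \textbf{PSPACE}; as only $n'=n-1$ need be tried, Problem \ref{prob1} is in \textbf{PSPACE}. For Problem \ref{prob2} the single subspace is replaced by a filtration $W_0\subseteq W_1\subseteq\dots\subseteq W_{k'}$ capturing operators reachable using at most $j$ measurements, with conditions $\delta\in W_0$, $\Phi_\sigma(W_j)\subseteq W_j$, $\Psi_m(W_j)\subseteq W_{j+1}$ for $j<k'$, and $\operatorname{tr}$ vanishing on $W_{k'}$. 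Because these reachable subspaces form a nondecreasing chain inside an $N^2$-dimensional space, the chain stabilises after at most $N^2$ steps, so $\sim_k$ equals $\sim_{\min(k,N^2)}$; taking $k' = \min(k,N^2)$ keeps the filtration, and hence the formula, of polynomial length even when $k$ is given in binary, placing Problem \ref{prob2} in \textbf{PSPACE} as well.

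I expect the main obstacle to be exactly the elimination of the universal quantifier over experiments in the second paragraph: making precise that ``equivalent'' is captured by the existence of a single trace-annihilating invariant subspace (and, in the bounded case, a short invariant filtration), and verifying both that a spanning set of polynomially many generators suffices and that the bounded-measurement chain stabilises, so that the encoded sentence stays polynomial-size. Once that algebraic characterisation is in place, the reduction to the existential theory of the reals and the appeal to its \textbf{PSPACE} decision procedure are routine.
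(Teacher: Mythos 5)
Your proposal is correct and follows essentially the same route as the paper: characterise $\sim$ (resp.\ $\sim_k$) by the existence of a trace-annihilating invariant subspace (resp.\ filtration) of the matrix space containing $\delta$, present it by polynomially many generators with witness coefficients, and decide the resulting polynomial-size existential sentence over the reals via Canny/Renegar; your generators $X_j$ with relations $\Phi_\sigma(X_i)=\sum_j d^\sigma_{ij}X_j$ are exactly the paper's matrices $F^{(l)}$ with $\hat U_\sigma F^{(l)} = F^{(l)} A^{(l)}_\sigma$ in Proposition \ref{prop-eq-k}. Your explicit treatment of the padding to dimension $n-1$, the encoding $\rho=YY^\dagger$, and the truncation $k'=\min(k,N^2)$ are welcome refinements of details the paper leaves implicit, not a different method.
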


    \subsection{Remarks}

    As mentioned in the Introduction, the equivalence checking problem of various quantum finite-state automata rather than QMMs has been thoroughly studied in the previous literature. The techniques in this paper are developed for quantum Mealy machines. However, they can also be used to improve the previous complexity results for quantum finite-states automata (see Table \ref{tab2}).

    \begin{itemize}\item For equivalence checking of two real-valued quantum automata (i.e. all entries of its unitary matrices and measurements are real numbers), our improvements on the length of inputs for equivalence checking are summarised in Table \ref{tab1}.
\item It was proved in \cite{Mat12} that the minimization problem for several models of quantum finite-state automata (MO-QFA, MM-QFA, MO-gQFA) can be solved in \textbf{EXPTIME}. Our technique in proving Theorem \ref{thm-decision} can be used to improve this result to \textbf{PSPACE}.
\item It is worth pointing out that the results given in the previous literature (see the second columns of Tables \ref{tab1} and \ref{tab2}) were proved only in the case of pure states. However, our results (see the third columns of Tables \ref{tab1} and \ref{tab2}) are valid for the general case of mixed states.

\end{itemize}
  \begin{table}[!hbp]
        \centering
            \begin{tabular} {|c|c|c|}
            \hline
            Model & $m$-equivalence & Our improvements \\
            \hline
            MO-QFA \cite{Moo00} \cite{Bro02} & $n_1^2+n_2^2-1$ \cite{Kos01} \cite{Li06} \cite{Li12} & $\frac 1 2 n_1(n_1+1)+\frac 1 2 n_2(n_2+1) -1$ \\
            \hline
            MM-QFA \cite{Kon97} \cite{Bro02} & $3n_1^2+3n_2^2-1$ \cite{Li08} & $\frac 3 2 n_1(n_1+1)+\frac 3 2 n_2(n_2+1) -1$ \\
            \hline
            CL-QFA \cite{Ber03} & $c_1n_1^2+c_2n_2^2-1$ \cite{Li08} & $\frac {c_1} 2 n_1(n_1+1)+\frac {c_2} 2 n_2(n_2+1)-1$ \\
            \hline
            QSM \cite{Gud00} \cite{Qiu02} & $n_1^2+n_2^2-1$ \cite{Li062} & $\frac 1 2 n_1(n_1+1)+\frac 1 2 n_2(n_2+1) -1$ \\
            \hline
            continuum QMM \cite{Wan18} & $n_1^2+n_2^2-1$ \cite{Wan18} & $\frac 1 2 n_1(n_1+1)+\frac 1 2 n_2(n_2+1)-1$ \\
            \hline
            \end{tabular}
            \caption{Shorter inputs for equivalence checking of two real-valued quantum automata, where $n_1$ and $n_2$ are the dimension of the state Hilbert spaces of the two automata, respectively. }
            \label{tab1}
        \end{table}
          \begin{table}[!hbp]
    \centering
        \begin{tabular} {|c|c|c|c|}
                \hline
        Model & Complexity & Our improvements \\
        \hline
        MO-QFA \cite{Moo00} \cite{Bro02} & $O(n^8)$ \cite{Kos01} \cite{Qiu11} & $O(n^6)$ \\
        \hline
        MM-QFA \cite{Kon97} \cite{Bro02} & $O(n^8)$ \cite{Li08} & $O(n^6)$ \\
        \hline
        CL-QFA \cite{Ber03} & $O((c_1n_1^2+c_2n_2^2)^4)$ \cite{Li08} & $O((c_1n_1^2+c_2n_2^2)^3)$ \\
        \hline
        QSM \cite{Gud00} \cite{Qiu02} & $O(n^{12})$ \cite{Li06} & $O(n^6)$ \\
        \hline
        continuum QMM \cite{Wan18} & $O(n^8)$ \cite{Wan18} & $O(n^6)$ \\
        \hline
        \end{tabular}
        \caption{Better complexities for checking equivalence of two automata, where $n_1$ and $n_2$ are the dimension of the state Hilbert spaces of the two automata, respectively, and $n = n_1+n_2$. }
        \label{tab2}
    \end{table}

    \section{Case Studies}\label{test}

To test the efficiency of Algorithms \ref{algo1} and \ref{algo2} presented in the last section, we prepared a set of benchmarks for case studies. It consists of 30 test cases (from test001 to test030), and the detailed descriptions of them can be found in \cite{urlbenchmark}. In this section, we only briefly discuss a couple of examples in order to give the reader a basic idea about them.

For better testing the efficiency of our algorithms, the state Hilbert spaces in these test cases are designed to be of various dimensions, e.g. test001 is of dimension 2 (a single qubit), while test017 is of dimension $2^5=32$ (5 qubits). The quantum Mealy machines and circuits in Example \ref{example1} are associated with test002 and test005, and Example \ref{example2} with test008, test009 and test010.

Algorithms \ref{algo1} and \ref{algo2} are implemented in C/C++ compiled by GCC 5.4.0. We test our algorithms on a Linux workstation: Intel(R) Xeon(R) CPU E7-8850 v2 2.30GHz with 24M Cache. All test cases utilize the single thread mode. To show the improvements displayed in Table \ref{tab2}, the experimental result is collected in Table \ref{tab3} with comparisons between the method of complexity $O(n^8)$, which can be directly derived from the techniques introduced in \cite{Tze92}, and our improved method of complexity $O(n^6)$. This table contains those test cases with large dimensions of the state Hilbert spaces.

\begin{table}
    \centering
    \begin{tabular} {|c|c|c|c|c|c|c|c|c|c|c|c|c|c|c|}
    \hline
    Tests & $n$ & $O(n^8)$ RT(s) & $O(n^6)$ RT(s) \\
    \hline
    test016 & 16 & 2.83 & 0.20 \\
    \hline
    test017 & 32 & 60.29 & 2.58 \\
    \hline
    test026 & 16 & 10.00 & 0.34 \\
    \hline
    test027 & 16 & 33.93 & 0.65 \\
    \hline
    test028 & 16 & 14.44 & 0.49 \\
    \hline
    test029 & 16 & 16.52 & 0.35 \\
    \hline
    \end{tabular}
    \caption{Experiment results. Here, $n$ stands for the dimension of the state Hilbert spaces of the QMM,
         $O(n^8)$ RT(s) for the running time of the method with complexity $O(n^8)$, and
         $O(n^6)$ RT(s) for the running time of the method with complexity $O(n^6)$.}
    \label{tab3}
\end{table}

    \section{Proofs of Theorems}\label{proofs}

    Now we arrive at the more technical part of this paper. In this section, we give the proofs of the theorems presented in Sec. \ref{main}. For readability, some tedious parts of these proofs are provided in the Appendices.

    \subsection{Proof of Theorem \ref{thm-state-eq}}

    First, we notice that Part 1 is a corollary of Part 2. If Part 2 holds, i.e.$$\rho_s \sim_{k} \rho_t \Longleftrightarrow \rho_s \sim_{k}^{n^2-1} \rho_t$$ for every $k \in \mathbb{N}$, then for all $k \geq n^2-1$, we have: $$\rho_s \sim_{k} \rho_t \Longleftrightarrow \rho_s \sim_{k}^{n^2-1} \rho_t \Longleftrightarrow \rho_s \sim_{n^2-1}^{n^2-1} \rho_t \Longleftrightarrow \rho_s \sim^{n^2-1} \rho_t,$$ which does not depend on $k$. This implies that if $\rho_s \sim^{n^2-1} \rho_t$, then $\rho_s \sim_{k} \rho_t$ for all $k \geq n^2-1$; that is, $\rho_s \sim \rho_t$. So, we only need to prove Part 2 (see \ref{app1} for a simple derivation of Part 1 from Part 2). Before doing it, we need some preparations.

    Let $\mathcal{M} = (\Sigma, \Gamma, \mathcal{H}, U, M)$, and let $\rho$ be an Hermitian operator. We define the set:
    \[
        \mathfrak{D}_k(\rho, m) = \{ \rho_{b|a, \mathcal{S}}^{\mathcal{M}}: a \in \Sigma^*, \mathcal{S} \in \mathfrak{S}_a, b \in \Gamma^{\abs{\mathcal{S}}}, \abs{a}+\abs{\mathcal{S}} \leq m, \abs{\mathcal{S}} \leq k \},
    \]
    where $$\rho_{b|a, \mathcal{S}}^{\mathcal{M}} = V_{b|a, \mathcal{S}} \rho V_{b|a, \mathcal{S}}^\dag\ {\rm and}\ V_{b|a, \mathcal{S}} = U_{a_{\abs{\mathcal{S}}+1}} M_{b_{\abs{\mathcal{S}}}} U_{a_{\abs{\mathcal{S}}}} \dots M_{b_1} U_{a_1}.$$
    Intuitively, this set records all of the possible states of the machine starting in state $\rho$ with the bounds $m$ on the experiment size and $k$ on the number of allowed measurements.
    Especially, $\mathfrak{D}_k(\rho, 0) = \{ \rho \}$ for all $k \in \mathbb{N}$. Obviously, the following properties hold: for every $m, k \in \mathbb{N}$,
    \begin{enumerate}
      \item $\mathfrak{D}_k(\rho, m) \subseteq \mathfrak{D}_k(\rho, m+1)$ and thus $\operatorname{span}\mathfrak{D}_k(\rho, m) \subseteq \operatorname{span}\mathfrak{D}_k(\rho, m+1)$.
      \item $\mathfrak{D}_k(\rho, m) \subseteq \mathfrak{D}_{k+1}(\rho, m)$ and thus $\operatorname{span}\mathfrak{D}_k(\rho, m) \subseteq \operatorname{span}\mathfrak{D}_{k+1}(\rho, m)$.
      \item $\dim \operatorname{span}\mathfrak{D}_k(\rho, m) \leq n^2$.
    \end{enumerate}

    Furthermore, we have the following:

    \begin{lemma} \label{prop-span-k}
        If $\operatorname{span} \mathfrak{D}_l(\rho, m) = \operatorname{span} \mathfrak{D}_l(\rho, m+1)$ for every $0 \leq l \leq k$, then for every $0 \leq l \leq k$ and $\delta \in \mathbb{N}$, $\operatorname{span} \mathfrak{D}_l(\rho, m) = \operatorname{span} \mathfrak{D}_l(\rho, m+\delta)$.
    \end{lemma}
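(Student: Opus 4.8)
The plan is to prove the lemma by induction on $\delta$, with the base cases $\delta = 0$ (trivial) and $\delta = 1$ (exactly the hypothesis) handled immediately. The engine of the induction will be a one-step recurrence that expresses $\operatorname{span}\mathfrak{D}_l(\rho, m+1)$ in terms of $\operatorname{span}\mathfrak{D}_l(\rho, m)$ and $\operatorname{span}\mathfrak{D}_{l-1}(\rho, m)$, obtained by peeling off the \emph{outermost} (leftmost) operator in $V_{b|a, \mathcal{S}}$.

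First I would establish this recurrence. Writing $V_{b|a, \mathcal{S}} = U_{a_{\abs{\mathcal{S}}+1}} M_{b_{\abs{\mathcal{S}}}} \dots M_{b_1} U_{a_1}$, I consider any $\rho_{b|a, \mathcal{S}}^{\mathcal{M}} \in \mathfrak{D}_l(\rho, m+1)$ of experiment size exactly $m+1$. Its outermost factor is either a single unitary $U_\sigma$ (when the trailing block $a_{\abs{\mathcal{S}}+1}$ is nonempty) or a measurement operator $M_\gamma$ (when that block is empty, i.e. the scheduler is closed with an empty last segment). Stripping this factor yields an operator in $\mathfrak{D}_l(\rho, m)$ in the unitary case (size drops by one, the measurement count is unchanged) or in $\mathfrak{D}_{l-1}(\rho, m)$ in the measurement case (both drop by one). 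Combining this with the elements of size $\le m$, which already lie in $\mathfrak{D}_l(\rho, m)$, and the straightforward reverse inclusions, and using that each conjugation map $X \mapsto U_\sigma X U_\sigma^\dag$ and $X \mapsto M_\gamma X M_\gamma^\dag$ is linear and hence carries the span of a set to the span of its image, I obtain (with the convention $\mathfrak{D}_{-1} = \emptyset$)
\[
\operatorname{span}\mathfrak{D}_l(\rho, m+1) = \operatorname{span}\mathfrak{D}_l(\rho, m) + \sum_{\sigma \in \Sigma} U_\sigma \bigl( \operatorname{span}\mathfrak{D}_l(\rho, m) \bigr) U_\sigma^\dag + \sum_{\gamma \in \Gamma} M_\gamma \bigl( \operatorname{span}\mathfrak{D}_{l-1}(\rho, m) \bigr) M_\gamma^\dag .
\]

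With the recurrence in hand the induction is routine. Assuming $\operatorname{span}\mathfrak{D}_l(\rho, m) = \operatorname{span}\mathfrak{D}_l(\rho, m+\delta)$ for all $0 \le l \le k$, I apply the recurrence at level $m+\delta$ and substitute the inductive hypothesis for both the level-$l$ and level-$(l-1)$ spans appearing on its right-hand side; both indices stay in range because $l \le k$ forces $l-1 \le k$. This makes $\operatorname{span}\mathfrak{D}_l(\rho, m+\delta+1)$ equal to the identical linear combination defining $\operatorname{span}\mathfrak{D}_l(\rho, m+1)$, so $\operatorname{span}\mathfrak{D}_l(\rho, m+\delta+1) = \operatorname{span}\mathfrak{D}_l(\rho, m+1)$, and the hypothesis $\operatorname{span}\mathfrak{D}_l(\rho, m+1) = \operatorname{span}\mathfrak{D}_l(\rho, m)$ closes the step simultaneously for every $0 \le l \le k$.

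The hard part will be purely the bookkeeping in establishing the recurrence: one must verify that peeling the outermost operator always lands in the advertised set, carefully treating the edge case of an empty trailing unitary block (a closed scheduler) and the boundary level $l = 0$, where the measurement term is absent. Once the recurrence is stated cleanly, transporting the span equalities through the linear conjugation maps and running the simultaneous induction over all $l \le k$ present no difficulty.
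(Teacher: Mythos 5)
Your proposal is correct and follows essentially the same route as the paper's proof: the induction on $\delta$ driven by peeling the outermost factor of $V_{b|a,\mathcal{S}}$, with the two cases (closed scheduler giving an $M_\gamma$-conjugate of an element of $\mathfrak{D}_{l-1}(\rho,m)$, non-closed giving a $U_\sigma$-conjugate of an element of $\mathfrak{D}_l(\rho,m)$) matching the paper's Case 1 and Case 2 exactly. Packaging the one-step decomposition as an explicit subspace recurrence before running the induction is a cosmetic reorganization of the same argument, and the convention $\mathfrak{D}_{-1}=\emptyset$ correctly handles the boundary level $l=0$.
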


    \begin{proof}
        We prove it by induction on $\delta$.

        \textbf{Basis}. It is trivial when $\delta = 1$.

        \textbf{Induction}. Suppose it is true for some $\delta \geq 1$ that $$\operatorname{span}\mathfrak{D}_l(\rho, m) = \operatorname{span}\mathfrak{D}_l(\rho, m+\delta)$$ for all $0 \leq l \leq k$. For every $\rho_{b|a, \mathcal{S}}^{\mathcal{M}} \in \mathfrak{D}_l(\rho, m+\delta+1)$ for some $a \in \Sigma^*$, and for all $\mathcal{S} \in \mathfrak{S}_a$ and $b \in \Gamma^{\abs{\mathcal{S}}}$ with $\abs{a}+\abs{\mathcal{S}} \leq m+\delta+1$ and $\abs{\mathcal{S}} \leq l$ for some $0 \leq l \leq k$, we consider the following two cases:

        \textbf{Case 1}. $\mathcal{S}$ is closed, i.e. $s_{\abs{\mathcal{S}}} = \abs{a}$: Then we set $\mathcal{S}^- = \{ s_1, s_2, \dots, s_{\abs{\mathcal{S}}-1} \}$ and $b^- = b[1:\abs{b}-1]$. It holds that $$\rho_{b|a, \mathcal{S}}^{\mathcal{M}} = M_{b[\abs{b}]} \rho_{b^-|a, \mathcal{S}^-}^{\mathcal{M}}  M_{b[\abs{b}]}^\dag.$$ By the assumption, we obtain: $$\rho_{b^-|a, \mathcal{S}^-}^\mathcal{M} \in \mathfrak{D}_{l-1}(\rho, m+\delta) \subseteq \operatorname{span} \mathfrak{D}_{l-1}(\rho, m+\delta) = \operatorname{span} \mathfrak{D}_{l-1}(\rho, m).$$ Thus, we have:
        \begin{align*}
            \rho_{b|a, \mathcal{S}}^{\mathcal{M}}
            & = M_{b[\abs{b}]} \rho_{b^-|a, \mathcal{S}^-}^{\mathcal{M}}  M_{b[\abs{b}]}^\dag
             \in M_{b[\abs{b}]} \left( \operatorname{span} \mathfrak{D}_{l-1}(\rho, m) \right) M_{b[\abs{b}]}^\dag \\
            & = \operatorname{span} \left( M_{b[\abs{b}]}\mathfrak{D}_{l-1}(\rho, m) M_{b[\abs{b}]}^\dag \right)\\
             &\subseteq \operatorname{span} \mathfrak{D}_{l}(\rho, m+1)
             = \operatorname{span} \mathfrak{D}_l(\rho, m).
        \end{align*}

        \textbf{Case 2}. $\mathcal{S}$ is not closed: Then put $a^- = a[1:\abs{a}-1]$. It follows that $$\rho_{b|a, \mathcal{S}}^{\mathcal{M}} = U_{a[\abs{a}]} \rho_{b|a^-, \mathcal{S}}^{\mathcal{M}} U_{a[\abs{a}]}^\dag.$$ By the assumption, it holds that $$\rho_{b|a^-, \mathcal{S}}^{\mathcal{M}} \in \mathfrak{D}_l(\rho, m+\delta) \subseteq \operatorname{span} \mathfrak{D}_l(\rho, m+\delta) = \operatorname{span} \mathfrak{D}_l(\rho, m).$$ So, we have:
        \begin{align*}
            \rho_{b|a, \mathcal{S}}^{\mathcal{M}}
            & = U_{a[\abs{a}]} \rho_{b|a^-, \mathcal{S}}^{\mathcal{M}} U_{a[\abs{a}]}^\dag
             \in U_{a[\abs{a}]} \left( \operatorname{span} \mathfrak{D}_l(\rho, m) \right) U_{a[\abs{a}]}^\dag \\
            & = \operatorname{span} \left( U_{a[\abs{a}]} \mathfrak{D}_l(\rho, m) U_{a[\abs{a}]}^\dag \right)\\
             &\subseteq \operatorname{span} \mathfrak{D}_l(\rho, m+1)
             = \operatorname{span} \mathfrak{D}_l(\rho, m).
        \end{align*}

      The above two cases together yield $\rho_{b|a, \mathcal{S}} \in \operatorname{span} \mathfrak{D}_l(\rho, m)$, and thus $$\operatorname{span} \mathfrak{D}_l(\rho, m+\delta+1) \subseteq \operatorname{span} \mathfrak{D}_l(\rho, m).$$ On the other hand, it is clear that $\operatorname{span} \mathfrak{D}_l(\rho, m) \subseteq \operatorname{span} \mathfrak{D}_l(\rho, m+\delta+1)$. We conclude that $\operatorname{span} \mathfrak{D}_l(\rho, m+\delta+1) = \operatorname{span} \mathfrak{D}_l(\rho, m)$ for every $0 \leq l \leq k$, and complete the proof of this lemma.
    \end{proof}

    \begin{lemma} \label{col-span-k}
        $\mathfrak{D}_k(\rho, n^2 - 1) \supseteq \mathfrak{D}_k(\rho, m)$ for every $k \in \mathbb{N}$ and $m \in \mathbb{N}$.
    \end{lemma}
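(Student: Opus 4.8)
The statement is to be read with a span on its left-hand side, i.e. the plan is to show that every element of $\mathfrak{D}_k(\rho, m)$ lies in $\operatorname{span} \mathfrak{D}_k(\rho, n^2-1)$, for all $m, k \in \mathbb{N}$; the reverse containment for $m \le n^2-1$ is immediate from monotonicity (Property 1). The first move is to reduce the assertion to a statement about a single step. Concretely, it suffices to prove that
$$\operatorname{span} \mathfrak{D}_l(\rho, n^2-1) = \operatorname{span} \mathfrak{D}_l(\rho, n^2) \quad \text{for every } 0 \le l \le k,$$
for then Lemma \ref{prop-span-k}, applied with the role of ``$m$'' played by $n^2-1$, upgrades this to $\operatorname{span} \mathfrak{D}_l(\rho, n^2-1) = \operatorname{span} \mathfrak{D}_l(\rho, n^2-1+\delta)$ for all $\delta \in \mathbb{N}$ and all $0 \le l \le k$. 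Taking $l=k$ and combining with Property 1 (for the range $m < n^2-1$) then yields the claim.

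The heart of the proof is the following dimension bound: \emph{if the span strictly grows at stratum $l$ when the size increases from $s-1$ to $s$, that is $\operatorname{span} \mathfrak{D}_l(\rho, s) \supsetneq \operatorname{span} \mathfrak{D}_l(\rho, s-1)$, then $s \le n^2-1$.} To see this I would fix a witnessing experiment $(a,\mathcal{S},b)$ of size $s$ with $\abs{\mathcal{S}} \le l$ and $\rho_{b|a,\mathcal{S}}^{\mathcal{M}} \notin \operatorname{span} \mathfrak{D}_l(\rho, s-1)$, and then peel off its operations one at a time, exactly along the two cases of Lemma \ref{prop-span-k}: either delete the last input symbol (stripping a $U_{\sigma}$-conjugation, leaving the measurement count unchanged) or, when $\mathcal{S}$ is closed, delete the last measurement (stripping an $M_{\gamma}$-conjugation, lowering the measurement count by one). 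The key sub-claim is that every prefix obtained this way is again \emph{fresh for its own measurement level}: if the parent operator already lay in the span of shorter experiments at its level, then re-applying the removed conjugation to a spanning combination of those shorter experiments would express the child back inside $\operatorname{span} \mathfrak{D}_l(\rho, s-1)$, contradicting the choice of the witness. Peeling down to the root $\rho$ therefore produces a chain of unnormalised states $\rho^{(0)}, \rho^{(1)}, \dots, \rho^{(s)}$, at sizes $0,1,\dots,s$, in which each $\rho^{(t)}$ lies outside the span of its predecessors and hence the whole chain is linearly independent.

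Since all $s+1$ operators $\rho^{(t)}$ live in the ambient space of operators on $\mathcal{H}$, whose dimension is $n^2$ (this is Property 3), linear independence forces $s+1 \le n^2$, i.e. $s \le n^2-1$. Consequently no stratum $l \le k$ can gain a new dimension at any size $s \ge n^2$, which is precisely the simultaneous one-step stability $\operatorname{span} \mathfrak{D}_l(\rho, n^2-1) = \operatorname{span} \mathfrak{D}_l(\rho, n^2)$ required to launch the reduction above.

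The step I expect to be the main obstacle is keeping the bound \emph{uniform in $k$}, namely $n^2-1$ rather than something like $(k+1)(n^2-1)$. The crude potential $\sum_{l=0}^{k} \dim \operatorname{span} \mathfrak{D}_l(\rho, m)$ is non-decreasing and bounded by $(k+1)n^2$, but this only bounds the stabilisation size by a quantity growing with $k$, because lower measurement levels can keep contributing new operators at large sizes and feed them upward through measurements. The reason the correct bound is $n^2-1$ is that the measurement count is \emph{non-increasing} as one peels toward the root, so the entire chain $\rho^{(0)}, \dots, \rho^{(s)}$ remains inside a single $n^2$-dimensional space, and it is the length of this chain—not a sum over levels—that is capped. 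Making the freshness sub-claim precise for every link of the chain simultaneously (rather than merely for the last peeled operation), together with the bookkeeping that ensures each re-applied conjugation respects the measurement budget $\abs{\mathcal{S}} \le l$, is the delicate part of the argument.
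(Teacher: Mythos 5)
Your proof is correct, and it reads the lemma the same way the paper implicitly does (as a containment of spans, since the literal set containment fails for $m>n^2-1$), but it reaches the bound by a genuinely different route. The paper inducts on the number of measurements $k$: it defines the stabilisation thresholds $m_k=\min\{m: \operatorname{span}\mathfrak{D}_l(\rho,m)=\operatorname{span}\mathfrak{D}_l(\rho,m+\delta)\ \forall \delta,\ 0\le l\le k\}$, proves $\dim\operatorname{span}\mathfrak{D}_k(\rho,m_k)\ge m_k+1$ by a pigeonhole argument on the chain of dimensions between $m_k$ and $m_{k+1}$ (any non-growing step would, via Lemma \ref{prop-span-k}, contradict minimality of $m_{k+1}$), and then concludes $m_k\le n^2-1$ from $\dim\le n^2$. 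You instead avoid the induction on $k$ entirely: you take a single experiment witnessing strict growth at size $s$, peel it down to the root, propagate freshness through each removed conjugation (your contrapositive argument for this is sound — re-applying $U_\sigma\cdot U_\sigma^\dag$ or $M_\gamma\cdot M_\gamma^\dag$ to a spanning combination at size $s-2$ lands in $\operatorname{span}\mathfrak{D}_{|\mathcal{S}|}(\rho,s-1)$, and the measurement counts along the chain are non-decreasing so each predecessor lies in the set its successor must avoid), and obtain $s+1$ linearly independent operators in the $n^2$-dimensional operator space. Your approach is more explicit and makes transparent exactly why the bound is uniform in $k$ — the entire witness chain lives in one $n^2$-dimensional space — which is the point you correctly identify as the crux; the paper's version is more compact and leans twice on Lemma \ref{prop-span-k}, but hides this intuition inside the $m_k$ bookkeeping. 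As you note, your one-step stability claim for all $s\ge n^2$ actually yields the conclusion directly, so your final appeal to Lemma \ref{prop-span-k} is a convenience rather than a necessity. Both arguments adapt equally to the real case with the ambient dimension $\frac{1}{2}n(n+1)$.
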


    \begin{proof}
        Lemma \ref{prop-span-k} implies that for each $k \in \mathbb{N}$, there is an $m$ such that $\operatorname{span} \mathfrak{D}_l(\rho, m) = \operatorname{span} \mathfrak{D}_l(\rho, m+\delta)$ for every $\delta \in \mathbb{N}$ and $0 \leq l \leq k$. Let $$m_k = \min \{ m \in \mathbb{N}: \operatorname{span} \mathfrak{D}_l(\rho, m) = \operatorname{span} \mathfrak{D}_l(\rho, m+\delta), \forall \delta \in \mathbb{N}, 0 \leq l \leq k \}.$$Then $m_k \leq m_{k+1}$ for every $k \in \mathbb{N}$. Let us first prove that
        \begin{equation}\label{mid-in}
        \dim \operatorname{span} \mathfrak{D}_k(\rho, m_k) \geq m_k+1
        \end{equation}
        for every $k \in \mathbb{N}$ by induction.

        \textbf{Basis}. $k = 0$: By contradiction, we assume that $\dim \operatorname{span} \mathfrak{D}_0(\rho, m_0) < m_0+1$. Note that
        $
        \dim \operatorname{span} \mathfrak{D}_0(\rho, m) \leq \dim \operatorname{span} \mathfrak{D}_0(\rho, m+1)
        $
        for every $m \in \mathbb{N}$ and $\dim \operatorname{span} \mathfrak{D}_0(\rho, 0) = 1$. By the Pigeonhole Principle, there is a $0 \leq m' < m_0$ such that
        $
        \dim \operatorname{span} \mathfrak{D}_0(\rho, m') = \dim \operatorname{span} \mathfrak{D}_0(\rho, m'+1),
        $
        which conflicts with the minimality of $m_0$. Hence, $\dim \operatorname{span} \mathfrak{D}_0(\rho, m_0) \geq m_0+1$.

        \textbf{Induction}. Suppose inequality (\ref{mid-in}) is true for $k \geq 0$. By contradiction, we assume that $\dim \operatorname{span} \mathfrak{D}_{k+1}(\rho, m_{k+1}) < m_{k+1}+1$. Since $m_k \leq m_{k+1}$, we have:
        \begin{align*}
            m_{k+1}+1 & > \dim \operatorname{span} \mathfrak{D}_{k+1}(\rho, m_{k+1}) \geq \dim \operatorname{span} \mathfrak{D}_{k+1}(\rho, m_k) \\
            & \geq \dim \operatorname{span} \mathfrak{D}_k(\rho, m_k) \geq m_k+1.
        \end{align*}
        By the Pigeonhole Principle, there is a $m_k \leq m' < m_{k+1}$ such that
        $$
        \dim \operatorname{span} \mathfrak{D}_{k+1}(\rho, m') = \dim \operatorname{span} \mathfrak{D}_{k+1}(\rho, m'+1),
        $$
        which conflicts with the minimality of $m_{k+1}$. Hence, $\dim \operatorname{span} \mathfrak{D}_{k+1}(\rho, m_{k+1}) \geq m_{k+1}+1$, and we complete the proof of (\ref{mid-in}).

Using (\ref{mid-in}), we see that $m_k+1 \leq \dim \operatorname{span} \mathfrak{D}_{k}(\rho, m_{k}) \leq n^2$ for every $k \in \mathbb{N}$, and $m_k \leq n^2-1$. Then we proved the lemma.
    \end{proof}

    Now we are ready to prove Part 2: $\rho_s \sim_{k} \rho_t \Longleftrightarrow \rho_s \sim_{k}^{n^2-1} \rho_t$. Clearly, we only need to prove the ``if'' part.
        Suppose that $\rho_s \sim_{k}^{n^2-1} \rho_t$. Then
        $$
            \Pr\nolimits_{\rho_s}^{\mathcal{M}}(b|a, \mathcal{S}) = \Pr\nolimits_{\rho_t}^{\mathcal{M}}(b|a, \mathcal{S})
        $$
        for every $a \in \Sigma^*$, $\mathcal{S} \in \mathfrak{S}_a$ and $b \in \Gamma^{\abs{\mathcal{S}}}$ with $\abs{a}+\abs{\mathcal{S}} \leq n^2 - 1$ and $\abs{\mathcal{S}} \leq k$. We conclude that
        $
            \operatorname{tr}(\rho_{b|a, \mathcal{S}}^{\mathcal{M}}) = 0
        $
        for every $\rho_{b|a, \mathcal{S}}^{\mathcal{M}} \in \mathfrak{D}_k(\rho, n^2-1)$, where $\rho = \rho_s - \rho_t$.

        On the other hand, for every $a \in \Sigma^*$, $\mathcal{S} \in \mathfrak{S}_a$ and $b \in \Gamma^{\abs{\mathcal{S}}}$ with $\abs{\mathcal{S}} \leq k$, by Lemma \ref{col-span-k}, we have $\rho_{b|a, \mathcal{S}}^{\mathcal{M}} \in \operatorname{span} \mathfrak{D}_k(\rho, n^2-1)$, and then $\operatorname{tr}(\rho_{b|a, \mathcal{S}}^{\mathcal{M}}) = 0$,
        i.e. $$\Pr\nolimits_{\rho_s}^{\mathcal{M}}(b|a, \mathcal{S}) = \Pr\nolimits_{\rho_t}^{\mathcal{M}}(b|a, \mathcal{S}),$$ which immediately yields $\rho_s \sim_{k} \rho_t$. Therefore, we complete the proof for the general case.

        For the case that $\mathcal{M}$ is real, however, the key observation is that $\operatorname{tr}(\rho) = \operatorname{tr}(\text{Re}(\rho))$ if $\rho$ is Hermitian. Thus, we only need to consider the real part of the density operators. Define:
    \[
        \mathfrak{D}_k^{\text{Re}}(\rho, m) = \{ \text{Re} (\rho_{b|a, \mathcal{S}}^{\mathcal{M}}): a \in \Sigma^*, \mathcal{S} \in \mathfrak{S}_a, b \in \Gamma^{\abs{\mathcal{S}}}, \abs{a}+\abs{\mathcal{S}} \leq m, \abs{\mathcal{S}} \leq k \},
    \]
    where $\text{Re}(x)$ denotes the real part of $x$, e.g. $\text{Re}(3+4i) = 3$.
    We have a better bound:
    \[
    \dim \operatorname{span}\mathfrak{D}_k^{\text{Re}}(\rho, m) \leq \frac 1 2 n(n+1)
    \]
    for every $m, k \in \mathbb{N}$ if $\rho$ is Hermitian. Note that $\rho$ need not be real. Following the idea of the above proof, we obtain $\rho_s \sim_{k} \rho_t \Longleftrightarrow \rho_s \sim_{k}^{\frac 1 2 n(n+1)-1} \rho_t$ if $\mathcal{M}$ is real.

    \subsection{Proof of Theorem \ref{thm-algo}}

    The proof of correctness of Algorithms \ref{algo1} and \ref{algo2} is put in \ref{app2}.
Here, we analyze their complexities. We only consider Algorithm \ref{algo2}, and Algorithm \ref{algo1} can be analyzed similarly.

    Since $\dim \operatorname{span} \mathfrak{D}_l(\rho, n^2-1) \leq n^2$, we have $\abs{\mathfrak{B}_l} \leq n^2$ for $0 \leq l \leq k$. For each element $\rho_{b|(a, \mathcal{S})} \in \mathfrak{B}_l$, when $\rho_{b|(a, \mathcal{S})}$ is added into $\mathfrak{B}_l$ in the algorithm, there are $m = \abs{\Sigma}+\abs{\Gamma}$ tuples that are pushed into $Q$. Thus, there are at most $$\sum_{l=0}^k m \abs{\mathfrak{B}_l} = O(kmn^2)$$ tuples that are pushed into $Q$ in total. In each iteration of the ``while'' loop, we have to check whether an operator is linearly independent to a set of operators (whether $\rho_{b|(a, \mathcal{S})} \in \operatorname{span} \mathfrak{B}_l$, see Line 6 and 7 in Algorithm \ref{algo2}). Note that there are some simple methods, e.g. Gaussian Elimination, to check whether $n$ vectors in a $d$-dimensional space are linearly independent in $O(dn^2)$ time. The ``while'' loop of Algorithm \ref{algo2} can be summarized as follows:
    \begin{enumerate}
      \item Pop the front element $(a, \mathcal{S}, b)$ from $Q$ and calculate $\rho_{b|(a, \mathcal{S})}$.
      \item Check whether $\mathfrak{B}_l \cup \{ \rho_{b|(a, \mathcal{S})} \}$ is linearly independent for some $0 \leq l \leq k$. Each check costs $O(n^6)$ time.
      \item Add $\rho_{b|(a, \mathcal{S})}$ into $\mathfrak{B}_l$ for some $0 \leq l \leq k$ in $O(1)$ time (at most $O(kn^2)$ times).
      \item Push new tuples $(a\sigma, \mathcal{S}, b)$ for $\sigma \in \Sigma$ and $(a, \mathcal{S}+\{\abs{a}\}, b\gamma)$ for $\gamma \in \Gamma$ into $Q$ (at most $O(kn^2)$ times).
    \end{enumerate}
    It is clear that the overall complexity is $O(kmn^8)$. In fact, the complexity can be reduced to $O(kmn^6)$ if we adapt the trick used in \cite{Kie11}. We see that the bottleneck is to check whether an operator is linearly independent to a set of operators which changes not so often. Another observation is that whence an operator $\varrho$ is checked to be linearly independent to $\mathfrak{B}$, the only operation is to add it into $\mathfrak{B}$. We could make the two things mentioned above more ``balanced'' in time.
    To improve the time complexity, we introduce the inner product of operators $A$ and $B$ defined by:
    \[
        \langle A, B \rangle = \operatorname{tr}(A^\dag B) = \sum_{i=1}^n \sum_{j=1}^n A^*_{ij} B_{ij},
    \]
    where $c^*$ is the conjugate of a complex number $c$. It needs $O(n^2)$ time to compute.
    We use a so-called ``lazy'' Gram-Schmidt process to maintain the orthogonal set $\mathcal{O}$ with respect to $\mathfrak{B}$ such that $\operatorname{span} \mathfrak B = \operatorname{span} \mathcal O$, as follows:
    \begin{enumerate}
      \item Initially, $\mathfrak{B} = \emptyset$. Set $\mathcal{O} = \emptyset$.
      \item When checking whether an operator $\varrho$ is linearly independent to $\mathfrak B$, we only need to check whether $\varrho$ is linearly independent to $\mathcal O$. Because $\mathcal O$ is an orthogonal set, i.e. all elements in $\mathcal O$ are pairwise orthogonal, we conclude that $\varrho$ is not linearly independent to $\mathcal O$ if and only if
          \begin{equation} \label{eq3}
            \langle \varrho, \varrho \rangle = \sum_{\varrho' \in \mathcal O} \frac {\abs{\langle \varrho', \varrho \rangle}^2} {\langle \varrho', \varrho' \rangle},
          \end{equation}
          which needs $O(n^4)$ time to check.
      \item When $\varrho$ is linearly independent to $\mathfrak B$, as well as $\mathcal O$, we have to add it into $\mathfrak B$ and maintain $\mathcal O$ to meet $\operatorname{span} \mathfrak B = \operatorname{span} \mathcal O$. Let
          \begin{equation} \label{eq4}
            \hat\varrho = \varrho - \sum_{\varrho' \in \mathcal O} \frac {\langle \varrho', \varrho \rangle} {\langle \varrho', \varrho' \rangle} \varrho',
          \end{equation}
          then $\hat\varrho$ is orthogonal to $\mathcal O$ and add $\hat\varrho$ into $\mathcal{O}$, which needs $O(n^4)$ time to compute.
    \end{enumerate}
    With this observation, the ``while'' loop of Algorithm \ref{algo2} can be modified and summarized as follows:
    \begin{enumerate}
      \item Check whether $\mathfrak{B}_l \cup \{ \rho_{b|(a, \mathcal{S})} \}$ is linearly independent using Eq. (\ref{eq3}) in $O(n^4)$ time.
      \item Add $\rho_{b|(a, \mathcal{S})}$ into $\mathfrak{B}_l$ using Eq. (\ref{eq4}) in $O(n^4)$ time (at most $n^2-1 = O(n^2)$ times).
    \end{enumerate}
    It is clear that the overall complexity is now reduced to $O(kmn^6)$.

    \subsection{Proof of Theorem \ref{thm-iqmm-eq}}

    This theorem is established based on Theorem \ref{thm-state-eq}. Let
    \[
        \mathfrak{D}^\mathcal{M}(\rho, m) = \{ \rho_{b|(a, \mathcal{S})}^{\mathcal{M}}: a \in \Sigma^*, \mathcal{S} \in \mathfrak{S}_a, b \in \Gamma^{\abs{\mathcal{S}}}, \abs{a}+\abs{\mathcal{S}} \leq m \},
    \]
    The key observation is that
    $ \mathfrak{D}^{\mathcal{M}_1 \oplus \mathcal{M}_2}(\rho_1 \oplus 0 - 0 \oplus \rho_2, m) $
    is diagonal, and thus
    \[
        \dim \operatorname{span} \mathfrak{D}^{\mathcal{M}_1 \oplus \mathcal{M}_2}(\rho_1 \oplus 0 - 0 \oplus \rho_2, m) \leq n_1^2+n_2^2.
    \]
    Then we can prove the theorem with the same techniques used in the proof of Theorem \ref{thm-state-eq}.

    \subsection{Proof of Theorem \ref{thm-decision}}

  We first note that Theorem \ref{thm-state-eq} implies that $\rho_s \sim \rho_t \Longleftrightarrow \rho_s \sim_{k} \rho_t$ for every $k \geq n^2-1$. Thus, Problem \ref{prob1} is a special case of Problem \ref{prob2} for $k = n^2-1$. Now we may assume that $k \leq n^2-1$ and only consider Problem \ref{prob2}.

    Let $\vec\rho$ be the vectorization of $\rho$, which is an $n^2$-dimensional vector with entries $\vec\rho_{(i-1)n+j} = \rho_{ij}$. For an $n \times n$ matrix $M$, we define an $n^2 \times n^2$ matrix $\hat M$ with entries:
    \[
        \hat M_{(i-1)n+j, (x-1)n+y} = M_{ix} M_{jy}^*.
    \]
    Then it holds that $\overrightarrow{M \rho M^\dag} = \hat M \vec \rho$. Moreover, let $\eta$ be the vectorization of trace, which is an $n^2$-dimensional vector $\eta_{(i-1)n+j} = \delta_{ij}$, where $$\delta_{ij} = \begin{cases} 1 & i = j \\ 0 & i \neq j \end{cases}$$ is the Kronecker delta. It is clear that $\operatorname{tr}(M \rho M^\dag) = \eta^\dag \hat M \vec \rho$.

    We further have the following proposition as a quantum analog of Proposition 10 in \cite{Kie14}:
    \begin{proposition} \label{prop-eq-k}
        Let $\mathcal{M}_1 = (\Sigma, \Gamma, \mathcal{H}^{(1)}, U^{(1)}, M^{(1)})$ and $\mathcal{M}_2 = (\Sigma, \Gamma, \mathcal{H}^{(2)}, U^{(2)}, M^{(2)})$ be two QMMs with initial states $\rho_1$ and $\rho_2$, respectively, let $k$ be a positive integer, and let $n_1 = \dim \mathcal{H}_1$, $n_2 = \dim \mathcal{H}_2$ and $n = n_1^2+n_2^2$. Then
        $(\mathcal{M}_1, \rho_1) \sim_k (\mathcal{M}_2, \rho_2)$ if and only if there are $n \times n$ matrices $F^{(0)}, F^{(1)}, \dots, F^{(k)}$, $A_\sigma^{(0)}, A_\sigma^{(1)}, \dots, A_\sigma^{(k)}$ for every $\sigma \in \Sigma$ and $A_\gamma^{(0)}, A_\gamma^{(1)}, \dots, A_\gamma^{(k-1)}$ for every $\gamma \in \Gamma$ such that
        \begin{enumerate}
          \item $F_{\cdot, 1}^{(0)} = \begin{bmatrix} \vec \rho_1 \\ \vec \rho_2 \end{bmatrix}$.
          \item $\eta^\dag F^{(l)} = 0$ for $0 \leq l \leq k$, where $\eta = \begin{bmatrix} \eta_1 \\ -\eta_2 \end{bmatrix}$, $\eta_1$ and $\eta_2$ are the vectorizations of trace for $\mathcal{M}_1$ and $\mathcal{M}_2$, respectively.
          \item For every $\sigma \in \Sigma$,
          \[
            \begin{bmatrix}
                \hat U_\sigma^{(1)} & 0 \\
                0 & \hat U_\sigma^{(2)}
            \end{bmatrix} F^{(l)} = F^{(l)} A_\sigma^{(l)}
          \]
          for $0 \leq l \leq k$.
          \item For every $\gamma \in \Gamma$,
          \[
            \begin{bmatrix}
                \hat M_\gamma^{(1)} & 0 \\
                0 & \hat M_\gamma^{(2)}
            \end{bmatrix} F^{(l)} = F^{(l+1)} A_\gamma^{(l)}
          \]
          for $0 \leq l < k$.
        \end{enumerate}
    \end{proposition}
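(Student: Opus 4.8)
The plan is to read the four conditions as asserting that the columns of each $F^{(l)}$ span the \emph{level-$l$ forward space}, namely the linear span of all vectorised unnormalised states reachable from the joint initial vector $\begin{bmatrix}\vec\rho_1 \\ \vec\rho_2\end{bmatrix}$ after performing \emph{exactly} $l$ measurements, and then to prove both implications through this correspondence (this mirrors Proposition~10 of \cite{Kie14}). First I would record the vectorisation identities already set up: writing $\hat{U}_\sigma^{\oplus} = \begin{bmatrix}\hat U_\sigma^{(1)} & 0 \\ 0 & \hat U_\sigma^{(2)}\end{bmatrix}$ and $\hat{M}_\gamma^{\oplus} = \begin{bmatrix}\hat M_\gamma^{(1)} & 0 \\ 0 & \hat M_\gamma^{(2)}\end{bmatrix}$, for any experiment $(a,\mathcal{S})$ with $|\mathcal{S}|=l$ and output $b$ the difference of the two machines' probabilities equals $\eta^\dag \hat{V}_{b|a,\mathcal{S}}^{\oplus}\begin{bmatrix}\vec\rho_1 \\ \vec\rho_2\end{bmatrix}$, where $\hat{V}_{b|a,\mathcal{S}}^{\oplus}$ is the corresponding ordered product of $\hat U_\sigma^{\oplus}$'s and $\hat M_\gamma^{\oplus}$'s. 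The two structural facts I would lean on are that $\eta^\dag$ is the trace-difference functional, invariant under the trailing unitaries since $\operatorname{tr}(U X U^\dag)=\operatorname{tr}(X)$, and that each forward space has dimension at most $n = n_1^2+n_2^2$, exactly as exploited in the proof of Theorem~\ref{thm-iqmm-eq}, so it can be packed into an $n\times n$ matrix.

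For the ``if'' direction I would argue that a solution of the system traps every reachable vector inside the $F^{(l)}$'s. By induction on the length of the experiment one shows that for any $(a,\mathcal{S})$ with $|\mathcal{S}| = l \le k$ and any $b$ the vector $\hat{V}_{b|a,\mathcal{S}}^{\oplus}\begin{bmatrix}\vec\rho_1\\\vec\rho_2\end{bmatrix}$ lies in the column space of $F^{(l)}$: the base case is Condition~1; applying a symbol $\sigma$ keeps us in the same column space by Condition~3, and performing a measurement $\gamma$ moves us from the column space of $F^{(l)}$ into that of $F^{(l+1)}$ by Condition~4, the representing coefficient vector being transformed in each case by multiplication with the relevant $A^{(l)}$ matrix. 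Since a closed scheduler with $l$ measurements produces its final vector in the column space of $F^{(l)}$, Condition~2 ($\eta^\dag F^{(l)} = 0$) forces the probability difference to vanish for every experiment with at most $k$ measurements, which is exactly $(\mathcal{M}_1,\rho_1)\sim_k(\mathcal{M}_2,\rho_2)$.

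For the ``only if'' direction I would build the witnesses explicitly from the forward spaces. Take $F^{(l)}$ to be any $n\times n$ matrix whose columns span the level-$l$ forward space, choosing the first column of $F^{(0)}$ to be $\begin{bmatrix}\vec\rho_1\\\vec\rho_2\end{bmatrix}$ (possible since the initial vector is itself a generator of the level-$0$ space); this yields Condition~1. Because each $\hat U_\sigma^{\oplus}$ maps the level-$l$ space into itself and each $\hat M_\gamma^{\oplus}$ maps the level-$l$ space into the level-$(l+1)$ space, the columns of $\hat U_\sigma^{\oplus} F^{(l)}$ and of $\hat M_\gamma^{\oplus} F^{(l)}$ lie in the spans of $F^{(l)}$ and $F^{(l+1)}$ respectively; expressing each such column as a linear combination of the generating columns produces the coefficient matrices $A_\sigma^{(l)}$ and $A_\gamma^{(l)}$, giving Conditions~3 and 4. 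Finally, since every column of $F^{(l)}$ is a reachable vector for a closed scheduler with $l\le k$ measurements, the assumed $\sim_k$ equivalence makes its trace difference zero, i.e. $\eta^\dag F^{(l)} = 0$, which is Condition~2.

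The main obstacle I anticipate is the careful bookkeeping of the measurement-count stratification against the fixed matrix size. One must check that the ``exactly $l$ measurements'' indexing matches the stated ranges ($0\le l\le k$ for $F$ and $A_\sigma$, $0\le l<k$ for $A_\gamma$), that each forward space genuinely has dimension at most $n$ so that a spanning set fits into an $n\times n$ matrix (padding with repeated or zero columns when the true dimension is smaller), and that pinning the first column of $F^{(0)}$ does not conflict with spanning. The two invariance claims used in the ``only if'' part — that $\hat U_\sigma^{\oplus}$ preserves each level and that $\hat M_\gamma^{\oplus}$ raises it by exactly one — together with the unitary-invariance of $\eta^\dag$, are where the block-diagonal direct-sum structure really enters, and this is the step I would write out in the most detail.
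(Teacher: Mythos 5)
Your proposal is correct and follows essentially the same route as the paper: for the ``only if'' direction you pack spanning sets of the measurement-stratified reachable spaces (of dimension at most $n=n_1^2+n_2^2$, thanks to the block structure of $\rho_1\oplus\rho_2$) into the matrices $F^{(l)}$ and read off the $A$-matrices as change-of-basis coefficients, and for the ``if'' direction you propagate column-space membership by induction and kill the trace difference with $\eta^\dag F^{(l)}=0$, exactly as in the paper's proof. The only cosmetic difference is your ``exactly $l$ measurements'' stratification versus the paper's ``at most $l$'' (nested) one; both satisfy Conditions 3 and 4, so nothing of substance changes.
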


    \begin{proof}
``$\Longrightarrow$'' If $(\mathcal{M}_1, \rho_1) \sim_k (\mathcal{M}_2, \rho_2)$, then
\[
    \operatorname{tr}((\rho_1)^{\mathcal{M}_1}_{b|a,\mathcal{S}}) = \operatorname{tr}((\rho_2)^{\mathcal{M}_2}_{b|a,\mathcal{S}})
\]
for every $a \in \Sigma^*$, $\mathcal{S} \in \mathfrak{S}_a$ and $b \in \Gamma^{\abs{\mathcal{S}}}$ with $\abs{\mathcal{S}} \leq k$. Let $\rho = \rho_1 \oplus \rho_2$ and
\[
    \mathfrak{D}_l(\rho, m) = \{ \rho_{b|a,\mathcal{S}}: a \in \Sigma^*, \mathcal{S} \in \mathfrak{S}_a, b \in \Gamma^{\abs{\mathcal{S}}}, \abs{\mathcal{S}} \leq l \}.
\]
Then
\[
    \rho^{\mathcal{M}_1 \oplus \mathcal{M}_2}_{b|a,\mathcal{S}} = (\rho_1)^{\mathcal{M}_1}_{b|a,\mathcal{S}} \oplus (\rho_2)^{\mathcal{M}_2}_{b|a,\mathcal{S}}.
\]
By Theorem \ref{thm-iqmm-eq}, we have: $$(\mathcal{M}_1, \rho_1) \sim_k (\mathcal{M}_2, \rho_2) \Longleftrightarrow (\mathcal{M}_1, \rho_1) \sim_k^{n-1} (\mathcal{M}_2, \rho_2).$$ Proof of Theorem \ref{thm-state-eq} Part 2 reveals that $$\mathfrak{D}_0(\rho, n-1) \subseteq \mathfrak{D}_1(\rho, n-1) \subseteq \dots \subseteq \mathfrak{D}_k(\rho, n-1).$$ And note that $\dim \operatorname{span} \mathfrak{D}_k(\rho, n-1) \leq n$.
For every $0 \leq l \leq k$, let $\rho_l^{(1)}, \dots, \rho_l^{(n)} \in \mathfrak{D}_l(\rho, n-1)$ with $\operatorname{span} \mathfrak{D}_l(\rho, n-1) = \operatorname{span} \{ \rho_l^{(1)}, \dots, \rho_l^{(n)} \}$. In particular, guarantee that $\rho_0^{(1)} = \rho$.
Let $\rho_l^{(i)} = (\rho_1)_l^{(i)} \oplus (\rho_2)_l^{(i)}$ for $1 \leq i \leq n$. Then $\operatorname{tr}((\rho_1)_l^{(i)}) = \operatorname{tr}((\rho_2)_l^{(i)})$. We set $$f_i^{(l)} = \begin{bmatrix} (\vec \rho_1)_l^{(i)} \\ (\vec \rho_2)_l^{(i)} \end{bmatrix}$$ and $F^{(l)} = \begin{bmatrix} f_1^{(l)} \dots f_{n}^{(l)} \end{bmatrix}$.

\textbf{Part 1}. It is easy to verify that $$F^{(0)}_{\cdot, 1} = f_1^{(0)} = \begin{bmatrix} \vec \rho_1 \\ \vec \rho_2 \end{bmatrix}.$$

\textbf{Part 2}. We have: $$\eta^\dag f_i^{(l)} = \eta_1^\dag (\vec \rho_1)_l^{(i)} - \eta_2^\dag (\vec \rho_2)_l^{(i)} = \operatorname{tr}((\rho_1)_l^{(i)}) - \operatorname{tr}((\rho_2)_l^{(i)}) = 0.$$ Thus $\eta^\dag F^{(l)} = 0$.

{\vskip 4pt}

\textbf{Part 3}. If $\rho_{b|a, \mathcal{S}} \in \mathfrak{D}_l(\rho, n-1)$, then $\rho_{b|a\sigma, \mathcal{S}} \in \operatorname{span} \mathfrak{D}_l(\rho, n-1)$ for every $\sigma \in \Sigma$. Thus for every $0 \leq l \leq k$, $1 \leq i \leq n$ and $\sigma \in \Sigma$, there are coefficients $\alpha_{ij}$ such that
\begin{align*}
    \begin{bmatrix}
        \hat U^{(1)}_\sigma & 0 \\
        0 & \hat U^{(2)}_\sigma
    \end{bmatrix} f_i^{(l)}
    & = \begin{bmatrix}
        \hat U^{(1)}_\sigma (\vec \rho_1)_l^{(i)} \\
        \hat U^{(2)}_\sigma (\vec \rho_2)_l^{(i)}
    \end{bmatrix} \\
    & = \sum_j \alpha_{ij} \begin{bmatrix}
        (\vec \rho_1)_l^{(i)} \\
        (\vec \rho_2)_l^{(i)}
    \end{bmatrix} \\
    & = \sum_j \alpha_{ij} f_j^{(l)} \\
    & = \begin{bmatrix} f_1^{(l)} \dots f_{n}^{(l)} \end{bmatrix} \begin{bmatrix}
        \alpha_{i1} \\
        \dots \\
        \alpha_{in}
    \end{bmatrix} \\
    & = F^{(l)} \alpha_i.
\end{align*}
Set $A_\sigma^{(l)} = \begin{bmatrix} \alpha_1 & \dots & \alpha_n \end{bmatrix}$.

{\vskip 4pt}

\textbf{Part 4}. It holds similarly to Part 3.

``$\Longleftarrow$''. It can be proved by induction that for every $a \in \Sigma^*$, $\mathcal{S} \in \mathfrak{S}_a$ and $b \in \Gamma^{\abs{\mathcal{S}}}$ with $\abs{\mathcal{S}} \leq k$,
\[
    \begin{bmatrix}
        \hat V_{b|a,\mathcal{S}}^{(1)} & 0 \\
        0 & \hat V_{b|a,\mathcal{S}}^{(2)}
    \end{bmatrix} F^{(0)} = F^{(\abs{\mathcal{S}})} A_{b|a,\mathcal{S}},
\]
where $$A_{b|a,\mathcal{S}} = A_{a_1}^{(0)} A_{b_1}^{(0)} A_{a_2}^{(1)} A_{b_2}^{(1)} \dots A_{a_{\abs{\mathcal{S}}}}^{(\abs{\mathcal{S}}-1)} A_{b_{\abs{\mathcal{S}}}}^{(\abs{\mathcal{S}}-1)} A_{a_{\abs{\mathcal{S}}+1}}^{(\abs{\mathcal{S}})}.$$
Then
\begin{align*}
    \operatorname{tr}((\rho_1)_{b|a,\mathcal{S}}) - \operatorname{tr}((\rho_2)_{b|a,\mathcal{S}})
    & = \eta_1^\dag (\vec \rho_1)_{b|a,\mathcal{S}} - \eta_2^\dag (\vec \rho_2)_{b|a,\mathcal{S}} \\
    & = \begin{bmatrix}
        \eta_1 \\
        -\eta_2
    \end{bmatrix}^\dag \begin{bmatrix}
        \hat V_{b|a,\mathcal{S}}^{(1)} & 0 \\
        0 & \hat V_{b|a,\mathcal{S}}^{(2)}
    \end{bmatrix} \begin{bmatrix}
        \vec \rho_1 \\
        \vec \rho_2
    \end{bmatrix} \\
    & = \eta^\dag \begin{bmatrix}
        \hat V_{b|a,\mathcal{S}}^{(1)} & 0 \\
        0 & \hat V_{b|a,\mathcal{S}}^{(2)}
    \end{bmatrix} F^{(0)} e_1 \\
    & = \eta^\dag F^{(\abs{\mathcal{S}})} A_{b|a,\mathcal{S}} e_1 \\
    & = 0,
\end{align*}
where $e_1 = (1, 0, 0, \dots, 0)^T$.
\end{proof}

    We also need the following theorem from \cite{Can88}, \cite{Ren88}, \cite{Ren92} and \cite{Bas06}:
    \begin{theorem}\label{thm-existence}
        Given a set $\mathcal{P} = \{f_1, \dots, f_m\}$ of $m$ polynomials of degree $d$ in $n$ variables $x = (x_1, \dots, x_n)$. Let $\phi(\mathcal{P}, x)$ is a Boolean function of inequalities of the form $f_i(x) > 0$ or $f_i(x) \geq 0$, and let $S = \{ x \in \mathbb{R}^{n}: \phi(\mathcal{P}, x) \}$. Then:\begin{enumerate}\item there is an algorithm to decide whether $S = \emptyset$ in \textbf{PSPACE} \cite{Can88}, \cite{Ren92}. Moreover, it can be decided in $(md)^{O(n)}$ time \cite{Ren88}, and \item if $S \neq \emptyset$ then a sample $x \in S$ can be found in $\tau d^{O(n)}$ space \cite{Bas06}, where $\tau$ is the size of the coefficients of the polynomials.\end{enumerate}
    \end{theorem}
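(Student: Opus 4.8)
The plan is to prove all three assertions through the \emph{critical point method}, which reduces deciding emptiness of a semialgebraic set to solving a zero-dimensional polynomial system, and then to organise the ensuing symbolic computation so that it meets the claimed space and time bounds. The guiding principle for the \textbf{PSPACE} claim is that although the intermediate algebraic data (number of solutions, degrees of eliminants, lengths of sign sequences) are exponential in $n$, every \emph{individual} datum --- a coefficient, a determinant, a single sign --- is computable in space polynomial in the input, so a machine can iterate over the exponentially many data one at a time while reusing its workspace.

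First I would reduce the Boolean structure to its geometric core. Since we only ask whether $S = \{ x : \phi(\mathcal{P}, x) \}$ is nonempty, it suffices to exhibit one point in each connected component of each realisable sign condition of $f_1, \dots, f_m$ and then test $\phi$ there by a truth-table lookup; so the task becomes producing a finite set of sample points meeting every connected component of every cell cut out by the signs of the $f_i$. The obstruction is that the relevant real varieties may be singular, lower-dimensional, or unbounded, which makes a naive critical-point argument fail. I would remove these degeneracies by passing to a non-archimedean extension $\mathbb{R}\langle \varepsilon \rangle$: intersect with a bounding sphere $\sum_i x_i^2 \le 1/\varepsilon^2$ to force boundedness, and replace each $f_i = 0$ by $f_i = \pm \varepsilon_i$ for generic infinitesimals to force smoothness and transversality. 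On the resulting smooth compact variety $V$ the restriction of a generic linear form attains its extrema, so its critical points --- the common zeros of the defining equations together with the vanishing of the appropriate maximal minors of the Jacobian --- form a zero-dimensional set meeting every connected component.

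Next I would solve this zero-dimensional system symbolically. A B\'ezout-type bound gives $d^{O(n)}$ complex solutions, and elimination (iterated resultants and subresultant sequences over the $n$ variables) produces a rational univariate representation: a univariate polynomial $q(T)$ of degree $D = d^{O(n)}$ together with parametrisations $x_i = v_i(T)/v_0(T)$ whose real roots encode the sample points. Each real root is pinned down by its \emph{Thom encoding}, the sign vector of $q$ and all its derivatives at that root, which determines it uniquely among the reals. I would then evaluate the signs of $f_1, \dots, f_m$ at each encoded root, check $\phi$, and finally specialise $\varepsilon \to 0$ to transfer a surviving witness back to $\mathbb{R}^n$; this specialisation is the technically delicate part, since one must argue that nonemptiness is preserved under the limit via a real-closed-field transfer or Puiseux-series argument.

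Finally I would account for the resources. For part~1, determinants lie in $\mathrm{NC} \subseteq \textbf{PSPACE}$, so each coefficient of $q$ and each subresultant sign is poly-space computable; real-root counting via Sturm and Sylvester--Habicht sequences and the sign evaluations at a Thom-encoded root are likewise poly-space, and since the procedure touches the $d^{O(n)}$ candidates sequentially with bounded reusable workspace, the emptiness test is in \textbf{PSPACE}, while tightening the arithmetic (Renegar) bounds the total work by $(md)^{O(n)}$ operations. For part~2, the sample point is returned as the rational univariate representation plus the Thom encoding of one selected real root; with coefficients of bit-size $\tau$ and $d^{O(n)}$ of them, this certificate fits in $\tau d^{O(n)}$ space (Basu). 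The main obstacle, and the real content underlying the cited works, is exactly this degeneracy control: making the deformation-plus-bounding-sphere construction and the subsequent $\varepsilon \to 0$ specialisation rigorous \emph{while} keeping every step inside polynomial space, which forbids storing the exponentially large eliminants explicitly and forces them to be recomputed on demand.
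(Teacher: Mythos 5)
You should note at the outset that the paper contains no proof of this statement to compare against: Theorem \ref{thm-existence} is imported verbatim from the literature, with part 1 attributed to \cite{Can88}, \cite{Ren92} (the \textbf{PSPACE} bound) and \cite{Ren88} (the $(md)^{O(n)}$ time bound), and part 2 to \cite{Bas06}. Judged against those sources, your proposal is a faithful reconstruction of the route they actually take: the critical point method after infinitesimal deformation and bounding (to force compactness, smoothness and transversality), a rational univariate representation whose $d^{O(n)}$ real roots are pinned down by Thom encodings, sign evaluation per realisable sign condition --- which correctly disposes of the arbitrary Boolean combination $\phi$, since its truth value depends only on the sign vector --- and a space analysis based on recomputing individual coefficients on demand rather than storing exponential-size eliminants. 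Two caveats keep this a plan rather than a proof, and they are exactly the two you flag yourself: the $\varepsilon \to 0$ specialisation back to $\mathbb{R}^n$ requires the machinery of real closed extensions and Puiseux series (one must arrange the deformation so that a bounded witness over $\mathbb{R}\langle\varepsilon\rangle$ survives taking limits, which occupies a substantial part of \cite{Bas06}), and the \textbf{PSPACE} claim requires verifying that every primitive operation --- a single bit of a subresultant coefficient, a Sturm--Habicht sign at a Thom-encoded root --- is computable in space polynomial in the input even though the intermediate integers have exponentially many bits, which is the actual technical content of \cite{Can88}. Since your resource accounting is consistent with the cited bounds and you identify the genuinely delicate steps rather than eliding them, the proposal is correct in outline; filling it in completely would amount to reproducing \cite{Can88}, \cite{Ren88} and \cite{Bas06}, which is precisely why the paper cites rather than proves this theorem.
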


   Now we are ready to prove the Theorem \ref{thm-decision}.
   The conditions of Proposition \ref{prop-eq-k} on $\mathcal{M}_2$, including that it be a QMM, can be phrased in $O((\abs{\Sigma}+\abs{\Gamma})k(n_1+n_2)^4)$ polynomials of degree $d = 3$ in $O((\abs{\Sigma}+\abs{\Gamma})k(n_1+n_2)^4)$ variables. By Theorem \ref{thm-existence}, Problem \ref{prob2} is solvable in \textbf{PSPACE}.

    It is noted that Problem \ref{prob1} can be phrased in $O((\abs{\Sigma}+\abs{\Gamma})(n_1+n_2)^6)$ polynomials of degree $d = 3$ in $O((\abs{\Sigma}+\abs{\Gamma})(n_1+n_2)^6)$ variables with $k = O((n_1+n_2)^2)$. In fact, we can make it more efficient in $O((\abs{\Sigma}+\abs{\Gamma})(n_1+n_2)^4)$ polynomials of degree $d = 3$ in $O((\abs{\Sigma}+\abs{\Gamma})(n_1+n_2)^4)$ variables (see \ref{app3} for more details).

\section{Conclusion}\label{conclusion}

To offer effective tools for verification of quantum circuits, we define the model of quantum Mealy machines. Two efficient algorithms for checking equivalence of two states in the same quantum Mealy machines and for checking equivalence of two quantum Mealy machines are developed. We also prove that the minimisation problem for quantum Mealy machines can be solved in \textbf{PSPACE}.

Further future research, we plan to extend the ideas introduced and the results obtained in this paper along the following two lines:\begin{itemize}\item Study the equivalence checking problem for quantum programs, which are much harder to deal with, in particular in the case where loops and recursion are present \cite{Ying16}.
\item Incorporate the techniques developed in this paper with those in the previous work on model-checking of quantum systems \cite{YL14}, \cite{YSG} so that they can be applied to larger quantum circuits or more complicated properties than equivalence.
\end{itemize}
\addcontentsline{toc}{section}{References}

\appendix

\renewcommand{\thetheorem}{\Alph{section}.\arabic{theorem}}

\section{A simple proof of Part 1 of Theorem \ref{thm-state-eq}} \label{app1}

  Part 1 of Theorem \ref{thm-state-eq} is a corollary of Part 2 of the same theorem. Here, we provide a simple and direct proof of it. Let $\mathcal{M} = (\Sigma, \Gamma, \mathcal{H}, U, M)$, and let $\rho$ be an Hermitian operator. Define
    \[
        \mathfrak{D}(\rho, m) = \{ \rho_{b|a, \mathcal{S}}^{\mathcal{M}}: a \in \Sigma^*, \mathcal{S} \in \mathfrak{S}_a, b \in \Gamma^{\abs{\mathcal{S}}}\ {\rm with}\ \abs{a}+\abs{\mathcal{S}} \leq m \},
    \]
    where $$\rho_{b|a, \mathcal{S}}^{\mathcal{M}} = V_{b|a, \mathcal{S}} \rho V_{b|a, \mathcal{S}}^\dag\ {\rm and}\ V_{b|a, \mathcal{S}} = U_{a_{\abs{\mathcal{S}}+1}} M_{b_{\abs{\mathcal{S}}}} U_{a_{\abs{\mathcal{S}}}} \dots M_{b_1} U_{a_1}.$$
    Especially, $\mathfrak{D}(\rho, 0) = \{ \rho \}$. Then it is easy to see that for every $m \in \mathbb{N}$,
    \begin{enumerate}
      \item $\mathfrak{D}(\rho, m) \subseteq \mathfrak{D}(\rho, m+1)$ and thus $\operatorname{span}\mathfrak{D}(\rho, m) \subseteq \operatorname{span}\mathfrak{D}(\rho, m+1)$.
      \item $\dim \operatorname{span}\mathfrak{D}(\rho, m) \leq n^2$.
    \end{enumerate} Furthermore, we have:
    \begin{proposition} \label{prop-span}
        If $\operatorname{span}\mathfrak{D}(\rho, m) = \operatorname{span}\mathfrak{D}(\rho, m+1)$ for some $m \in \mathbb{N}$, then $\operatorname{span}\mathfrak{D}(\rho, m) = \operatorname{span}\mathfrak{D}(\rho, m+k)$ for every $k \in \mathbb{N}$.
    \end{proposition}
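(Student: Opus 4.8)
The plan is to mirror the argument used for Lemma \ref{prop-span-k}, but in the simpler setting where no bound on the number of measurements is tracked, so that a single index suffices. I would prove the statement by induction on $k$, establishing $\operatorname{span}\mathfrak{D}(\rho, m) = \operatorname{span}\mathfrak{D}(\rho, m+k)$ for all $k \in \mathbb{N}$. The base case $k = 0$ is trivial, and $k = 1$ is exactly the hypothesis. For the inductive step, assuming $\operatorname{span}\mathfrak{D}(\rho, m) = \operatorname{span}\mathfrak{D}(\rho, m+k)$, I would take an arbitrary generator $\rho_{b|a, \mathcal{S}}^{\mathcal{M}} \in \mathfrak{D}(\rho, m+k+1)$, so that $\abs{a}+\abs{\mathcal{S}} \leq m+k+1$, and show it lies in $\operatorname{span}\mathfrak{D}(\rho, m)$.

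The core of the argument is a case analysis that peels one symbol off the experiment. If $\mathcal{S}$ is closed, i.e. its last scheduled point equals $\abs{a}$, then the final operation is a measurement and $\rho_{b|a, \mathcal{S}}^{\mathcal{M}} = M_{b[\abs{b}]}\,\rho_{b^-|a, \mathcal{S}^-}^{\mathcal{M}}\,M_{b[\abs{b}]}^\dag$, where $\mathcal{S}^-$ and $b^-$ drop the last scheduler point and the last output symbol; the inner operator has experiment size at most $m+k$, hence lies in $\operatorname{span}\mathfrak{D}(\rho, m)$ by the induction hypothesis. If $\mathcal{S}$ is not closed, the final operation is a unitary and $\rho_{b|a, \mathcal{S}}^{\mathcal{M}} = U_{a[\abs{a}]}\,\rho_{b|a^-, \mathcal{S}}^{\mathcal{M}}\,U_{a[\abs{a}]}^\dag$ with $a^- = a[1:\abs{a}-1]$, and again the inner operator has size at most $m+k$. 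In both cases conjugation by the fixed operator ($M_{b[\abs{b}]}$ or $U_{a[\abs{a}]}$) is linear, so it maps $\operatorname{span}\mathfrak{D}(\rho, m)$ into the span of the conjugated set; since conjugating a size-$\leq m$ generator extends its experiment by exactly one step, that image sits inside $\operatorname{span}\mathfrak{D}(\rho, m+1)$.

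The argument then closes by invoking the hypothesis $\operatorname{span}\mathfrak{D}(\rho, m) = \operatorname{span}\mathfrak{D}(\rho, m+1)$, which collapses $\operatorname{span}\mathfrak{D}(\rho, m+1)$ back to $\operatorname{span}\mathfrak{D}(\rho, m)$ and yields $\rho_{b|a, \mathcal{S}}^{\mathcal{M}} \in \operatorname{span}\mathfrak{D}(\rho, m)$. Ranging over all generators gives $\operatorname{span}\mathfrak{D}(\rho, m+k+1) \subseteq \operatorname{span}\mathfrak{D}(\rho, m)$, and the reverse inclusion is immediate from the monotonicity property $\operatorname{span}\mathfrak{D}(\rho, m) \subseteq \operatorname{span}\mathfrak{D}(\rho, m+k+1)$, completing the induction.

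The only point requiring care — and the one I would treat as the main obstacle — is the bookkeeping of experiment sizes in the two cases: I must verify that peeling one symbol really drops $\abs{a}+\abs{\mathcal{S}}$ by exactly one and that re-conjugating raises it by exactly one, so that the induction hypothesis at level $m+k$ and the single-step hypothesis at level $m$ can be chained precisely as above. Everything else reduces to the linearity of conjugation and the two stated monotonicity/dimension properties of $\mathfrak{D}(\rho, \cdot)$.
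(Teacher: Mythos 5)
Your proposal is correct and follows essentially the same route as the paper's proof of Proposition \ref{prop-span}: induction on $k$, with the inductive step peeling off the last operation of the experiment via the same two-case split (scheduler closed versus not closed), using linearity of conjugation by $M_{b[\abs{b}]}$ or $U_{a[\abs{a}]}$ and the single-step hypothesis to collapse $\operatorname{span}\mathfrak{D}(\rho, m+1)$ back to $\operatorname{span}\mathfrak{D}(\rho, m)$. The size bookkeeping you flag as the main point of care works out exactly as you describe, and it is handled identically in the paper.
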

    \begin{proof}
        We prove it by induction on $k$.

        {\vskip 4pt}

        \textbf{Basis}. It is trivial when $k = 1$.

        {\vskip 4pt}

        \textbf{Induction}. Suppose it is true for some $k \geq 1$ that $$\operatorname{span}\mathfrak{D}(\rho, m) = \operatorname{span}\mathfrak{D}(\rho, m+k).$$ For every $\rho_{b|a, \mathcal{S}}^{\mathcal{M}} \in \mathfrak{D}(\rho, m+k+1)$ for some $a \in \Sigma^*$, $\mathcal{S} \in \mathfrak{S}_a$ and $b \in \Gamma^{\abs{\mathcal{S}}}$ with $\abs{a}+\abs{\mathcal{S}} \leq m+k+1$, we consider the following two cases:

{\vskip 4pt}

        \textbf{Case 1}. $\mathcal{S}$ is closed, i.e. $s_{\abs{\mathcal{S}}} = \abs{a}$: Let $\mathcal{S}^- = \{ s_1, s_2, \dots, s_{\abs{\mathcal{S}}-1} \}$ and $b^- = b[1:\abs{b}-1]$. Then $$\rho_{b|a, \mathcal{S}}^{\mathcal{M}} = M_{b[\abs{b}]} \rho_{b^-|a, \mathcal{S}^-}^{\mathcal{M}}  M_{b[\abs{b}]}^\dag.$$ By the assumption, $\rho_{b^-|a, \mathcal{S}^-}^\mathcal{M} \in \mathfrak{D}(\rho, m+k) \subseteq \operatorname{span} \mathfrak{D}(\rho, m+k) = \operatorname{span} \mathfrak{D}(\rho, m)$. We have
        \begin{align*}
            \rho_{b|a, \mathcal{S}}^{\mathcal{M}}
            & = M_{b[\abs{b}]} \rho_{b^-|a, \mathcal{S}^-}^{\mathcal{M}}  M_{b[\abs{b}]}^\dag \\
            & \in M_{b[\abs{b}]} \left( \operatorname{span} \mathfrak{D}(\rho, m) \right) M_{b[\abs{b}]}^\dag \\
            & = \operatorname{span} \left( M_{b[\abs{b}]} \mathfrak{D}(\rho, m) M_{b[\abs{b}]}^\dag \right) \\
            & \subseteq \operatorname{span} \mathfrak{D}(\rho, m+1) = \operatorname{span} \mathfrak{D}(\rho, m).
        \end{align*}

        \textbf{Case 2}. $\mathcal{S}$ is not closed: Let $a^- = a[1:\abs{a}-1]$. Then $$\rho_{b|a, \mathcal{S}}^{\mathcal{M}} = U_{a[\abs{a}]} \rho_{b|a^-, \mathcal{S}}^{\mathcal{M}} U_{a[\abs{a}]}^\dag.$$ By the assumption, $\rho_{b|a^-, \mathcal{S}}^{\mathcal{M}} \in \mathfrak{D}(\rho, m+k) \subseteq \operatorname{span} \mathfrak{D}(\rho, m+k) = \operatorname{span} \mathfrak{D}(\rho, m)$. We have
        \begin{align*}
            \rho_{b|a, \mathcal{S}}^{\mathcal{M}}
            & = U_{a[\abs{a}]} \rho_{b|a^-, \mathcal{S}}^{\mathcal{M}} U_{a[\abs{a}]}^\dag \\
            & \in U_{a[\abs{a}]} \left( \operatorname{span} \mathfrak{D}(\rho, m) \right) U_{a[\abs{a}]}^\dag \\
            & = \operatorname{span} \left( U_{a[\abs{a}]} \mathfrak{D}(\rho, m) U_{a[\abs{a}]}^\dag \right) \\
            & \subseteq \operatorname{span} \mathfrak{D}(\rho, m+1) = \operatorname{span} \mathfrak{D}(\rho, m).
        \end{align*}

        Both cases together yield $\rho_{b|a, \mathcal{S}} \in \operatorname{span} \mathfrak{D}(\rho, m)$, and thus $\operatorname{span} \mathfrak{D}(\rho, m+k+1) \subseteq \operatorname{span} \mathfrak{D}(\rho, m)$. Because $\operatorname{span} \mathfrak{D}(\rho, m) \subseteq \operatorname{span} \mathfrak{D}(\rho, m+k+1)$, we conclude that $\operatorname{span} \mathfrak{D}(\rho, m+k+1) = \operatorname{span} \mathfrak{D}(\rho, m)$.

        {\vskip 4pt}

        \textbf{Conclusion}. $\operatorname{span} \mathfrak{D}(\rho, m) = \operatorname{span} \mathfrak{D}(\rho, m+k)$ for all $k \in \mathbb{N}$.
    \end{proof}

    Proposition \ref{prop-span} claims that $\dim \operatorname{span} \mathfrak{D}(\rho, m)$ either strictly increases (at least $1$) or reaches the maximum value. Note that $\dim \operatorname{span} \mathfrak{D}(\rho, 0) = 1$, we have:

    \begin{proposition} \label{col-span}
    $\operatorname{span} \mathfrak{D}(\rho, n^2-1) \supseteq \operatorname{span} \mathfrak{D}(\rho, m)$ for every $m \in \mathbb{N}$.
    \end{proposition}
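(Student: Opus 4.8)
The plan is to track the non-decreasing sequence of dimensions $d_m := \dim \operatorname{span}\mathfrak{D}(\rho, m)$ and to show it must reach its final value no later than index $n^2-1$. First I would record the two structural facts already available: by the nesting property $\operatorname{span}\mathfrak{D}(\rho, m) \subseteq \operatorname{span}\mathfrak{D}(\rho, m+1)$, the sequence $\{d_m\}$ is non-decreasing and bounded above by $n^2$; and Proposition \ref{prop-span} supplies the crucial dichotomy, namely that at each step either $d_{m+1} > d_m$, a strict increase of at least $1$, or $d_{m+1} = d_m$, in which case the span has already stabilised and $\operatorname{span}\mathfrak{D}(\rho, m') = \operatorname{span}\mathfrak{D}(\rho, m)$ for all $m' \geq m$.

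Next I would quantify how long strict growth can persist. Since $d_0 = 1$ and every strict increase contributes at least one, there must be a smallest index $m^*$ with $d_{m^*} = d_{m^*+1}$; such an index exists because otherwise $\{d_m\}$ would increase without bound, contradicting $d_m \leq n^2$. Counting the strict increases over $0, 1, \dots, m^*$ starting from $d_0 = 1$ gives $m^* + 1 \leq d_{m^*} \leq n^2$, whence $m^* \leq n^2 - 1$. By Proposition \ref{prop-span}, this yields $\operatorname{span}\mathfrak{D}(\rho, m^*) = \operatorname{span}\mathfrak{D}(\rho, m)$ for every $m \geq m^*$, and in particular stabilisation has occurred by index $n^2-1$.

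Finally I would conclude by cases. For $m \leq n^2-1$, the nesting property immediately gives $\operatorname{span}\mathfrak{D}(\rho, m) \subseteq \operatorname{span}\mathfrak{D}(\rho, n^2-1)$. For $m > n^2-1$, both $m$ and $n^2-1$ are at least $m^*$, so stabilisation yields $\operatorname{span}\mathfrak{D}(\rho, m) = \operatorname{span}\mathfrak{D}(\rho, m^*) = \operatorname{span}\mathfrak{D}(\rho, n^2-1)$. In either case $\operatorname{span}\mathfrak{D}(\rho, n^2-1) \supseteq \operatorname{span}\mathfrak{D}(\rho, m)$, which is exactly the claim.

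I do not expect a genuine obstacle here: the substantive work has already been done in Proposition \ref{prop-span}, and what remains is a pigeonhole-style counting argument on the dimension sequence. The only point demanding a little care is verifying that stabilisation is forced to occur by index $n^2-1$ rather than merely at some unspecified finite index, and this is precisely what the upper bound $d_m \leq n^2$ together with the initial value $d_0 = 1$ secures.
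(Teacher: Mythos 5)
Your proof is correct and follows essentially the same route as the paper: the paper also derives this proposition from Proposition \ref{prop-span} via the observation that $\dim \operatorname{span}\mathfrak{D}(\rho,m)$ starts at $1$, is bounded by $n^2$, and either strictly increases or stabilises permanently, forcing stabilisation by index $n^2-1$. You merely spell out the pigeonhole counting that the paper leaves implicit.
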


    Now it is sufficient to prove the following:
    \begin{proposition}
        $\rho_s \sim \rho_t \Longleftrightarrow \rho_s \sim^{n^2-1} \rho_t$.
    \end{proposition}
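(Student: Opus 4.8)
The plan is to reduce the claimed equivalence of output probabilities to a single linear condition---vanishing of the trace---and then to feed that condition through the span-stabilisation results already proved.

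First I would record the linearity observation that makes everything work. Setting $\rho = \rho_s - \rho_t$, each map $\sigma \mapsto V_{b|a,\mathcal{S}}\,\sigma\,V_{b|a,\mathcal{S}}^\dag$ is linear, so by linearity of the trace
$$\Pr\nolimits_{\rho_s}^{\mathcal{M}}(b|a,\mathcal{S}) - \Pr\nolimits_{\rho_t}^{\mathcal{M}}(b|a,\mathcal{S}) = \operatorname{tr}\!\left(\rho_{b|a,\mathcal{S}}^{\mathcal{M}}\right),$$
where $\rho_{b|a,\mathcal{S}}^{\mathcal{M}} = V_{b|a,\mathcal{S}}\,\rho\,V_{b|a,\mathcal{S}}^\dag$ lies in $\mathfrak{D}(\rho,\abs{a}+\abs{\mathcal{S}})$. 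Consequently $\rho_s \sim \rho_t$ is equivalent to $\operatorname{tr}(\varrho) = 0$ for every $\varrho \in \bigcup_{m \in \mathbb{N}} \mathfrak{D}(\rho,m)$, while $\rho_s \sim^{n^2-1} \rho_t$ is equivalent to $\operatorname{tr}(\varrho) = 0$ for every $\varrho \in \mathfrak{D}(\rho,n^2-1)$.

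The ``$\Longrightarrow$'' direction is then immediate, since every experiment of size at most $n^2-1$ is in particular an experiment. For the ``$\Longleftarrow$'' direction I would argue as follows: assuming $\rho_s \sim^{n^2-1} \rho_t$, the linear functional $\operatorname{tr}(\cdot)$ annihilates every element of $\mathfrak{D}(\rho,n^2-1)$, hence---again by linearity---it annihilates the whole subspace $\operatorname{span}\mathfrak{D}(\rho,n^2-1)$. Proposition \ref{col-span} gives $\mathfrak{D}(\rho,m) \subseteq \operatorname{span}\mathfrak{D}(\rho,n^2-1)$ for every $m$, so $\operatorname{tr}(\varrho) = 0$ for all $\varrho \in \mathfrak{D}(\rho,m)$ and all $m$. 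Translating back yields $\Pr\nolimits_{\rho_s}^{\mathcal{M}}(b|a,\mathcal{S}) = \Pr\nolimits_{\rho_t}^{\mathcal{M}}(b|a,\mathcal{S})$ for every $a$, $\mathcal{S}$ and $b$, that is, $\rho_s \sim \rho_t$.

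The genuine difficulty lies not in this proposition but in the inputs it consumes: the real content is the stabilisation argument behind Proposition \ref{prop-span} and its corollary Proposition \ref{col-span}, namely that $\dim \operatorname{span}\mathfrak{D}(\rho,m)$ either strictly increases or has already frozen, so that, starting from dimension $1$ and being capped at $n^2$, it must attain its maximum by step $n^2-1$. Granting those, the present statement reduces to the fact that the trace is a single linear functional, and I expect no further obstacle.
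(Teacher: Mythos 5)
Your proposal is correct and follows essentially the same route as the paper: set $\rho=\rho_s-\rho_t$, reduce equivalence to the vanishing of $\operatorname{tr}$ on the sets $\mathfrak{D}(\rho,m)$, and invoke Proposition~\ref{col-span} to pull every $\rho_{b|a,\mathcal{S}}^{\mathcal{M}}$ into $\operatorname{span}\mathfrak{D}(\rho,n^2-1)$, where linearity of the trace finishes the argument. The paper merely writes out the linear combination explicitly where you phrase it as the functional annihilating the span; there is no substantive difference.
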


    \begin{proof}
        ``$\Longrightarrow$'' Obvious.

        ``$\Longleftarrow$'' Suppose that $\rho_s \sim^{n^2-1} \rho_t$, then
        \[
            \Pr\nolimits_{\rho_s}^{\mathcal{M}}(b|a, \mathcal{S}) = \Pr\nolimits_{\rho_t}^{\mathcal{M}}(b|a, \mathcal{S})
        \]
        for every $a \in \Sigma^*$, $\mathcal{S} \in \mathfrak{S}_a$ and $b \in \Gamma^{\abs{\mathcal{S}}}$ with $\abs{a}+\abs{\mathcal{S}} \leq n^2 - 1$. That is,
        \[
            \operatorname{tr}(\rho_{b|a, \mathcal{S}}^{\mathcal{M}}) = 0
        \]
        where $\rho = \rho_s - \rho_t$.

        On the other hand, for every $a \in \Sigma^*$, $\mathcal{S} \in \mathfrak{S}_a$ and $b \in \Gamma^{\abs{\mathcal{S}}}$, by Proposition \ref{col-span}, we have $\rho_{b|a, \mathcal{S}}^{\mathcal{M}} \in \operatorname{span} \mathfrak{D}(\rho, n^2-1)$, and then
        \[
            \rho_{b|a, \mathcal{S}}^{\mathcal{M}} = \sum_{a', \mathcal{S}', b': \rho_{b'|a', \mathcal{S}'}^{\mathcal{M}} \in \mathfrak{D}(\rho, n^2-1)} \alpha_{b'|a', \mathcal{S}'} \rho_{b'|a', \mathcal{S}'}^{\mathcal{M}}
        \]
        for some coefficients $\alpha_{b'|a', \mathcal{S}'}$. Then
        \[
            \operatorname{tr}(\rho_{b|a, \mathcal{S}}^{\mathcal{M}}) = \sum_{a', \mathcal{S}', b': \rho_{b'|a', \mathcal{S}'}^{\mathcal{M}} \in \mathfrak{D}(\rho, n^2-1)} \alpha_{b'|a', \mathcal{S}'} \operatorname{tr}(\rho_{b'|a', \mathcal{S}'}^{\mathcal{M}}) = 0,
        \]
        i.e. $\Pr\nolimits_{\rho_s}^{\mathcal{M}}(b|a, \mathcal{S}) = \Pr\nolimits_{\rho_t}^{\mathcal{M}}(b|a, \mathcal{S})$, which immediately yields $\rho_s \sim \rho_t$.
    \end{proof}

    It is trivial that $\rho_s \sim \rho_t \Longrightarrow \rho_s \sim_{n^2-1} \rho_t \Longrightarrow \rho_s \sim^{n^2-1} \rho_t$. So, we complete the proof.

\section{Correctness of the algorithms} \label{app2}

In Sec. \ref{proofs}, we only analyse the complexities of Algorithms \ref{algo1} and \ref{algo2}. Here, we prove their correctness.

\subsection{Correctness of Algorithm \ref{algo1}}

    The correctness of the algorithm is proved in the following steps:

    {\vskip 4pt}

    \textbf{Step 1}. The algorithm always terminates.

    Note that $\mathcal{H}$ is a finite-dimensional Hilbert space. Let $n = \dim \mathcal{H} < \infty$. The algorithm guarantees that $\mathfrak{B}$ consists of linearly independent elements, whose dimension is bounded by $n^2$. Thus the number of times of modifications of $\mathfrak{B}$ is always bounded by $n^2$, or there must be two elements in $\mathfrak{B}$ that are linearly dependent. Only when $\mathfrak{B}$ is added a new element, the queue $Q$ will be pushed into some other (finite) elements. On the other hand, the algorithm pops one element from $Q$ in every iteration of the \textquotedblleft while\textquotedblright\ loop. Thus, $Q$ will become empty at some time and the algorithm terminates.

 {\vskip 4pt}

    \textbf{Step 2}. The queue $Q$ is monotonic.

    We define $\operatorname{ord}: \operatorname{dom}(\operatorname{ord}) \to \mathbb{N}$ to be the order of every valid tuple $(a, \mathcal{S}, b)$, where $$\operatorname{dom}(\operatorname{ord}) = \{ (a, \mathcal{S}, b) : a \in \Sigma^*, \mathcal{S} \in \mathfrak{S}_a, b \in \Gamma^{\abs{\mathcal{S}}} \} \subseteq \Sigma^* \times \mathfrak{S} \times \Gamma^*$$ is the defining domain of $\operatorname{ord}$. For convenience, let $\mathcal{S} = \{ s_1, s_2, \dots, s_{\abs{\mathcal{S}}} \}$ with $0 \leq s_1 \leq s_2 \leq \dots \leq s_{\abs{\mathcal{S}}} \leq \abs{a}$. Define the total order ``$<$'' on $\Sigma \cup \Gamma$ by
    \begin{enumerate}
      \item $\sigma_i < \sigma_j$ if $1 \leq i < j \leq \abs{\Sigma}$.
      \item $\gamma_i < \gamma_j$ if $1 \leq i < j \leq \abs{\Gamma}$.
      \item $\sigma_i < \gamma_j$ for every $1 \leq i \leq \abs{\Sigma}$ and $1 \leq j \leq \abs{\Gamma}$.
    \end{enumerate}
    Also define:
    \[
        (a, \mathcal{S}, b)^- = \begin{cases}
            (a, \mathcal{S}^-, b^-) & s_{\abs{\mathcal{S}}} = \abs{a}, \\
            (a^-, \mathcal{S}, b) & \text{otherwise},
        \end{cases}
    \]
    where $a^- = a[1:\abs{a}-1]$, $b^- = b[1:\abs{b}-1]$ and $\mathcal{S}^- = \{ s_1, s_2, \dots, s_{\abs{\mathcal{S}}-1} \}$, and
    \[
        \text{end}(a, \mathcal{S}, b) = \begin{cases}
            b[\abs{\mathcal{S}}] & s_{\abs{\mathcal{S}}} = \abs{a}, \\
            a[\abs{a}] & \text{otherwise},
        \end{cases}
    \]
    We further define $\operatorname{ord}(a, \mathcal{S}, b)$ recursively as follows:
    \begin{enumerate}
      \item $\operatorname{ord}(\cdot, \cdot, \cdot)$ is a bijection. That is, every tuple $(a, \mathcal{S}, b)$ corresponds to a unique number, and vice versa.
      \item $\operatorname{ord}(\epsilon, \emptyset, \epsilon) = 0$.
      \item For every two tuples $(a_1, \mathcal{S}_1, b_1)$ and $(a_2, \mathcal{S}_2, b_2)$, $\operatorname{ord}(a_1, \mathcal{S}_1, b_1) < \operatorname{ord}(a_2, \mathcal{S}_2, b_2)$ if and only if one of the following conditions holds:
          \begin{enumerate}
            \item $\abs{a_1}+\abs{\mathcal{S}_1} < \abs{a_2}+\abs{\mathcal{S}_2}$.
            \item $\abs{a_1}+\abs{\mathcal{S}_1} = \abs{a_2}+\abs{\mathcal{S}_2}$ and $\operatorname{ord}(a_1, \mathcal{S}_1, b_1)^- < \operatorname{ord}(a_2, \mathcal{S}_2, b_2)^-$.
            \item $\abs{a_1}+\abs{\mathcal{S}_1} = \abs{a_2}+\abs{\mathcal{S}_2}$, $\operatorname{ord}(a_1, \mathcal{S}_1, b_1)^- = \operatorname{ord}(a_2, \mathcal{S}_2, b_2)^-$ and $\text{end}(a_1, \mathcal{S}_1, b_1) < \text{end}(a_2, \mathcal{S}_2, b_2)$.
          \end{enumerate}
    \end{enumerate}
    Clearly, the queue $Q$ in the algorithm is monotonic in the increasing order of $\operatorname{ord}(a, \mathcal{S}, b)$.

     {\vskip 4pt}

    \textbf{Step 3}. $\operatorname{span} \mathfrak{B} = \operatorname{span} \mathfrak{D} (\rho, n^2-1)$.

    It is sufficient to verify the following:
    \begin{proposition}
        $\rho_{b|(a, \mathcal{S})} \in \operatorname{span} \mathfrak{B}$ for every $(a, \mathcal{S}, b) \in \operatorname{dom}(\operatorname{ord})$, where $\mathfrak{B}$ is the set $\mathfrak{B}$ in Algorithm \ref{algo1} after it terminates.
    \end{proposition}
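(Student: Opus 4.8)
The plan is to prove this by induction on the experiment size $\abs{a}+\abs{\mathcal{S}}$ (equivalently, on $\operatorname{ord}(a,\mathcal{S},b)$, which refines that size). The backbone is the recursive structure already encoded in the definitions of $(a,\mathcal{S},b)^-$ and $\operatorname{end}(a,\mathcal{S},b)$: writing $\rho_q := \rho_{b'|(a',\mathcal{S}')}$ for the state attached to a tuple $q=(a',\mathcal{S}',b')$, every tuple of positive size has a unique \emph{parent} $p = (a,\mathcal{S},b)^-$ of strictly smaller size, and its state is obtained from the parent's by a single conjugation,
\[
    \rho_{b|(a,\mathcal{S})} = \Phi\!\left( \rho_{p} \right), \qquad \Phi(X) = \begin{cases} M_{b[\abs{b}]}\, X\, M_{b[\abs{b}]}^\dag & \mathcal{S} \text{ closed}, \\ U_{a[\abs{a}]}\, X\, U_{a[\abs{a}]}^\dag & \text{otherwise}. \end{cases}
\]
The point is that this same linear map $\Phi$ sends the state of \emph{any} tuple $q$ to the state of its child obtained by appending the symbol $\operatorname{end}(a,\mathcal{S},b)$.

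First I would record two facts about the run of the algorithm. (i) Every tuple pushed into $Q$ is eventually popped (the algorithm terminates, Step~1), and when popped its state lies in $\operatorname{span}\mathfrak{B}$ for the \emph{final} $\mathfrak{B}$: on popping, the state is either added to $\mathfrak{B}$ or already lies in the span of the current $\mathfrak{B}$, which is a subset of the final one. (ii) Whenever a tuple $q$ is added to $\mathfrak{B}$ (a \emph{basis tuple}), the push steps (lines~8--9) enqueue all of its children, one for each $\sigma \in \Sigma$ and each $\gamma \in \Gamma$. Combining (i) and (ii): for every basis tuple $q$ and every child $q'$ of $q$, one has $\rho_{q'} \in \operatorname{span}\mathfrak{B}$.

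With these in hand the induction closes cleanly. The base case is $(\epsilon,\emptyset,\epsilon)$, whose state is $\rho = \rho_s - \rho_t$; it is popped first and is either added to $\mathfrak{B}$ or already equals $0 \in \operatorname{span}\emptyset$, so $\rho \in \operatorname{span}\mathfrak{B}$. For the inductive step, let $(a,\mathcal{S},b)$ have positive size with parent $p$. By the induction hypothesis $\rho_p \in \operatorname{span}\mathfrak{B}$, so we may write $\rho_p = \sum_{\varrho \in \mathfrak{B}} c_\varrho\, \varrho$. Each $\varrho \in \mathfrak{B}$ equals $\rho_q$ for some basis tuple $q$, and by the identification above $\Phi(\varrho) = \rho_{q'}$, where $q'$ is the child of $q$ along the symbol $\operatorname{end}(a,\mathcal{S},b)$; fact (ii) gives $\Phi(\varrho) \in \operatorname{span}\mathfrak{B}$. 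Hence by linearity
\[
    \rho_{b|(a,\mathcal{S})} = \Phi(\rho_p) = \sum_{\varrho \in \mathfrak{B}} c_\varrho\, \Phi(\varrho) \in \operatorname{span}\mathfrak{B}.
\]

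The subtle point — and the step I expect to be the main obstacle — is precisely that the parent $p$ need \emph{not} be a basis tuple: because the algorithm is ``lazy'', $p$ may have been discarded on popping (found to already lie in the span), so $p$'s own children were never pushed, and one cannot argue directly that $(a,\mathcal{S},b)$ was ever enqueued. The device resolving this is to decompose $\rho_p$ over the \emph{actual} basis $\mathfrak{B}$ and exploit the linearity of the single-operator conjugation $\Phi$, thereby reducing the claim to the child-closure property (ii) applied to genuine basis tuples, for which the children were indeed pushed. This is the heart of why the lazy span-maintenance in Algorithm~\ref{algo1} is sound.
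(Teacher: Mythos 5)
Your proof is correct and follows essentially the same route as the paper's: induction on the tuple order, expressing the state as a one-step conjugation of the parent's state, decomposing the parent over the basis $\mathfrak{B}$, and using linearity together with the fact that every child of a basis tuple is enqueued (and hence its state lands in $\operatorname{span}\mathfrak{B}$). The only difference is cosmetic: the paper strengthens the induction hypothesis to the stratified sets $\mathfrak{B}^{(\operatorname{ord}(a,\mathcal{S},b))}$ and splits on whether the tuple ever appears in $Q$, whereas you argue uniformly against the final $\mathfrak{B}$, which suffices for the statement as given.
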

    \begin{proof}
        Strengthen the proposition: $\rho_{b|(a, \mathcal{S})} \in \operatorname{span} \mathfrak{B}^{(\operatorname{ord}(a, \mathcal{S}, b))}$ for every $(a, \mathcal{S}, b) \in \operatorname{dom}(\operatorname{ord})$, where $$\mathfrak{B}^{(k)} = \{ (a, \mathcal{S}, b) \in \mathfrak{B}: \operatorname{ord}(a, \mathcal{S}, b) \leq k \}.$$
        We prove it by induction on $\operatorname{ord}(a, \mathcal{S}, b)$.

         {\vskip 4pt}

        \textbf{Basis}. $\operatorname{ord}(a, \mathcal{S}, b) = 0$, i.e. $(a, \mathcal{S}, b) = (\epsilon, \emptyset, \epsilon)$. Then $\rho_{\epsilon|(\epsilon, \emptyset)}$ is put into $\mathfrak{B}$ because $\mathfrak{B}$ is set to be $\emptyset$ initially. Thus $\rho_{\epsilon|(\epsilon, \emptyset)} \in \mathfrak{B}^{(0)} \subseteq \operatorname{span} \mathfrak{B}^{(0)}$.

         {\vskip 4pt}

        \textbf{Induction}. For every $(a, \mathcal{S}, b) \in \operatorname{dom}(\operatorname{ord})$ with $\operatorname{ord}(a, \mathcal{S}, b) \geq 1$, assume that
        \[
        \rho_{b'|(a', \mathcal{S}')} \in \operatorname{span} \mathfrak{B}^{(\operatorname{ord}(a', \mathcal{S}', b'))}
        \]
        for every $(a', \mathcal{S}', b') \in \operatorname{dom}(\operatorname{ord})$ with $\operatorname{ord}(a', \mathcal{S}', b') < \operatorname{ord}(a, \mathcal{S}, b)$.

         {\vskip 4pt}

        \textbf{Case 1}. $(a, \mathcal{S}, b)$ once appears in $Q$: Then the algorithm guarantees that $\rho_{b|(a, \mathcal{S})} \in \operatorname{span} \mathfrak{B}^{(\operatorname{ord}(a, \mathcal{S}, b))}$, because the algorithm checks whether $\rho_{b|(a, \mathcal{S})} \in \operatorname{span} \mathfrak{B}$ at that time, and if not, push $\rho_{b|(a, \mathcal{S})}$ into $\mathfrak{B}$.

         {\vskip 4pt}

        \textbf{Case 2}. $(a, \mathcal{S}, b)$ never appears in $Q$. Consider the following:

         {\vskip 4pt}

        \textbf{Subcase 2.1}. $\mathcal{S}$ is measure-closed, i.e. $s_{\abs{\mathcal{S}}} = \abs{a}$: Note that $\operatorname{ord}(a, \mathcal{S}^-, b^-) < \operatorname{ord}(a, \mathcal{S}, b)$, by the induction hypothesis, we have $\rho_{b^-|(a, \mathcal{S}^-)} \in \operatorname{span} \mathfrak{B}^{(\operatorname{ord}(a, \mathcal{S}^-, b^-))}$, then
        \begin{align*}
            \rho_{b|(a, \mathcal{S})}
            & = M_{b[\abs{\mathcal{S}}]} \rho_{b^-|(a, \mathcal{S}^-)} M_{b[\abs{\mathcal{S}}]}^\dag \\
            & \in M_{b[\abs{\mathcal{S}}]} \left( \operatorname{span} \mathfrak{B}^{(\operatorname{ord}(a, \mathcal{S}^-, b^-))} \right) M_{b[\abs{\mathcal{S}}]}^\dag \\
            & = \operatorname{span} \left( M_{b[\abs{\mathcal{S}}]} \mathfrak{B}^{(\operatorname{ord}(a, \mathcal{S}^-, b^-))} M_{b[\abs{\mathcal{S}}]}^\dag \right) \\
            & \subseteq \operatorname{span} \mathfrak{B}^{(\operatorname{ord}(a, \mathcal{S}^- + \{ \abs{a} \}, b^-b[\abs{\mathcal{S}}]))} = \operatorname{span} \mathfrak{B}^{(\operatorname{ord}(a, \mathcal{S}, b))}.
        \end{align*}

        \textbf{Subcase 2.2}. $\mathcal{S}$ is not measure-closed: Note that $\operatorname{ord}(a^-, \mathcal{S}, b) < \operatorname{ord}(a, \mathcal{S}, b)$, by the induction hypothesis, we have $\rho_{b|(a^-, \mathcal{S})} \in \operatorname{span} \mathfrak{B}^{(\operatorname{ord}(a^-, \mathcal{S}, b))}$, then
        \[
            \rho_{b|(a^-, \mathcal{S})} = \sum_{\rho_{b'|(a', \mathcal{S}')} \in \mathfrak{B}^{(\operatorname{ord}(a^-, \mathcal{S}, b))}} \alpha_{b'|(a', \mathcal{S}')} \rho_{b'|(a', \mathcal{S}')}
        \]
        for some coefficients $\alpha_{b'|(a', \mathcal{S}')}$. Thus,
        \begin{align*}
            \rho_{b|(a, \mathcal{S})}
            & = U_{a[\abs{a}]} \rho_{b|(a^-, \mathcal{S})} U_{a[\abs{a}]}^\dag \\
            & \in U_{a[\abs{a}]} \left( \operatorname{span} \mathfrak{B}^{(\operatorname{ord}(a^-, \mathcal{S}, b))} \right) U_{a[\abs{a}]}^\dag \\
            & = \operatorname{span} \left( U_{a[\abs{a}]} \mathfrak{B}^{(\operatorname{ord}(a^-, \mathcal{S}, b))} U_{a[\abs{a}]}^\dag \right) \\
            & \subseteq \operatorname{span} \mathfrak{B}^{(\operatorname{ord}(a^-a[\abs{a}], \mathcal{S}, b))} = \operatorname{span} \mathfrak{B}^{(\operatorname{ord}(a, \mathcal{S}, b))}.
        \end{align*}

        \textbf{Conclusion}. $\rho_{b|(a, \mathcal{S})} \in \operatorname{span} \mathfrak{B}^{(\operatorname{ord}(a, \mathcal{S}, b))}$ for every $(a, \mathcal{S}, b) \in \operatorname{dom}(\operatorname{ord})$.
    \end{proof}

    \textbf{Step 4}. $\rho_s \sim \rho_t$ if and only if $\operatorname{tr}(\varrho) = 0$ for every $\varrho \in \mathfrak{B}$, which is immediately obtained from Theorem \ref{thm-state-eq} Part 1.

\subsection{Correctness of Algorithm \ref{algo2}}

    The correctness of the algorithm is proved in the following steps:

     {\vskip 4pt}

    \textbf{Step 1}. The algorithm always terminates.

    Since $\mathcal{H}$ is a finite-dimensional Hilbert space, let $n = \dim \mathcal{H} < \infty$. The algorithm guarantees that $\mathfrak{B}_i (0 \leq i \leq k)$ consists of linearly independent elements, whose dimension is bounded by $n^2$. Thus for each $0 \leq i \leq k$, the number of times of modifications of $\mathfrak{B}_i$ is always bounded by $n^2$, or there must be two elements in $\mathfrak{B}_i$ that are linearly dependent. Only when $\mathfrak{B}_i$ is added a new element for some $0 \leq i \leq k$, will the queue $Q$ be pushed into some other (finite) elements. On the other hand, the algorithm pops one element from $Q$ in every iteration of the \textquotedblleft while\textquotedblright\  loop. Thus $Q$ will become empty at some time and the algorithm terminates.

 {\vskip 4pt}

    \textbf{Step 2}. The queue $Q$ is monotonic.

    Similar to the analysis of Algorithm \ref{algo1}, we define $\operatorname{ord}: \operatorname{dom}(\operatorname{ord}) \to \mathbb{N}$ be the order of every valid tuple $(a, \mathcal{S}, b)$, where $$\operatorname{dom}(\operatorname{ord}) = \{ (a, \mathcal{S}, b) : a \in \Sigma^*, \mathcal{S} \in \mathfrak{S}_a, b \in \Gamma^{\abs{\mathcal{S}}}, \abs{\mathcal{S}} \leq k \} \subseteq \Sigma^* \times \mathfrak{S} \times \Gamma^*$$ is the defining domain of $\operatorname{ord}$.
    Clearly, the queue $Q$ in the algorithm is monotonic in the increasing order of $\operatorname{ord}(a, \mathcal{S}, b)$.

     {\vskip 4pt}

    \textbf{Step 3}. $\operatorname{span} \mathfrak{B}_i \subseteq \operatorname{span} \mathfrak{B}_{i+1}$ for every $0 \leq i < k$.

    Obviously, this is guaranteed by the algorithm.

     {\vskip 4pt}

    \textbf{Step 4}. $\operatorname{span} \mathfrak{B}_i = \operatorname{span} \mathfrak{D}_i (\rho, n^2-1)$ for $0 \leq i \leq k$.

    It is sufficient to prove that following:
    \begin{proposition}
        $\rho_{b|(a, \mathcal{S})} \in \operatorname{span} \mathfrak{B}_{\abs{\mathcal{S}}}$ for every $(a, \mathcal{S}, b) \in \operatorname{dom}(\operatorname{ord})$, where $\mathfrak{B}_{\abs{\mathcal{S}}}$ is the set $\mathfrak{B}_{\abs{\mathcal{S}}}$ in Algorithm \ref{algo2} after it terminates.
    \end{proposition}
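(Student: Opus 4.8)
The plan is to mirror the correctness argument already carried out for Algorithm \ref{algo1} in Step 3, but now while carefully tracking the measurement index on the family $\{\mathfrak{B}_l\}_{0 \le l \le k}$ and exploiting the monotonicity established in Step 3. As for Algorithm \ref{algo1}, I would first strengthen the statement to a truncated form: writing $\mathfrak{B}_l^{(j)} = \{\rho_{b'|(a', \mathcal{S}')} \in \mathfrak{B}_l : \operatorname{ord}(a', \mathcal{S}', b') \le j\}$, the goal becomes to prove that
\[
\rho_{b|(a, \mathcal{S})} \in \operatorname{span} \mathfrak{B}_{\abs{\mathcal{S}}}^{(\operatorname{ord}(a, \mathcal{S}, b))}
\]
for every $(a, \mathcal{S}, b) \in \operatorname{dom}(\operatorname{ord})$, by induction on $\operatorname{ord}(a, \mathcal{S}, b)$. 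Before the induction I would record the refinement of Step 3 that the argument needs: since an operator is inserted into a contiguous range of sets (Line 8: every $\mathfrak{B}_l$ with $\abs{\mathcal{S}} \le l \le j$, where $j$ is the largest index for which the operator is not yet in $\operatorname{span} \mathfrak{B}_j$), we get the truncated monotonicity $\operatorname{span} \mathfrak{B}_i^{(o)} \subseteq \operatorname{span} \mathfrak{B}_{i+1}^{(o)}$ at every order $o$.

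For the base case $\operatorname{ord}(a, \mathcal{S}, b) = 0$, the tuple is $(\epsilon, \emptyset, \epsilon)$, which is popped first; since every $\mathfrak{B}_l$ is initially empty, $\rho = \rho_{\epsilon|(\epsilon, \emptyset)}$ is placed into $\mathfrak{B}_0$, so the claim holds. For the inductive step I would split into the same two cases as in Step 3. If $(a, \mathcal{S}, b)$ is ever pushed into $Q$, then on being popped the algorithm explicitly tests whether $\rho_{b|(a, \mathcal{S})} \in \operatorname{span} \mathfrak{B}_{\abs{\mathcal{S}}}$ and inserts it if the test fails, so $\rho_{b|(a, \mathcal{S})} \in \operatorname{span} \mathfrak{B}_{\abs{\mathcal{S}}}^{(\operatorname{ord}(a, \mathcal{S}, b))}$ holds directly. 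If $(a, \mathcal{S}, b)$ never appears in $Q$, I would pass to its predecessor $(a, \mathcal{S}, b)^-$, which has strictly smaller order, and apply the induction hypothesis: when $\mathcal{S}$ is closed I write $\rho_{b|(a, \mathcal{S})} = M_{b[\abs{\mathcal{S}}]} \rho_{b^-|(a, \mathcal{S}^-)} M_{b[\abs{\mathcal{S}}]}^\dag$ with $\rho_{b^-|(a, \mathcal{S}^-)} \in \operatorname{span} \mathfrak{B}_{\abs{\mathcal{S}}-1}^{(o)}$, and when $\mathcal{S}$ is not closed I write $\rho_{b|(a, \mathcal{S})} = U_{a[\abs{a}]} \rho_{b|(a^-, \mathcal{S})} U_{a[\abs{a}]}^\dag$ with $\rho_{b|(a^-, \mathcal{S})} \in \operatorname{span} \mathfrak{B}_{\abs{\mathcal{S}}}^{(o)}$. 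In each case I would expand the predecessor as a linear combination of basis operators $\rho_{b'|(a', \mathcal{S}')}$ actually inserted into the relevant $\mathfrak{B}$, and observe that each such insertion caused the corresponding child tuple — $(a', \mathcal{S}' + \{\abs{a'}\}, b'\,b[\abs{\mathcal{S}}])$ in the closed case (pushed since $\abs{\mathcal{S}'} \le \abs{\mathcal{S}}-1 < k$), or $(a'\sigma, \mathcal{S}', b')$ with $\sigma = a[\abs{a}]$ in the unitary case — to be pushed into $Q$, so that by Case 1 its conjugated operator lies in $\operatorname{span} \mathfrak{B}_{\abs{\mathcal{S}'}+1}$ (resp.\ $\operatorname{span} \mathfrak{B}_{\abs{\mathcal{S}'}}$) at an order no larger than $\operatorname{ord}(a, \mathcal{S}, b)$, the latter because $\operatorname{ord}$ is preserved under appending a fixed final symbol.

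The step I expect to be the main obstacle is exactly the index bookkeeping in the closed (measurement) subcase. Because operators are inserted into a whole range $[\abs{\mathcal{S}'}, j]$, a basis element of $\mathfrak{B}_{\abs{\mathcal{S}}-1}$ may carry strictly fewer than $\abs{\mathcal{S}}-1$ scheduled measurements, i.e.\ $\abs{\mathcal{S}'} < \abs{\mathcal{S}}-1$; its measurement child then lands in $\operatorname{span} \mathfrak{B}_{\abs{\mathcal{S}'}+1}$ with $\abs{\mathcal{S}'}+1 \le \abs{\mathcal{S}}$ rather than exactly in $\mathfrak{B}_{\abs{\mathcal{S}}}$. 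This mismatch is resolved by invoking the truncated monotonicity $\operatorname{span} \mathfrak{B}_{\abs{\mathcal{S}'}+1}^{(o)} \subseteq \operatorname{span} \mathfrak{B}_{\abs{\mathcal{S}}}^{(o)}$ to absorb the gap, after which both subcases collapse into $\rho_{b|(a, \mathcal{S})} \in \operatorname{span} \mathfrak{B}_{\abs{\mathcal{S}}}^{(\operatorname{ord}(a, \mathcal{S}, b))}$, completing the induction and hence establishing the proposition.
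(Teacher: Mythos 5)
Your proof is correct and follows essentially the same route as the paper's: strengthen to the order-truncated statement $\rho_{b|(a,\mathcal{S})} \in \operatorname{span} \mathfrak{B}_{\abs{\mathcal{S}}}^{(\operatorname{ord}(a,\mathcal{S},b))}$, induct on $\operatorname{ord}$, and split on whether the tuple ever enters $Q$, passing to the predecessor and conjugating by $M_{b[\abs{\mathcal{S}}]}$ or $U_{a[\abs{a}]}$ otherwise. You additionally make explicit the per-order monotonicity $\operatorname{span}\mathfrak{B}_i^{(o)} \subseteq \operatorname{span}\mathfrak{B}_{i+1}^{(o)}$ needed to absorb basis elements with $\abs{\mathcal{S}'} < \abs{\mathcal{S}}-1$ in the measurement subcase, a detail the paper's inclusion $M_{b[\abs{\mathcal{S}}]}\,\mathfrak{B}^{(\cdot)}_{\abs{\mathcal{S}}-1}\,M_{b[\abs{\mathcal{S}}]}^\dag \subseteq \operatorname{span}\mathfrak{B}^{(\cdot)}_{\abs{\mathcal{S}}}$ uses silently.
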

    \begin{proof}
        Strengthen the proposition: $\rho_{b|(a, \mathcal{S})} \in \operatorname{span} \mathfrak{B}^{(\operatorname{ord}(a, \mathcal{S}, b))}_{\abs{\mathcal{S}}}$ for every $(a, \mathcal{S}, b) \in \operatorname{dom}(\operatorname{ord})$, where $$\mathfrak{B}^{(k)}_i = \{ (a, \mathcal{S}, b) \in \mathfrak{B}: \operatorname{ord}(a, \mathcal{S}, b) \leq k, \abs{\mathcal{S}} \leq i \}.$$
        We prove it by induction on $\operatorname{ord}(a, \mathcal{S}, b)$.

         {\vskip 4pt}

        \textbf{Basis}. $\operatorname{ord}(a, \mathcal{S}, b) = 0$, i.e. $(a, \mathcal{S}, b) = (\epsilon, \emptyset, \epsilon)$. Then $\rho_{\epsilon|(\epsilon, \emptyset)}$ is put into $\mathfrak{B}_i(0 \leq i \leq k)$ because $\mathfrak{B}_i(0 \leq i \leq k)$ is set to be $\emptyset$ initially. Thus $\rho_{\epsilon|(\epsilon, \emptyset)} \in \mathfrak{B}^{(0)}_i \subseteq \operatorname{span} \mathfrak{B}^{(0)}_i$ for $0 \leq i \leq k$.

         {\vskip 4pt}

        \textbf{Induction}. For every $(a, \mathcal{S}, b) \in \operatorname{dom}(\operatorname{ord})$ with $\operatorname{ord}(a, \mathcal{S}, b) \geq 1$, assume that
        \[
        \rho_{b'|(a', \mathcal{S}')} \in \operatorname{span} \mathfrak{B}^{(\operatorname{ord}(a', \mathcal{S}', b'))}_{\abs{\mathcal{S}}}
        \]
        for every $(a', \mathcal{S}', b') \in \operatorname{dom}(\operatorname{ord})$ with $\operatorname{ord}(a', \mathcal{S}', b') < \operatorname{ord}(a, \mathcal{S}, b)$.

         {\vskip 4pt}

        \textbf{Case 1}. $(a, \mathcal{S}, b)$ once appears in $Q$: Then the algorithm guarantees that $\rho_{b|(a, \mathcal{S})} \in \operatorname{span} \mathfrak{B}^{(\operatorname{ord}(a, \mathcal{S}, b))}_{\abs{\mathcal{S}}}$, because the algorithm checks whether $\rho_{b|(a, \mathcal{S})} \in \operatorname{span} \mathfrak{B}_{\abs{\mathcal{S}}}$ at that time, and if not, push $\rho_{b|(a, \mathcal{S})}$ into $\mathfrak{B}_{\abs{\mathcal{S}}}$.

         {\vskip 4pt}

        \textbf{Case 2}. $(a, \mathcal{S}, b)$ never appears in $Q$. Consider the following subcases:

         {\vskip 4pt}

        \textbf{Subcase 2.1}. $\mathcal{S}$ is measure-closed, i.e. $s_{\abs{\mathcal{S}}} = \abs{a}$: Note that $\operatorname{ord}(a, \mathcal{S}^-, b^-) < \operatorname{ord}(a, \mathcal{S}, b)$, by the induction hypothesis, we have $\rho_{b^-|(a, \mathcal{S}^-)} \in \operatorname{span} \mathfrak{B}^{(\operatorname{ord}(a, \mathcal{S}^-, b^-))}_{\abs{\mathcal{S}}-1}$, then
        \begin{align*}
            \rho_{b|(a, \mathcal{S})}
            & = M_{b[\abs{\mathcal{S}}]} \rho_{b^-|(a, \mathcal{S}^-)} M_{b[\abs{\mathcal{S}}]}^\dag \\
            & \in M_{b[\abs{\mathcal{S}}]} \left( \operatorname{span} \mathfrak{B}^{(\operatorname{ord}(a, \mathcal{S}^-, b^-))}_{\abs{\mathcal{S}}-1} \right) M_{b[\abs{\mathcal{S}}]}^\dag \\
            & = \operatorname{span} \left( M_{b[\abs{\mathcal{S}}]} \mathfrak{B}^{(\operatorname{ord}(a, \mathcal{S}^-, b^-))}_{\abs{\mathcal{S}}-1} M_{b[\abs{\mathcal{S}}]}^\dag \right) \\
            & \subseteq \operatorname{span} \mathfrak{B}^{(\operatorname{ord}(a, \mathcal{S}^- + \{ \abs{a} \}, b^-b[\abs{\mathcal{S}}]))}_{\abs{\mathcal{S}}} = \operatorname{span} \mathfrak{B}^{(\operatorname{ord}(a, \mathcal{S}, b))}_{\abs{\mathcal{S}}}.
        \end{align*}

        \textbf{Subcase 2.2}. $\mathcal{S}$ is not measure-closed: Note that $\operatorname{ord}(a^-, \mathcal{S}, b) < \operatorname{ord}(a, \mathcal{S}, b)$, by the induction hypothesis, we have $\rho_{b|(a^-, \mathcal{S})} \in \operatorname{span} \mathfrak{B}^{(\operatorname{ord}(a^-, \mathcal{S}, b))}_{\abs{\mathcal{S}}}$, then
        \begin{align*}
            \rho_{b|(a, \mathcal{S})}
            & = U_{a[\abs{a}]} \rho_{b|(a^-, \mathcal{S})} U_{a[\abs{a}]}^\dag \\
            & \in U_{a[\abs{a}]} \left( \operatorname{span} \mathfrak{B}^{(\operatorname{ord}(a^-, \mathcal{S}, b))}_{\abs{\mathcal{S}}} \right) U_{a[\abs{a}]}^\dag \\
            & = \operatorname{span} \left( U_{a[\abs{a}]} \mathfrak{B}^{(\operatorname{ord}(a^-, \mathcal{S}, b))}_{\abs{\mathcal{S}}} U_{a[\abs{a}]}^\dag \right) \\
            & \subseteq \operatorname{span} \mathfrak{B}^{(\operatorname{ord}(a^-a[\abs{a}], \mathcal{S}, b))}_{\abs{\mathcal{S}}} = \operatorname{span} \mathfrak{B}^{(\operatorname{ord}(a, \mathcal{S}, b))}_{\abs{\mathcal{S}}}.
        \end{align*}

        \textbf{Conclusion}. $\rho_{b|(a, \mathcal{S})} \in \operatorname{span} \mathfrak{B}^{(\operatorname{ord}(a, \mathcal{S}, b))}_{\abs{\mathcal{S}}}$ for every $(a, \mathcal{S}, b) \in \operatorname{dom}(\operatorname{ord})$.
    \end{proof}

    \textbf{Step 4}. $\rho_s \sim_{k} \rho_t$ if and only if $\operatorname{tr}(\varrho) = 0$ for every $\varrho \in \mathfrak{B}_i$ for $0 \leq i \leq k$, which is immediately obtained from Theorem \ref{thm-state-eq} Part 2.

\section{An efficient description of Problem \ref{prob1}} \label{app3}

A simple form of Problem \ref{prob1}, analog to Proposition \ref{prop-eq-k} about Problem \ref{prob2}, is given in the following:

\begin{proposition} \label{prop-eq}
        Let $\mathcal{M}_1 = (\Sigma, \Gamma, \mathcal{H}^{(1)}, U^{(1)}, M^{(1)})$ and $\mathcal{M}_2 = (\Sigma, \Gamma, \mathcal{H}^{(2)}, U^{(2)}, M^{(2)})$ be two QMMs with initial states $\rho_1$ and $\rho_2$, respectively. Let $n_1 = \dim \mathcal{H}_1$ and $n_2 = \dim \mathcal{H}_2$. Then
        $(\mathcal{M}_1, \rho_1) \sim (\mathcal{M}_2, \rho_2)$ if and only if there is a $(n_1^2+n_2^2) \times (n_1^2+n_2^2)$ matrix $M_c$ for every $c \in \Sigma \cup \Gamma$ and a $(n_1^2+n_2^2) \times (n_1^2+n_2^2)$ matrix $F$ such that
        \begin{enumerate}
          \item $F_{\cdot, 1} = \begin{bmatrix} \vec \rho_1 \\ \vec \rho_2 \end{bmatrix}$.
          \item $\eta^\dag F = 0$, where $\eta = \begin{bmatrix} \eta_1 \\ -\eta_2 \end{bmatrix}$, $\eta_1$ and $\eta_2$ are the vectorizations of trace for $\mathcal{M}_1$ and $\mathcal{M}_2$, respectively.
          \item For $c \in \Sigma$,
          \[
            \begin{bmatrix}
                \hat U_c^{(1)} & 0 \\
                0 & \hat U_c^{(2)}
            \end{bmatrix} F = F M_c.
          \]
          \item For $c \in \Gamma$,
          \[
            \begin{bmatrix}
                \hat M_c^{(1)} & 0 \\
                0 & \hat M_c^{(2)}
            \end{bmatrix} F = F M_c.
          \]
        \end{enumerate}
    \end{proposition}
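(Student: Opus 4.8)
The plan is to follow the proof of Proposition \ref{prop-eq-k} almost verbatim, exploiting the fact that dropping the bound on the number of measurements lets us collapse the family $\{F^{(l)}\}_{0 \le l \le k}$ into a single matrix $F$ and each family $\{A_c^{(l)}\}_l$ into a single matrix $M_c$, now shared between the unitary symbols $c \in \Sigma$ and the measurement symbols $c \in \Gamma$. The technical engine is again the chain of reductions supplied by Theorem \ref{thm-iqmm-eq} and Theorem \ref{thm-state-eq}: $(\mathcal{M}_1,\rho_1) \sim (\mathcal{M}_2,\rho_2)$ iff $\rho_1 \sim \rho_2$ in $\mathcal{M}_1 \oplus \mathcal{M}_2$ iff $\rho_1 \sim^{n-1} \rho_2$ there, where $n = n_1^2 + n_2^2$.

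For the forward direction I would set $\rho = \rho_1 \oplus \rho_2$ inside $\mathcal{M}_1 \oplus \mathcal{M}_2$ and consider the single reachable set $\mathfrak{D}(\rho, n-1)$ of Appendix \ref{app1}. By Propositions \ref{prop-span} and \ref{col-span} its span is stationary and, crucially (exactly as in the proof of Theorem \ref{thm-iqmm-eq}), every element is block diagonal, so $\dim \operatorname{span} \mathfrak{D}(\rho, n-1) \le n$. I would pick $\rho^{(1)}, \dots, \rho^{(d)} \in \mathfrak{D}(\rho, n-1)$ forming a basis of this span with $\rho^{(1)} = \rho$, write each as $\rho^{(i)} = \rho_1^{(i)} \oplus \rho_2^{(i)}$, and take the $i$-th column of $F$ to be the stacked vector $f_i$ consisting of $\vec\rho_1^{(i)}$ on top of $\vec\rho_2^{(i)}$ for $i \le d$, with zero columns for $d < i \le n$. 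Condition 1 is then immediate from $\rho^{(1)} = \rho$; condition 2 holds because equivalence gives $\operatorname{tr}(\rho_1^{(i)}) = \operatorname{tr}(\rho_2^{(i)})$, that is $\eta^\dag f_i = 0$. For conditions 3 and 4, invariance of $\operatorname{span}\mathfrak{D}$ under conjugation by $U_c^{(1)} \oplus U_c^{(2)}$ and $M_c^{(1)} \oplus M_c^{(2)}$ makes $\operatorname{diag}(\hat U_c^{(1)}, \hat U_c^{(2)}) f_i$ (respectively the $\hat M_c$ version) a combination $\sum_j \alpha_{ij} f_j$; collecting the $\alpha_{ij}$ into $M_c$, with zero columns for the padded indices, yields the matrix identities column by column.

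For the converse I would prove by induction on the experiment size that $\operatorname{diag}(\hat V_{b|a,\mathcal{S}}^{(1)}, \hat V_{b|a,\mathcal{S}}^{(2)}) F = F M_{b|a,\mathcal{S}}$, where $M_{b|a,\mathcal{S}}$ is the product of the matrices $M_c$ in the order dictated by $V_{b|a,\mathcal{S}}$; each induction step is a single application of condition 3 or condition 4, using that the hat map is multiplicative, $\widehat{AB} = \hat A \hat B$. Evaluating on the first standard basis vector $e_1$ and using conditions 1 and 2 then gives
\begin{align*}
    \operatorname{tr}((\rho_1)_{b|a,\mathcal{S}}) - \operatorname{tr}((\rho_2)_{b|a,\mathcal{S}}) &= \eta^\dag \begin{bmatrix} \hat V_{b|a,\mathcal{S}}^{(1)} & 0 \\ 0 & \hat V_{b|a,\mathcal{S}}^{(2)} \end{bmatrix} F e_1 \\
    &= \eta^\dag F M_{b|a,\mathcal{S}} e_1 = 0,
\end{align*}
which is exactly $(\mathcal{M}_1,\rho_1) \sim (\mathcal{M}_2,\rho_2)$.

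I expect the only real subtlety to sit in the forward direction: verifying that the single span is genuinely invariant under all the $\hat U_c$ and $\hat M_c$ simultaneously, so that the $M_c$ are well defined for unitaries and measurements at once, and handling the rank deficiency $d < n$ cleanly by padding $F$ with zero columns and the corresponding columns of each $M_c$ with zeros, so that the identities in conditions 3 and 4 hold trivially on the padded indices. Everything else is bookkeeping inherited from the proof of Proposition \ref{prop-eq-k}, and the payoff is that eliminating the level index $l$ removes the factor $k = O((n_1+n_2)^2)$ from the number of variables and polynomials, giving the improved $O((\abs{\Sigma}+\abs{\Gamma})(n_1+n_2)^4)$ description claimed for Problem \ref{prob1}.
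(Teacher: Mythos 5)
Your proposal is correct and follows essentially the same route as the paper's proof in \ref{app3}: form $\rho=\rho_1\oplus\rho_2$ in $\mathcal{M}_1\oplus\mathcal{M}_2$, take a spanning set of $\operatorname{span}\mathfrak{D}(\rho,n-1)$ starting with $\rho$ as the columns of $F$, read off $M_c$ from the invariance of that span, and run the converse by induction on $V_{b|a,\mathcal{S}}$ using multiplicativity of the hat map. Your explicit zero-column padding when the span has dimension $d<n$ is a minor bookkeeping refinement the paper leaves implicit; otherwise the two arguments coincide.
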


    The conditions of Proposition \ref{prop-eq} on $\mathcal{M}_2$, including that it be a QMM, can be phrased in $O((\abs{\Sigma}+\abs{\Gamma})(n_1+n_2)^4)$ polynomials of degree $d = 3$ in $O((\abs{\Sigma}+\abs{\Gamma})(n_1+n_2)^4)$ variables, better than $O((\abs{\Sigma}+\abs{\Gamma})(n_1+n_2)^6)$ polynomials of degree $d = 3$ in $O((\abs{\Sigma}+\abs{\Gamma})(n_1+n_2)^6)$ variables given by Proposition \ref{prop-eq-k} when $k = O((n_1+n_2)^2)$. By Theorem \ref{thm-existence}, Problem \ref{prob1} is solvable in \textbf{PSPACE}.

\begin{proof}
``$\Longrightarrow$'' If $(\mathcal{M}_1, \rho_1) \sim (\mathcal{M}_2, \rho_2)$, then
\[
    \operatorname{tr}((\rho_1)^{\mathcal{M}_1}_{b|a,\mathcal{S}}) = \operatorname{tr}((\rho_2)^{\mathcal{M}_2}_{b|a,\mathcal{S}})
\]
for every $a \in \Sigma^*$, $\mathcal{S} \in \mathfrak{S}$ and $b \in \Gamma^{\abs{\mathcal{S}}}$. Let $\rho = \rho_1 \oplus \rho_2$, then
\[
    \rho^{\mathcal{M}_1 \oplus \mathcal{M}_2}_{b|a,\mathcal{S}} = (\rho_1)^{\mathcal{M}_1}_{b|a,\mathcal{S}} \oplus (\rho_2)^{\mathcal{M}_2}_{b|a,\mathcal{S}}.
\]
Let $n = n_1^2 + n_2^2$,
let $\rho^{(1)}, \rho^{(2)}, \dots, \rho^{(n)} \in \mathfrak{D}(\rho, n-1)$ be the basis of $\operatorname{span} \mathfrak{D}(\rho, n-1)$ with $\rho^{(1)} = \rho$, and let $\rho^{(i)} = \rho^{(i)}_1 \oplus \rho^{(i)}_2$ and $$f_i = \begin{bmatrix} \vec \rho^{(i)}_1 \\ \vec \rho^{(i)}_2 \end{bmatrix}.$$ Note that $\operatorname{tr}(\rho^{(i)}_1) = \operatorname{tr}(\rho^{(i)}_2)$. Define $ F = \begin{bmatrix}
        f_1 & f_2 & \dots & f_n
    \end{bmatrix}. $
Note that $$\eta^\dag f_i = \eta_1 \vec \rho^{(i)}_1 - \eta_2 \vec \rho^{(i)}_2 = \operatorname{tr}(\rho^{(i)}_1) - \operatorname{tr}(\rho^{(i)}_2) = 0.$$ We conclude that $\eta^\dag F = 0$.

For every $\sigma \in \Sigma$, for every $\rho^{(j)} = \rho_{b|a,\mathcal{S}} \in \mathfrak{D}(\rho, n-1)$,
by Proposition \ref{col-span}, $\rho_{b|a \sigma, \mathcal{S}} \in \operatorname{span} \mathfrak{D}(\rho, n-1)$, then
\[
    \rho_{b|a \sigma, \mathcal{S}} = \sum_{i=1}^n \alpha_{ij} \rho^{(i)}
\]
for some coefficients $\alpha_{ij}$, i.e.
\begin{align*}
    & (\rho_1)_{b|a \sigma, \mathcal{S}} = U^{(1)}_\sigma \rho^{(j)}_1 (U^{(1)}_\sigma)^\dag = \sum_{i=1}^n \alpha_{ij} \rho^{(i)}_1, \\
    & (\rho_2)_{b|a \sigma, \mathcal{S}} = U^{(2)}_\sigma \rho^{(j)}_2 (U^{(2)}_\sigma)^\dag = \sum_{i=1}^n \alpha_{ij} \rho^{(i)}_2.
\end{align*}
Then
\begin{align*}
    & \hat U^{(1)}_\sigma \vec \rho^{(j)}_1 = \sum_{i=1}^n \alpha_{ij} \vec \rho^{(i)}_1, \\
    & \hat U^{(2)}_\sigma \vec \rho^{(j)}_2 = \sum_{i=1}^n \alpha_{ij} \vec \rho^{(i)}_2.
\end{align*}
That is,
\[
    \begin{bmatrix}
        \hat U^{(1)}_\sigma & 0 \\
        0 & \hat U^{(2)}_\sigma
    \end{bmatrix} f_j = F \begin{bmatrix}
        \alpha_{1j} \\
        \alpha_{2j} \\
        \vdots \\
        \alpha_{nj}
    \end{bmatrix},
\]
and we obtain that
\[
    \begin{bmatrix}
        \hat U^{(1)}_\sigma & 0 \\
        0 & \hat U^{(2)}_\sigma
    \end{bmatrix} F = F M_\sigma,
\]
where $M_\sigma = [\alpha_{ij}]$.

For every $m \in \Gamma$, similarly, we have
\[
    \begin{bmatrix}
        \hat M^{(1)}_m & 0 \\
        0 & \hat M^{(2)}_m
    \end{bmatrix} F = F M_m
\]
for some $M_m$.

{\vskip 4pt}

``$\Longleftarrow$''. For every $a \in \Sigma^*$, $\mathcal{S} \in \mathfrak{S}_a$ and $b \in \Gamma^{\abs{\mathcal{S}}}$,
\[
    \begin{bmatrix}
        \hat V_{b|a,\mathcal{S}}^{(1)} & 0 \\
        0 & \hat V_{b|a,\mathcal{S}}^{(2)}
    \end{bmatrix} F = F M_{b|a,\mathcal{S}},
\]
where $$M_{b|a,\mathcal{S}} = M_{a_1} M_{b_1} M_{a_2} M_{b_2} \dots M_{a_{\abs{\mathcal{S}}}} M_{b_{\abs{\mathcal{S}}}} M_{a_{\abs{\mathcal{S}}+1}}.$$
Then
\begin{align*}
    \operatorname{tr}((\rho_1)_{b|a,\mathcal{S}}) - \operatorname{tr}((\rho_2)_{b|a,\mathcal{S}})
    & = \eta_1^\dag (\vec \rho_1)_{b|a,\mathcal{S}} - \eta_2^\dag (\vec \rho_2)_{b|a,\mathcal{S}} \\
    & = \begin{bmatrix}
        \eta_1 \\
        -\eta_2
    \end{bmatrix}^\dag \begin{bmatrix}
        \hat V_{b|a,\mathcal{S}}^{(1)} & 0 \\
        0 & \hat V_{b|a,\mathcal{S}}^{(2)}
    \end{bmatrix} \begin{bmatrix}
        \vec \rho_1 \\
        \vec \rho_2
    \end{bmatrix} \\
    & = \eta^\dag \begin{bmatrix}
        \hat V_{b|a,\mathcal{S}}^{(1)} & 0 \\
        0 & \hat V_{b|a,\mathcal{S}}^{(2)}
    \end{bmatrix} F e_1 \\
    & = \eta^\dag F M_{b|a,\mathcal{S}} e_1 \\
    & = 0,
\end{align*}
where $e_1 = (1, 0, 0, \dots, 0)^T$.
\end{proof}


\begin{thebibliography}{99}

\bibitem{urlbenchmark} https://github.com/wangqs13/qmm-benchmark


\bibitem{Eb13} E. Ardeshir-Larijani, S. J. Gay and R. Nagarajan. Equivalence checking of quantum protocols. In: \textit{Proceedings of TACAS 2013}, pp. 478--492, 2013.

\bibitem{Eb14} E. Ardeshir-Larijani, S. J. Gay and R. Nagarajan. Verification of concurrent quantum protocols by equivalence checking. In: \textit{Proceedings of TACAS 2014}, pp. 500--514, 2014.

\bibitem{Ban10} A. Banerjee and A. Pathak. Probabilistic model of fault detection in quantum circuits. \textit{Quantum Quenching, Annealing and Computation. Springer Lecture Notes in Physics}, 802: 297--304 (2010)

\bibitem{Bas06} S. Basu, R. Pollack and M. Roy. \textit{Algorithms in Real Algebraic Geometry}, 2nd Edition, Springer, 2006.

\bibitem{Ber18} D. Bera. Detection and diagnosis of single faults in quantum circuits. \textit{IEEE Transactions on Computer-Aided Design of Integrated Circuits and Systems}, 37(3): 587--600 (2018)

\bibitem{Ber03} A. Bertoni, C. Mereghetti and B. Palano. Quantum computing: 1-way quantum automata. In: \textit{Proceedings of the 9th International Conference on Developments in Language Theory (DLT). Lecture Notes in Computer Science}, 2710: 1--20 (2003)

\bibitem{Bia10} J. D. Biamonte, J. S. Allen and M. A. Perkowski. Fault models for quantum mechanical switching networks. \textit{Journal of Electronic Testing}, 26(5): 499--511 (2010)

\bibitem{Bro02} A. Brodsky and N. Pippenger. Characterizations of 1-way quantum finite automata. \textit{SIAM
Journal on Computing}, 31(5): 1456--1478 (2002)

\bibitem{Can88} J. Canny. Some algebraic and geometric computations in PSPACE. In: \textit{Proceedings
of the 20th annual ACM Symposium on Theory of Computing}, ACM, pp. 460--469, 1988.


\bibitem{Gud00} S. Gudder. Quantum computers. \textit{International Journal of Theoretical Physics}, 39(9): 2151--2177 (2000)

\bibitem{Hung04} W. N. Hung, X. Song, G. Yang, J. Yang and M. Perkowski. Quantum logic synthesis by symbolic reachability analysis. In: \textit{Proceedings of the 41st annual Design Automation Conference (DAC)}, pp. 838--841, 2004.

\bibitem{Jha} Z. Kohavi, N. K. Jha. \textit{Switching and Finite Automata Theory}, 3rd Edition, Cambridge University Press, 2010.

\bibitem{Kie11} S. Kiefer, A. S. Murawski, J. Ouaknine, B. Wachter and J. Worrell. Language equivalence for probabilistic automata. In: \textit{Gopalakrishnan G., Qadeer S. (eds) Computer Aided Verification. CAV 2011. Lecture Notes in Computer Science}, vol 6806, pp. 526--540. Springer, Heidelberg (2011)

\bibitem{Kie14} S. Kiefer and B. Wachter. Stability and complexity of minimising probabilistic automata. In: \textit{International Colloquium on Automata, Languages, and Programming}, pp. 268--279, 2014.

\bibitem{Kon97} A. Kondacs and J. Watrous. On the power of quantum finite state automata. In: \textit{Proceedings of the 38th Annual Symposium on Foundations of Computer Science}, IEEE, pp. 66--75, 1997.

\bibitem{Kos01} T. Koshiba. Polynomial-time algorithms for the equivalence for one-way quantum finite automata. In: \textit{Proceedings of the 12th International Symposium on Algorithms and Computation (ISAAC). Lecture Notes in Computer Science}, 2223: 268--278 (2001)

\bibitem{Lee94} D. Lee and M. Yannakakis. Testing finite-state machines: state identification and verification. \textit{IEEE Transactions on Computers}, 43(3): 306--320 (1994)

\bibitem{Lee96} D. Lee and M. Yannakakis. Principles and methods of testing finite state machines - a survey. In: \textit{Proceedings of the IEEE}, 84(8): 1090--1123 (1996)

\bibitem{Li06} L. Li and D. Qiu. A polynomial-time algorithm for the equivalence between quantum sequential machines. \textit{ArXiv}:quant-ph/0604085 (2006)

\bibitem{Li08} L. Li and D. Qiu. Determining the equivalence for 1-way quantum finite automata. \textit{Theoretical Computer Science}, 403(1): 42--51 (2008)

\bibitem{Li062} L. Li and D. Qiu. Determination of equivalence between quantum sequential machines. In: \textit{Theoretical Computer Science}, 358(1): 65--74 (2006)

\bibitem{Li12} L. Li, D. Qiu, X. Zou, L. Li, L. Wu and P. Mateus. Characterizations of one-way general quantum finite automata. \textit{Theoretical Computer Science}, 419: 73--91 (2012)

\bibitem{Mat12} P. Mateus, D. Qiu and L. Li: On the complexity of minimizing probabilistic and quantum automata. \textit{Information and Computation}, 218(9): 36--53 (2012)

\bibitem{Mea55} G. H. Mealy. A method for synthesizing sequential circuits. \textit{Bell System Technical Journal}, 34: 1045--1079 (1955)

\bibitem{Moo00} C. Moore and J. P. Crutchfield. Quantum automata and quantum grammars. \textit{Theoretical Computer Science}, 237(1): 275--306 (2000)

\bibitem{Pal12} A. Paler, I. Polian and J. P. Hayes. Detection and diagnosis of faulty quantum circuits. In: \textit{Proceedings of the 17th Asia South Pac. Design Automation Conference}, pp. 181--186, 2012.


\bibitem{Paz71} A. Paz. \textit{Introduction to Probabilistic Automata}, Academic Press, 1971.

\bibitem{Per05} M. Perkowski, J. Biamonte and M. Lukac. Test generation and fault localization for quantum circuits. In: \textit{Proceedings of the 35th IEEE International Symposium on Multiple-Valued Logic (ISMVL)}, pp. 62--68, 2005.

\bibitem{Qiu02} D. Qiu. Characterization of sequential quantum machines. \textit{International Journal of Theoretical Physics}, 41(5): 811--822 (2002)

\bibitem{Qiu11} D. Qiu, L. Li, X. Zou, P. Mateus and J. Gruska. Multi-letter quantum finite automata: Decidability of the equivalence and minimization of states. \textit{Acta Informatica}, 48(5-6): 271--290 (2011)

\bibitem{Ren88} J. Renegar. A faster PSPACE algorithm for deciding the existential theory of the reals. In: \textit{Proceedings of the 29th Annual Symposium on Foundations of Computer Science}, IEEE, pp. 291--295, 1988.

\bibitem{Ren92} J. Renegar. On the computational complexity and geometry of the first-order theory of the reals, Part I-III. \textit{Journal of Symbolic Computation}, 13(3): 255--352 (1992)

\bibitem{Sei12} J. Seiter, M. Soeken, R. Wille and R. Drechsler. Property checking of quantum circuits using quantum multiple-valued decision diagrams. In: \textit{International Workshop on Reversible Computation (RC)}, pp. 183--196, 2012.

\bibitem{Tze92} W. Tzeng. A polynomial-time algorithm for the equivalence of probabilistic automata. \textit{SIAM Journal on Computing}, 21(2): 216--227 (1992)

\bibitem{Via07} G. F. Viamontes, I. L. Markov and J. P. Hayes. Checking equivalence of quantum circuits and states. In: \textit{Proceedings of the 2007 IEEE/ACM International conference on Computer-Aided Design (ICCAD)}, pp. 69--74, 2007.

\bibitem{Wan18} Q. S. Wang, R. L. Li and M. S. Ying. Equivalence checking of sequential quantum circuits. \textit{IEEE Transactions on Computer-Aided Design of Integrated Circuits and Systems}, 41(9): 3143--3156, 2022.

\bibitem{Markov} S. Yamashita and I. L. Markov. Fast equivalence-checking for quantum circuits. In: \textit{Proceedings of the 2010 IEEE/ACM International Symposium on Nanoscale Architectures}, pp. 23-28, 2010.

\bibitem{Ying16} M. S. Ying. \textit{Foundations of Quantum Programming}, Morgan-Kaufmann, 2016.

\bibitem{YL14} M. S. Ying, Y. J. Li, N. K. Yu and Y. Feng. Model-checking linear-time properties of quantum systems. \textit{ACM Transactions on Computational Logic}, 15(3): 1--31 (2014)

\bibitem{YSG} S. G. Ying, Y. Feng, N. K. Yu and M. S. Ying. Reachability probabilities of quantum Markov chains. In: \textit{Proceedings of CONCUR 2013}, pp. 334--348, 2013.

\end{thebibliography}
\end{document}